\let\bf\undefined
\newtheorem{definition}{Definition}
\newtheorem{corollary}{Corollary}
\newtheorem{theorem}{Theorem}
\newtheorem{lemma}{Lemma}
\newtheorem{remark}{Remark}
\newtheorem{conjecture}{Conjecture}
\newtheorem{proposition}{Proposition}
\newtheorem{claim}{Claim}
\newif\ifcomment
\newcommand{\newinline}[1]{#1}
\newenvironment{new}{}
\newif\ifhighlightShortLongDiffs
\newif\ifversionShort
\newcommand{\versionShortLong}[2]{\ifversionShort \ifhighlightShortLongDiffs \textcolor{TUMRed}{Short version:} \fi #1\else\ifhighlightShortLongDiffs\textcolor{TUMRed}{Long version:}\fi #2 \ifhighlightShortLongDiffs \textcolor{TUMRed}{End of long version}\fi\fi}
\definecolor{TUMBlue}{RGB}{0,101,189} 
\definecolor{TUMBlueDark}{RGB}{0,82,147} 
\definecolor{TUMBlueLight}{RGB}{152,198,234} 
\definecolor{TUMBlueMedium}{RGB}{100,160,200} 
\definecolor{TUMIvory}{RGB}{218,215,203} 
\definecolor{TUMGreen}{RGB}{162,173,0} 
\definecolor{TUMGray}{gray}{0.6} 
\definecolor{TUMOrange}{RGB}{227,114,34} 
\definecolor{TUMGreenDark}{RGB}{0,124,48} 
\definecolor{TUMRed}{RGB}{196,7,27} 
\definecolor{TUMPink}{RGB}{181,92,165}
\definecolor{TUMPinkDark}{RGB}{155,70,141}
\definecolor{TUMPink1}{RGB}{198, 128, 187}
\definecolor{TUMPink2}{RGB}{214, 164, 206}
\definecolor{TUMPink3}{RGB}{230, 199, 225}
\definecolor{TUMPink4}{RGB}{246, 234, 244}
\pgfplotsset{compat=newest}
\newcommand*\mymatrixbraceleft[4]{
    \draw[mymatrixbrace, decoration={raise=4pt}] (#1.west|-#1-#3-1.south west) -- node[left=4pt] {#4} (#1.west|-#1-#2-1.north west);
}
\newcommand*\mymatrixbraceright[4]{
    \draw[mymatrixbrace,decoration={raise=4pt}] (#1.east|-#1-#2-1.north east) -- node[right=7pt] {#4} (#1.east|-#1-#3-1.south east);
}
\newcommand*\mymatrixbracetop[4]{
    \draw[mymatrixbrace] (#1.north-|#1-1-#2.north west) -- node[above=2pt] {#4} (#1.north-|#1-1-#3.north east);
}
\newcommand*\mymatrixbracebottom[4]{
    \draw[mymatrixbrace, decoration={mirror, raise=1pt}] (#1.south-|#1-1-#2.south west) -- node[below=2pt] {#4} (#1.south-|#1-1-#3.north east);
}
\def\ve#1{{\mathchoice{\mbox{\boldmath$\displaystyle #1$}}%
              {\mbox{\boldmath$\textstyle #1$}}%
              {\mbox{\boldmath$\scriptstyle #1$}}%
              {\mbox{\boldmath$\scriptscriptstyle #1$}}}}
\newcommand{\0}{\ve{0}}
\newcommand{\F}{\ensuremath{\mathbb{F}}}
\newcommand{\Fq}{\ensuremath{\mathbb{F}_q}}
\newcommand{\Fqm}{\ensuremath{\mathbb{F}_{q^m}}}
\newcommand{\bbN}{\ensuremath{\mathbb{N}}}
\newcommand{\cC}{\ensuremath{\mathcal{C}}}
\newcommand{\cE}{\ensuremath{\mathcal{E}}}
\newcommand{\cG}{\ensuremath{\mathcal{G}}}
\newcommand{\cJ}{\ensuremath{\mathcal{J}}}
\newcommand{\cL}{\ensuremath{\mathcal{L}}}
\newcommand{\cM}{\ensuremath{\mathcal{M}}}
\newcommand{\cS}{\ensuremath{\mathcal{S}}}
\newcommand{\cV}{\ensuremath{\mathcal{V}}}
\newcommand{\cX}{\ensuremath{\mathcal{X}}}
\newcommand{\cZ}{\ensuremath{\mathcal{Z}}}
\newcommand{\ba}{\ve{a}}
\newcommand{\bb}{\ve{b}}
\newcommand{\bc}{\ve{c}}
\newcommand{\bf}{\ve{f}}
\newcommand{\bi}{\ve{i}}
\newcommand{\bm}{\ve{m}}
\newcommand{\bn}{\ve{n}}
\newcommand{\bu}{\ve{u}}
\newcommand{\bx}{\ve{x}}
\newcommand{\bA}{\ve{A}}
\newcommand{\bB}{\ve{B}}
\newcommand{\bC}{\ve{C}}
\newcommand{\bE}{\ve{E}}
\newcommand{\bG}{\ve{G}}
\newcommand{\bI}{\ve{I}}
\newcommand{\bM}{\ve{M}}
\newcommand{\bN}{\ve{N}}
\newcommand{\bP}{\ve{P}}
\newcommand{\bS}{\ve{S}}
\newcommand{\bT}{\ve{T}}
\newcommand{\bV}{\ve{V}}
\newcommand{\bX}{\ve{X}}
\newcommand{\bY}{\ve{Y}}
\newcommand{\G}{\ve{G}}
\newcommand{\bbeta}{\ve{\beta}}
\newcommand{\sfc}{\mathsf{c}}
\newcommand{\coloneqq}{\coloneq}
\newcommand{\myspan}[1]{\ensuremath{\left\langle #1\right\rangle}}
\newcommand{\ceil}[1]{\left\lceil#1\right\rceil}
\newcommand{\floor}[1]{\left\lfloor#1\right\rfloor}
\DeclareMathOperator{\diag}{diag}
\DeclareMathOperator{\rank}{rank}
\DeclareMathOperator*{\gcrd}{gcrd}
\DeclareMathOperator*{\lclm}{lclm}
\DeclareMathOperator{\rk}{rank}
\newcommand{\rkq}{\rk_q}
\newcommand{\wtSR}[1]{\ensuremath{\mathrm{wt}_{\mathsf{SR},#1}}}
\newcommand{\wtH}{\ensuremath{\mathrm{wt}_\mathsf{H}}}
\newcommand{\wtR}{\ensuremath{\mathrm{wt}_\mathsf{R}}}
\newcommand{\dSR}[1]{\ensuremath{\mathrm{d}_{\mathsf{SR},#1}}}
\newcommand{\dH}{\ensuremath{\mathrm{d}_\mathsf{H}}}
\newcommand{\extbasis}[1]{\mathrm{ext}_{#1}}
\newcommand{\Msym}{\ve{m}}
\newcommand{\mincut}{w}
\newcommand{\GLRS}{\G^{\sf (LRS)}}
\newcommand{\GLRSi}[1]{\G^{\sf (LRS)}_{#1}}
\newcommand{\alphas}[1]{\alpha_1,\dots,\alpha_{#1}}
\newcommand{\betalt}{\beta_{l,t}}
\newcommand{\betal}[1]{\beta_{l,#1}}
\newcommand{\PindSet}{\Omega}
\newcommand{\locSet}{\cL}
\newcommand{\rootSet}{\cZ}
\newcommand{\zeroSet}{Z}
\newcommand{\colMulVec}{\bb}
\newcommand{\aut}{\theta}
\newcommand{\FrobautPolyt}[2]{\Frobaut^{#2}(#1)}
\newcommand{\Frobaut}{\sigma}
\newcommand{\SkewVar}{X}
\newcommand{\FrobPolys}{\Fqm[\SkewVar;\Frobaut]}
\newcommand{\SkewPolys}{\Fqm[\SkewVar;\aut]}
\newcommand{\FrobPolysn}{\mulVarRng[\SkewVar;\Frobaut]}
\newcommand{\rowpoly}{f}
\newcommand{\priEle}{\gamma}
\newcommand{\indMap}{\varphi}
\newcommand{\indBlock}{l}
\newcommand{\evapt}[1]{\hat{#1}}
\newcommand{\numRows}{s}
\newcommand{\subRows}{\nu}
\newcommand{\numMulPolyVar}{n}
\newcommand{\mulVarRng}{R_\numMulPolyVar}
\newcommand{\fZt}{\mathsf{f}}
\newcommand{\tfzt}{\tau}
\newcommand{\Skewnk}{\cS_{n,k}}
\newcommand{\stepone}{\emph{(Step 1) }}
\newcommand{\steptwo}{\emph{(Step 2) }}
\newcommand{\matProne}{\emph{(I)}} 
\newcommand{\matPrtwo}{\emph{(II)}}
\newcommand{\up}{u}
\newcommand{\upc}{u}
\newcommand{\vp}{v}
\newcommand{\vpc}{v}
\title{Linearized Reed-Solomon Codes with Support-Constrained Generator Matrix and Applications in Multi-Source Network Coding}
\author{
    Hedongliang Liu,~\IEEEmembership{Student Member,~IEEE},
    Hengjia Wei,\\
    Antonia Wachter-Zeh,~\IEEEmembership{Senior Member,~IEEE},
    Moshe Schwartz,~\IEEEmembership{Senior Member,~IEEE}%
    \thanks{The work has been supported in part by the German Research  Foundation (DFG) with a German Israeli Project Cooperation (DIP) under grants no.~PE2398/1-1, KR3517/9-1, and the major key project of Peng Cheng Laboratory under grant PCL2023AS1-2. The material in this paper was presented in part at the 2023 Information Theory Workshop (ITW).}%
    \thanks{Hedongliang Liu and Antonia Wachter-Zeh are with the School of Computation, Information and Technology, Technical University of Munich, Munich 80333, Germany (e-mail: lia.liu@tum.de;  antonia.wachter-zeh@tum.de).}%
    \thanks{Hengjia Wei is with the Peng Cheng Laboratory, Shenzhen 518055, China, with the School of Mathematics and Statistics, Xi'an Jiaotong University, Xi'an 710049, China, and with the Pazhou Laboratory (Huangpu), Guangzhou 510555, China (e-mail: hjwei05@gmail.com).}%
    \thanks{Moshe Schwartz is with the School of Electrical and Computer Engineering, Ben-Gurion University of the Negev, Beer Sheva 8410501, Israel (on a leave of absence), and with the Department of Electrical and Computer Engineering, McMaster University, Hamilton L8S 4K1, ON, Canada (e-mail: schwartz.moshe@mcmaster.ca).}%
}
\begin{document}

\maketitle

\begin{abstract}
  Linearized Reed-Solomon (LRS) codes are evaluation codes based on skew polynomials. They achieve the Singleton bound in the sum-rank metric and therefore are known as maximum sum-rank distance (MSRD) codes.
  In this work, we give necessary and sufficient conditions for the existence of MSRD codes with a support-constrained generator matrix. The \newinline{conditions on the support constraints} are identical to those for MDS codes and MRD codes.
  The required field size for an $[n,k]_{q^m}$ LRS codes with support-constrained generator matrix is $q\geq \ell+1$ and $m\geq \max_{l\in[\ell]}\{k-1+\log_qk, n_l\}$, where $\ell$ is the number of blocks and $n_l$ is the size of the $l$-th block. The special cases of the result coincide with the known results for Reed-Solomon codes and Gabidulin codes.
    For the support constraints that do not satisfy the necessary conditions, we derive the maximum sum-rank distance of a code whose generator matrix fulfills the constraints. 
    Such a code can be constructed from a subcode of an LRS code with a sufficiently large field size.
    Moreover, as an application in network coding, the conditions can be used as constraints in an integer programming problem to design distributed LRS codes for a distributed multi-source network.
\end{abstract}

\begin{IEEEkeywords}
GM-MDS, sum-rank metric, support constraints, linearized Reed-Solomon codes, multi-source network coding
\end{IEEEkeywords}
\section{Introduction}

Designing error-correcting codes with support-constrained generator matrices is motivated by its application in weakly secure network coding for wireless cooperative data exchange \cite{yan2011weakly, song2012error, yan2014weakly, li2017cooperative}, where each node stores a subset of all messages and the nodes communicate via broadcast transmissions to disseminate the messages in the presence of an eavesdropper. It is closely related to designing codes with sparse and balanced generator matrices in wireless sensor networks \cite{dau2013balanced} and multiple access networks \cite{dau2014simple,halbawi2014distributed,halbawi2014distributedGab}, motivated by the balanced computation load during the encoding process while multiplying such a matrix \cite{halbawi2016balanced, halbawi2018sparse}.
From both, the theoretical and the practical point of view, the objective is to design codes with support-constrained generator matrix achieving the largest possible minimum distance.
\newinline{The support that we consider throughout this work is the Hamming support.}
In the Hamming metric, research focused on proving necessary and sufficient conditions for the existence of MDS codes fulfilling the support constraints. It was first conjectured in \cite{dau2014existence} (referred to as the \emph{GM--MDS conjecture}), further studied in \cite{heidarzadeh2017algebraic, yildiz2018further}, and finally proven by Yildiz and Hassibi \cite{yildiz2018optimum} and independently by Lovett \cite{lovett2018mds}.
\begin{theorem}[GM--MDS Condition {\cite{yildiz2018optimum,lovett2018mds}}]
  \label{thm:yildizMDScondition}
  Let $\zeroSet_1,\dots,\zeroSet_k\subseteq \{1,\dots, n\}$ be such that for any nonempty $\Omega\subseteq \{1,\dots, k\}$,
\begin{align}\label{eq:GM-MDScondition}
\left|\bigcap_{i\in\Omega}\zeroSet_i\right|+|\Omega|\leq k\ .
\end{align}
Then, for any prime power $q\geq n+k-1$, there exists an $[n,k]_{q}$ generalized Reed--Solomon (GRS) code with a generator matrix $\bG\in\Fq^{k\times n}$ fulfilling the support constraint:
\begin{align}\label{eq:zeroConstraints}
    \bG_{ij}=0\ ,\quad \forall i\in\{1,\dots,k\},\ \forall j \in \zeroSet_i\ .
\end{align}
\new{Moreover, if an MDS code has a generator matrix fulfilling the support constraint \eqref{eq:zeroConstraints}, then the sets $\zeroSet_i$'s satisfy \eqref{eq:GM-MDScondition}.}
\end{theorem}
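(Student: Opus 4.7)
The plan is to separate the easy converse from the harder forward direction. For the ``moreover'' part, I would argue by contradiction: suppose $|\bigcap_{i\in\Omega}\zeroSet_i|+|\Omega|\geq k+1$ for some nonempty $\Omega\subseteq[k]$. I choose a $k\times k$ submatrix of $\bG$ by picking $k-|\Omega|+1$ columns from $\bigcap_{i\in\Omega}\zeroSet_i$ together with any $|\Omega|-1$ other columns. Each row indexed by $\Omega$ then has at most $|\Omega|-1$ nonzero entries in this submatrix, so these $|\Omega|$ rows span a subspace of dimension at most $|\Omega|-1$ and are linearly dependent. The $k\times k$ submatrix is therefore singular, contradicting the MDS property.

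For the forward direction, I would parametrize an $[n,k]_q$ GRS code by $n$ distinct evaluation points $\alpha_1,\dots,\alpha_n\in\Fq$ together with nonzero column multipliers $v_1,\dots,v_n$. Each row of a generator matrix has the form $(v_j p_i(\alpha_j))_{j=1}^n$ for some polynomial $p_i$ with $\deg p_i\leq k-1$. The zero-pattern constraint $\bG_{ij}=0$ for $j\in\zeroSet_i$ forces $p_i(x)$ to be divisible by $\prod_{j\in\zeroSet_i}(x-\alpha_j)$, so I may write $p_i(x)=\prod_{j\in\zeroSet_i}(x-\alpha_j)\cdot q_i(x)$ with $\deg q_i\leq k-1-|\zeroSet_i|$. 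The MDS property is equivalent to the non-vanishing of every $k\times k$ minor of the generator, i.e., $\det\bigl(p_i(\alpha_j)\bigr)_{i\in[k],\,j\in S}\neq 0$ for every $k$-subset $S\subseteq[n]$.

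The key step (due to Yildiz-Hassibi and Lovett) is to view the product of all $\binom{n}{k}$ such determinants as a polynomial in the evaluation points $\alpha_j$ and the free coefficients of the $q_i$, and prove that this polynomial is not identically zero by exhibiting a particular monomial whose coefficient is nonzero. The existence of this monomial is established by a Hall-type matching argument on the family $\{\zeroSet_i\}_{i\in[k]}$, in which the hypothesis $|\bigcap_{i\in\Omega}\zeroSet_i|+|\Omega|\leq k$ is exactly the deficiency condition that guarantees the matching exists. Once non-vanishing is shown, a degree count combined with the Combinatorial Nullstellensatz (or Schwartz-Zippel) provides an assignment of the $\alpha_j$ and the coefficients of the $q_i$ in $\Fq$ making every minor simultaneously nonzero, provided $q$ exceeds the per-variable degree; tracking this degree carefully yields the stated bound $q\geq n+k-1$. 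The main obstacle is precisely the Hall-type matching step: naively the determinants share variables across all $k$-subsets $S$, so producing a single surviving monomial requires a delicate inductive construction that simultaneously controls the choice of roots assigned to each row.
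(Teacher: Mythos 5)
First, a remark on scope: the paper does not prove this theorem at all --- it is quoted as known background from Yildiz--Hassibi and Lovett, and the paper's own work is the analogous statement for linearized Reed--Solomon codes (\cref{thm:fieldSizeFromGab} via \cref{claim}). So the only meaningful comparison is against the strategy the paper actually executes in the LRS setting, which is the commutative specialization $\theta=\mathrm{Id}$ of the strategy you sketch. Your necessity argument is correct and complete: if $\left|\bigcap_{i\in\Omega}\zeroSet_i\right|+|\Omega|\geq k+1$, the $k\times k$ submatrix you build has $|\Omega|$ rows supported on a common set of only $|\Omega|-1$ columns, hence is singular, contradicting the fact that every $k$ columns of a generator matrix of an MDS code are independent. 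For sufficiency, however, there is a genuine gap: the entire content of the theorem is the non-vanishing of $\det(T)$ (equivalently, the existence of your ``surviving monomial''), and you leave exactly this step as a black box (``a delicate inductive construction''). This is the step the paper devotes \cref{sec:main_result} and \cref{sec:proofMainResult} to in the skew setting --- \cref{thm:sufficientCond}, proved by induction on $(k,s,n)$ with an extensive case analysis, not by a direct Hall-type matching --- so as written your proposal is an outline of the known proof rather than a proof.

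A second, more technical problem is the degree count. You propose to apply the Combinatorial Nullstellensatz to the product of all $\binom{n}{k}$ maximal minors. Taken literally, the degree of that product in a single $\alpha_j$ is on the order of $k\binom{n-1}{k-1}$ (each $\alpha_j$ occurs in $\binom{n-1}{k-1}$ minors as an evaluation point and in many more through the coefficients of the $p_i$), so this route cannot yield $q\geq n+k-1$. The bound only follows after observing that each minor factors as $\det(T)$ times a Vandermonde determinant on the chosen columns, so that one need only enforce $\det(T)\neq 0$, distinctness of the $\alpha_j$, and $v_j\neq 0$; the polynomial actually fed to the Nullstellensatz is $\det(T)\cdot\prod_{i<j}(\alpha_i-\alpha_j)$, whose degree in each variable is at most $(k-1)+(n-1)=n+k-2$. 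This factorization is precisely the decomposition $P=P_{\bT}\cdot\prod_l P_{\bM_l}$ in \eqref{eq:defPl}, and your write-up should make it explicit.
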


Yildiz and Hassibi adapted the approach to Gabidulin codes in \cite{yildiz2019gabidulin} and derived the following GM--MRD condition.
\begin{theorem}[GM--MRD Condition {\cite[Theorem 1]{yildiz2019gabidulin}}]
  Let $\zeroSet_1,\dots,\zeroSet_k\subseteq \{1,\dots,n\}$ fulfill \eqref{eq:GM-MDScondition} for any nonempty $\Omega\subseteq \{1,\dots, k\}$.
Then, for any prime power $q$ and integer $m\geq \max \{n, k-1+\log_q k\}$, there exists an $[n,k]_{q^m}$ Gabidulin code with a generator matrix $\bG\in\Fqm^{k\times n}$ fulfilling \eqref{eq:zeroConstraints}.
\end{theorem}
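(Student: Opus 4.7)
The plan is to build an $[n,k]_{q^m}$ Gabidulin code whose generator matrix $\bG$ satisfies \eqref{eq:zeroConstraints} by designing each row as a linearized polynomial that is forced to vanish on the prescribed evaluation points, and then tuning the remaining free parameters to preserve the MRD property. First I would fix $\Fq$-linearly independent elements $\alpha_1,\dots,\alpha_n\in\Fqm$, which is possible precisely when $m\geq n$. For each row index $i\in\{1,\dots,k\}$, let $V_i=\spanset_{\Fq}\{\alpha_j:j\in\zeroSet_i\}$ and let $g_i(x)=\prod_{v\in V_i}(x-v)$ be the annihilator $q$-polynomial of $V_i$; it is linearized of $q$-degree exactly $|\zeroSet_i|$ and vanishes on $V_i$. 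I then parameterize the $i$-th row of $\bG$ by the linearized polynomial
\[
  f_i(x) \;=\; h_i\bigl(g_i(x)\bigr), \qquad h_i(x)\;=\;\sum_{a=0}^{k-1-|\zeroSet_i|}\eta_{i,a}\,x^{q^a},
\]
where the $\eta_{i,a}\in\Fqm$ are free parameters. Then $f_i$ has $q$-degree $\leq k-1$ and automatically vanishes on $\{\alpha_j:j\in\zeroSet_i\}$, so $\bG=(f_i(\alpha_j))_{i,j}$ fulfills \eqref{eq:zeroConstraints} for every choice of the $\eta_{i,a}$'s; moreover, every linearized polynomial of $q$-degree $<k$ vanishing on $V_i$ is of this form.

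The key reduction is the following. If $f_1,\dots,f_k$ are $\Fqm$-linearly independent in the $k$-dimensional $\Fqm$-space of linearized polynomials of $q$-degree $<k$, then they form a basis of that space, so the row span of $\bG$ coincides with the usual Gabidulin code with evaluation points $\alpha_1,\dots,\alpha_n$; this code is MRD of minimum rank distance $n-k+1$ because a nonzero linearized polynomial of $q$-degree $<k$ has an $\Fq$-kernel of dimension at most $k-1$. Thus the whole problem reduces to selecting $\eta_{i,a}\in\Fqm$ so that the polynomial $D(\{\eta_{i,a}\}):=\det\bC$ is nonzero, where $\bC$ is the $k\times k$ matrix whose $i$-th row lists the coefficients of $f_i$ in the basis $\{x,x^q,\dots,x^{q^{k-1}}\}$.

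The combinatorial heart of the proof, and the main obstacle, is to show that $D$ is not identically zero precisely under hypothesis \eqref{eq:GM-MDScondition}. I would establish this by identifying a term in the Leibniz expansion of $\det\bC$ whose monomial cannot be cancelled. Following the approach of \cite{yildiz2018optimum}, I would use the order-statistics reformulation of \eqref{eq:GM-MDScondition} to obtain a system of distinct representatives linking each row $i$ to a distinct ``leading exponent'' in $\{0,\dots,k-1\}$, and translate this SDR into an explicit monomial ansatz (for instance $h_i(x)=x^{q^{a_i}}$ together with a compatible choice of the $\alpha_j$'s) under which $\det\bC$ reduces, after a row permutation, to a triangular product of Frobenius-twisted leading coefficients of the $g_i$'s. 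Once $D\not\equiv 0$ is in hand, a Schwartz--Zippel argument over $\Fqm$ applied to $D$ yields concrete $\eta_{i,a}$ making $D\neq 0$; a careful degree count, accounting for the $q^a$-power Frobenius twists introduced by composition with $g_i$, leads to the sufficient field-size bound $m\geq k-1+\log_q k$, which combined with $m\geq n$ reproduces the bound in the statement.
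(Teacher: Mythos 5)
Your overall skeleton (annihilator polynomials per row, reduction to non-vanishing of a coefficient determinant, polynomial method to instantiate) is the same strategy the paper uses for the $\ell=1$ case of \cref{thm:fieldSizeFromGab}, but two of your steps have genuine gaps. The decisive one is your choice of free variables. You fix $\Fq$-linearly independent $\alpha_1,\dots,\alpha_n$ once and for all and then treat only the coefficients $\eta_{i,a}$ of $h_i$ as indeterminates. First, for a \emph{fixed} linearly independent choice of the $\alpha_j$'s it is not established that $D=\det\bC$ is a nonzero polynomial in the $\eta_{i,a}$'s: that statement is equivalent to saying that one can pick linearly independent $f_i$ in the subspaces $W_i$ of linearized polynomials vanishing on $V_i$, and whether $\dim\sum_{i\in\Omega}W_i$ attains its combinatorial upper bound $k-\left|\bigcap_{i\in\Omega}\zeroSet_i\right|$ for those particular $\alpha_j$'s is precisely the hard content of the theorem — indeed, if it held for an arbitrary linearly independent choice, you would have proved the paper's open \cref{cor:smaller_field_size_general} (which asserts $m\geq n$ suffices). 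Second, your field-size bookkeeping cannot work as described: each entry of $\bC$ is $\Fqm$-linear in the $\eta_{i,a}$'s, so $D$ is multilinear in them and the Combinatorial Nullstellensatz over the $\eta$'s would only require $q^m\geq 2$; the bound $m\geq k-1+\log_q k$ simply cannot emerge from a degree count in the $\eta$'s. In the paper the roles are reversed: the zero sets are first padded to $|\zeroSet_i|=k-1$, so each row polynomial is the (essentially unique) minimal polynomial of its root set and there are no free $\eta$'s at all; the evaluation points/column multipliers are the indeterminates, $\det\bT$ is a polynomial of degree at most $(k-1)(q-1)q^{k-2}$ in each of them, and it is this degree, together with the $q^{n-1}$ from the Moore determinant, that produces $m\geq\max\{n,\,k-1+\log_q k\}$ via \cref{lem:nullstellensatz}.

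The second gap is the combinatorial heart itself. You defer "$D\not\equiv 0$" to "identifying a non-cancelling monomial via an SDR, following \cite{yildiz2018optimum}". A Hall-type system of distinct representatives extracted from \eqref{eq:GM-MDScondition} is not by itself sufficient — the condition resisted proof for several years precisely because the naive matching/monomial arguments fail in general. The paper's proof of this step (\cref{claim}, generalized as \cref{thm:sufficientCond}) is a multi-case induction on the triple $(k,s,n)$ carried out in a skew polynomial ring, where divisibility, $\gcrd$ and $\lclm$ replace ordinary gcd/lcm and coefficients pick up Frobenius twists under composition; none of this adaptation appears in your plan. To repair the proposal you should (i) pad the $\zeroSet_i$ to size $k-1$ so the row polynomials are forced, (ii) make the evaluation points the variables of the determinant, and (iii) supply (or correctly specialize) the inductive argument for the non-vanishing of that determinant.
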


\subsection{Related Work on Support-Constrained Generator Matrices}

Other variants of MDS codes with \newinline{a} support-constrained generator \newinline{matrix} have been studied. Apart from the condition in \eqref{eq:GM-MDScondition}, the work in \cite{greaves2019reed} considers the following two special cases of conditions on $\zeroSet_i$'s:
\begin{enumerate}
\item $\left|\bigcap_{i=1}^s\zeroSet_i\right|=k-s$ for all $s=1,\dots, k$. Note that when $s=k-1$, it is required that $\left|\bigcap_{i=1}^{k-1}\zeroSet_i\right|=1$, which means that there is at least one column of the generator matrix $\bG$ containing $k-1$ zeros. For this case an $[n,k]_q$ RS code generated by $\bG$ exists if $q\geq n$.
\item $\left|\zeroSet_i\right|\leq i-1$ for all $i=1,\dots,k$. Note that when $i\leq k-1$, $|\zeroSet_i|\leq k-2$, which implies less zeros are allowed in $\bG$ than \eqref{eq:GM-MDScondition}. For this case an $[n,k]_q$ RS code generated by $\bG$ exists if $q\geq n+1$.
\end{enumerate}

In \cite{song2018generalized,chen2022sparse}, the existence of an MDS code with a sparse and balanced generator matrix $\bG$ was studied. ``Sparse'' means that each row of $\bG$ has the maximum number of zeros, i.e., $k-1$ zeros, and ``balanced'' means that the number of zeros in any two columns differs by at most one, i.e., the weight of each column is either $\ceil{k(n-k+1)/n}$ or $\floor{k(n-k+1)/n}$.
It is shown in \cite{song2018generalized} that for any $1\leq k\leq n$, if $q\geq n+\ceil{k(k-1)/n}$, then there exists an $[n,k]_q$ generalized RS code with a sparse and balanced generator matrix.
More recently, \cite{chen2022sparse} showed that for any $k\geq 3$, $\frac{k}{n}\geq \frac{1}{2}$, and $q\geq n-1$, there exists an $[n,k]_q$ MDS code with a sparse and balanced generator matrix.

\begin{new}
It was shown in \cite{roth1985generator} that for any $q\geq n$, every $[n,k]_q$ GRS code has a systematic generator matrix $\bG_{\sf sys}=
\begin{pmatrix}
\bI|\bA
\end{pmatrix}
$ where $\bA$ is a (generalized) Cauchy matrix, and conversely, any matrix in the form of $
\begin{pmatrix}
\bI|\bA
\end{pmatrix}
$ where $\bA$ is a (generalized) Cauchy matrix generates a GRS code. This implies the following result.
\begin{proposition}\label{prop:systematric_field_size}
  Let $\zeroSet_1,\dots,\zeroSet_k\subseteq \{1,\dots, n\}$ be such that for any nonempty $\Omega\subseteq \{1,\dots, k\}$,
\begin{align}
  \label{eq:systematic_zero_constraints}
\left|\bigcap_{i\in\Omega}\zeroSet_i\right|+|\Omega|=k\ .
\end{align}
Then, for any prime power $q\geq n$,
there exists an $[n,k]_q$ GRS with a generator matrix $\bG$ fulfilling \eqref{eq:zeroConstraints}. 
\end{proposition}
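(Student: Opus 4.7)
The plan is to combine a short combinatorial analysis of the equality hypothesis \eqref{eq:systematic_zero_constraints} with the characterization of GRS generator matrices from \cite{roth1985generator} cited immediately above the proposition.

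First, I would specialize the hypothesis $\left|\bigcap_{i\in\Omega}\zeroSet_i\right|+|\Omega|=k$ to three types of $\Omega$: taking $\Omega=\{i\}$ gives $|\zeroSet_i|=k-1$ for every $i$; taking $\Omega=\{1,\dots,k\}$ gives $\bigcap_{i=1}^{k}\zeroSet_i=\emptyset$; and taking $\Omega=\{1,\dots,k\}\setminus\{j\}$ yields a unique element $p_j\in\bigcap_{i\neq j}\zeroSet_i$ for each $j\in\{1,\dots,k\}$.

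Next, I would show that $p_1,\dots,p_k$ are pairwise distinct. If $p_j=p_{j'}$ for some $j\neq j'$, then this common element would lie in $\zeroSet_i$ for every $i$ in $(\{1,\dots,k\}\setminus\{j\})\cup(\{1,\dots,k\}\setminus\{j'\})=\{1,\dots,k\}$, contradicting $\bigcap_i\zeroSet_i=\emptyset$. Since $\{p_{j'}:j'\neq i\}\subseteq\zeroSet_i$ and both sides have cardinality $k-1$, it follows that $\zeroSet_i=\{p_{j'}:j'\neq i\}$ exactly. Hence the support constraint \eqref{eq:zeroConstraints} is equivalent to requiring that the $k\times k$ submatrix of $\bG$ indexed by the columns $p_1,\dots,p_k$ be diagonal (with nonzero diagonal, so that $\bG$ has rank $k$), while the remaining $k\times(n-k)$ block is unconstrained.

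Finally, I would invoke the characterization of \cite{roth1985generator} quoted above the proposition: for any $q\geq n$, every matrix of the form $(\bI\mid\bA)$, with $\bA$ a generalized Cauchy matrix, is a generator matrix of an $[n,k]_q$ GRS code. Constructing such a GRS code with the $k$ systematic positions placed at $p_1,\dots,p_k$ (equivalently, permuting the columns of the standard form so that these positions come first) yields a generator matrix whose zeros are precisely the off-diagonal entries of the identity block, since Cauchy matrices have no zero entries. This matches exactly the prescribed pattern $\zeroSet_i=\{p_{j'}:j'\neq i\}$, and the field-size requirement $q\geq n$ is inherited directly from the cited characterization.

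The main obstacle I anticipate is the combinatorial step forcing the rigid ``diagonal'' structure on $\zeroSet_1,\dots,\zeroSet_k$; once that is in hand, the \cite{roth1985generator} characterization handles the algebraic part with no further work, and no intermediate facts about evaluation points or column multipliers need to be invoked.
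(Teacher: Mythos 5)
Your proof is correct and follows essentially the same route as the paper's: both derive from the equality hypothesis that the zero sets force a generalized permutation (diagonal) structure on a $k\times k$ submatrix, and both then rely on the Roth characterization of systematic GRS generator matrices $(\bI\mid\bA)$ with $\bA$ a Cauchy matrix. Your version is in fact slightly more explicit than the paper's, since you pin down $\zeroSet_i=\{p_{j'}:j'\neq i\}$ exactly and spell out the final invocation of the Roth result, which the paper leaves implicit.
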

\begin{proof}
  The goal is to show that for any $\bG\in\Fq^{k\times n}$ with zeros fulfilling \eqref{eq:systematic_zero_constraints}, there is $k\times k$ submatrix forming a generalized permutation matrix $\bP$
  (i.e., a permutation matrix where the ones can be replaced by any nonzero element in the field). Denote $C_{\Omega}\coloneqq \bigcap_{i\in\Omega}\zeroSet_i$. Note that $C_{\Omega}\subset[n]$ and $|C_{\Omega}|\leq k$.
  \begin{itemize}
      \item 
  For any $\Omega$ such that $|\Omega|=k$, $|C_{\Omega}|=0$. This means there is no zero column in $\bG$.
  \item 
  For any $\Omega$ such that $|\Omega|=k-1$, $|C_{\Omega}|=1$. This means there is at least one column in $\bG$ has $k-1$ zeros. Assume for some $\Omega_1\neq \Omega_2$ of size $k-1$ , $C_{\Omega_1}=C_{\Omega_2}$. Then for $\Omega=\Omega_1\cup\Omega_2$ with $|\Omega|=k$, $|C_{\Omega}|=1$, which contradicts the case above. Hence, we conclude that for every distinct $\Omega$ of size $k-1$, there is a distinct $C_{\Omega}$ of size one, i.e., a unique column with $k-1$ zeros. There are $\binom{k}{k-1}$ such $\Omega$, therefore, there are $k$ distinct columns with $k-1$ zeros.
  \end{itemize} 
  Now we will show that for any two columns with $k-1$ zeros, the nonzero elements are not in the same row.
  Assume the opposite, i.e., there are two columns with $k-1$ zeros, whose nonzero elements are in the same row (equivalently, whose zeros are in the same rows). Let $\Omega_0$ be the rows where the zeros of two columns are. Then $|\Omega_0|=k-1$ and $|C_{\Omega_0}|=2$, which contradicts~\eqref{eq:systematic_zero_constraints}.
\end{proof}

For codes in rank metric, it has been shown in \cite[Th.~4.4]{neri2020systematic} that
for any $n\leq m$, every $[n,k]_{q^m}$ Gabidulin code has a systematic generator matrix $\bG_{\sf sys}=
\begin{pmatrix}
\bI|\bA
\end{pmatrix}$ where $\bA$ is the $q$-Cauchy matrix, and conversely, any matrix in the form of $
\begin{pmatrix}
\bI|\bA
\end{pmatrix}
$, where $\bA$ is a $q$-Cauchy matrix, generates a Gabidulin code.
Similar to \cref{prop:systematric_field_size}, the following result can be derived for MRD codes.
\begin{proposition}\label{prop:systematric_field_size_Gab}
  Let $\zeroSet_1,\dots,\zeroSet_k\subseteq \{1,\dots, n\}$ fulfill \eqref{eq:systematic_zero_constraints} for any nonempty $\Omega\subseteq \{1,\dots, k\}$.
Then,
for any prime power $q$ and integer $m\geq n$, there exists an $[n,k]_{q^m}$ Gabidulin code with a generator matrix $\bG$ fulfilling \eqref{eq:zeroConstraints}.
\end{proposition}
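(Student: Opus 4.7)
The plan is to follow the same combinatorial argument as in \cref{prop:systematric_field_size} and then invoke the systematic-form characterization of Gabidulin codes from \cite[Th.~4.4]{neri2020systematic}. I would first observe that the combinatorial step in the proof of \cref{prop:systematric_field_size} is independent of both the underlying field and the metric: assuming \eqref{eq:systematic_zero_constraints}, any matrix $\bG$ whose zero pattern fulfills \eqref{eq:zeroConstraints} must contain exactly $k$ columns with $k-1$ zeros each, and the nonzero entries in these columns occupy pairwise distinct rows. Counting the total number of zeros as $\sum_{i=1}^k|\zeroSet_i|=k(k-1)$ shows that all zeros of $\bG$ are concentrated in these $k$ columns, so up to a column permutation $\bG$ takes the form $(\bP\mid\bD)$, where $\bP$ is a $k\times k$ generalized permutation matrix and $\bD$ is an unconstrained $k\times(n-k)$ matrix over $\Fqm$.

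Next, left-multiplying $(\bP\mid\bD)$ by $\bP^{-1}$ puts it in the systematic form $(\bI\mid\bA)$ for some $k\times(n-k)$ matrix $\bA$ over $\Fqm$. By \cite[Th.~4.4]{neri2020systematic}, if $\bA$ is chosen to be a $q$-Cauchy matrix, then $(\bI\mid\bA)$ generates an $[n,k]_{q^m}$ Gabidulin code. Constructing such a $q$-Cauchy matrix requires $n$ elements of $\Fqm$ that are linearly independent over $\Fq$, which is possible precisely because $m\geq n$ ensures that $\Fqm$ contains an $\Fq$-linearly independent set of size $n$. Fixing such a $q$-Cauchy matrix $\bA$, left-multiplying back by $\bP$, and undoing the column permutation produces a generator matrix that satisfies \eqref{eq:zeroConstraints} and generates a Gabidulin code on the reordered evaluation set.

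I do not expect a serious obstacle, since the combinatorial skeleton is already laid out in \cref{prop:systematric_field_size} and the structural theorem for Gabidulin codes is directly available. The one subtlety worth spelling out is the closure under column permutations: a Gabidulin code, viewed as a linear evaluation code on a tuple of $\Fq$-linearly independent evaluation points, remains a Gabidulin code after any permutation of those points, so the column permutation used to expose the systematic $q$-Cauchy structure does not take us outside the class of Gabidulin codes.
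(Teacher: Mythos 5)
Your proposal is correct and follows essentially the same route the paper intends: the paper only sketches this proof by pointing to the combinatorial argument of \cref{prop:systematric_field_size} together with the systematic $q$-Cauchy characterization of Gabidulin codes from \cite[Th.~4.4]{neri2020systematic}, which is exactly what you do. Your zero-counting step ($\sum_{i}|\zeroSet_i|=k(k-1)$, so all prescribed zeros sit in the $k$ distinguished columns) and the remark on closure under permutation of the evaluation points are details the paper leaves implicit, and they are both valid.
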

We conjecture that the smaller sufficient field sizes also holds for the general case. 
\begin{conjecture}\label{cor:smaller_field_size_general}
  Let $\zeroSet_1,\dots,\zeroSet_k\subseteq \{1,\dots, n\}$ fulfill \eqref{eq:GM-MDScondition} for any nonempty $\Omega\subseteq \{1,\dots, k\}$.
Then, for any prime power $q\geq n$,
there exists an $[n,k]_q$ GRS with a generator matrix fulfilling \eqref{eq:zeroConstraints}. 
Moreover, for any prime power $q$ and integer $m\geq n$, there exists an $[n,k]_{q^m}$ Gabidulin code with a generator matrix fulfilling \eqref{eq:zeroConstraints}.
\end{conjecture}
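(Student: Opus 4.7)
The plan is to frame the conjecture as the existence of a transversal basis in a family of subspaces, and then reduce when possible to the equality case handled by \Cref{prop:systematric_field_size,prop:systematric_field_size_Gab}. For the GRS part, assume $q \geq n$ and fix any $n$ distinct $\alpha_1,\ldots,\alpha_n \in \Fq$. Define
\[
V_i \coloneqq \{p \in \Fq[x]_{<k} : p(\alpha_j) = 0 \text{ for all } j \in \zeroSet_i\}, \qquad \dim V_i = k - |\zeroSet_i|.
\]
A generator matrix of the GRS code with evaluation $\alphas{n}$ satisfying \eqref{eq:zeroConstraints} corresponds exactly to a basis $\{f_1,\ldots,f_k\}$ of $\Fq[x]_{<k}$ with $f_i \in V_i$. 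By the linear-algebra version of Hall's theorem (Rado's theorem for subspaces), such a transversal basis exists if and only if $\dim \sum_{i \in \Omega} V_i \geq |\Omega|$ for every non-empty $\Omega \subseteq \{1,\ldots,k\}$.

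The first step is to establish this Hall-type condition from \eqref{eq:GM-MDScondition}. Dualizing with $W_i \coloneqq V_i^\perp = \spanset\{\ev_{\alpha_j} : j \in \zeroSet_i\}$ inside the $k$-dimensional dual of $\Fq[x]_{<k}$, the condition becomes $\dim \bigcap_{i \in \Omega} W_i \leq k - |\Omega|$. When $|\bigcup_{i \in \Omega} \zeroSet_i| \leq k$, the relevant evaluation functionals are linearly independent, so $\dim \bigcap_{i \in \Omega} W_i = |\bigcap_{i \in \Omega} \zeroSet_i| \leq k - |\Omega|$ by \eqref{eq:GM-MDScondition}, irrespective of the specific $\alpha_j$'s. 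When $|\bigcup_{i \in \Omega} \zeroSet_i| > k$, linear dependencies among the $\ev_{\alpha_j}$'s can enlarge the intersection, and the bound becomes a non-trivial polynomial condition on the $\alpha_j$'s. The second step is therefore to exhibit $n$ distinct $\alpha_j \in \Fq$ satisfying all such polynomial conditions simultaneously, with $q$ as small as $n$.

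The main obstacle is this second step: sharpening the sufficient field size down to $q \geq n$. A naive Schwartz--Zippel estimate on the determinants controlling the Hall-type conditions yields only a $q$ polynomial in $k$, well beyond the conjectured bound. Matching the tightness of $q \geq n$ in \Cref{prop:systematric_field_size} suggests that one must exploit the explicit systematic-Cauchy structure of GRS codes rather than treat the $\alpha_j$'s as fully generic. My proposed route is an inductive reduction: either extend some $\zeroSet_i$ by a well-chosen index in $\{1,\ldots,n\} \setminus \zeroSet_i$ while preserving \eqref{eq:GM-MDScondition}, driving the configuration toward \eqref{eq:systematic_zero_constraints} where \Cref{prop:systematric_field_size} directly applies; or, when such an extension is blocked (e.g., when some $|\zeroSet_i| = k-1$ but the pairwise intersections lie strictly below $k-2$), isolate $k$ pivot columns via a system of distinct representatives and reduce to an equality-type subproblem on those pivots, handling the remaining $n-k$ columns through a Cauchy-style construction. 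The Gabidulin half of the conjecture follows from the same framework with $\Fq[x]_{<k}$ replaced by the space of $q$-linearized polynomials of $q$-degree $<k$ over $\Fqm$, the $\alpha_j$'s replaced by $\Fq$-linearly independent elements of $\Fqm$ (requiring $m \geq n$), and \Cref{prop:systematric_field_size_Gab} in place of \Cref{prop:systematric_field_size}.
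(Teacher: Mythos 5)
First, note that the statement you are proving is posed in the paper as a \emph{conjecture}: the authors give no proof, and the only cases they settle are the equality pattern \eqref{eq:systematic_zero_constraints} via \cref{prop:systematric_field_size} and \cref{prop:systematric_field_size_Gab}. So the question is whether your proposal actually closes the gap, and it does not. Your first step is correct and standard: realizing the support-constrained generator matrix as a transversal basis $f_i\in V_i$ of $\Fq[x]_{<k}$, invoking Rado's theorem, dualizing to $W_i=\spanset\{\mathrm{ev}_{\alpha_j}: j\in\zeroSet_i\}$, and observing that \eqref{eq:GM-MDScondition} yields $\dim\bigcap_{i\in\Omega}W_i\le k-|\Omega|$ automatically whenever $\bigl|\bigcup_{i\in\Omega}\zeroSet_i\bigr|\le k$, for any distinct evaluation points. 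But this much is exactly the known GM--MDS framework, and the entire content of the conjecture lives in the regime $\bigl|\bigcup_{i\in\Omega}\zeroSet_i\bigr|>k$, where the Hall-type rank conditions become nontrivial polynomial constraints on the $\alpha_j$'s. You acknowledge that the generic (Combinatorial Nullstellensatz / Schwartz--Zippel) count only reproduces the known bound $q\ge n+k-1$, and you leave the reduction to $q\ge n$ as a ``proposed route'' rather than an argument.

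That proposed route has a concrete defect in its first branch: driving the pattern toward the equality case \eqref{eq:systematic_zero_constraints} by adding indices to the $\zeroSet_i$'s is impossible for typical inputs. After the standard padding to $|\zeroSet_i|=k-1$ for all $i$, no further zeros may be added to any row, yet \eqref{eq:systematic_zero_constraints} would additionally force $\bigl|\zeroSet_{i_1}\cap\zeroSet_{i_2}\bigr|=k-2$ for every pair, which a valid pattern need not satisfy and which cannot be created without enlarging some row beyond $k-1$ (breaking \eqref{eq:GM-MDScondition}). The fallback branch --- isolating pivot columns via an SDR and ``handling the remaining $n-k$ columns through a Cauchy-style construction'' --- is not specified in any verifiable way: it is precisely at this point that one would need a new idea exploiting the Cauchy structure of systematic GRS/Gabidulin matrices, and no such argument is given. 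The Gabidulin half inherits the same unproven core. In short, your write-up is a correct reformulation of the conjecture together with an accurate diagnosis of where the difficulty sits, but it does not constitute a proof, and the conjecture remains open.
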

\end{new}

\subsection{Related Work in the Sum-Rank Metric}
The sum-rank metric was first considered in coding for MIMO block-fading channels \cite{el2003design, lu2005unified} and the design of AM-PSK constellations \cite{lu2006constructions}. It was then explicitly introduced in multi-shot network coding \cite{nobrega2009multishot}.
The \emph{minimum sum-rank distance} is a direct analogue to \emph{transmit diversity gain} and the \emph{maximum sum-rank distance} property is a direct analogue to \emph{rate-diversity optimality}.
An explicit construction of optimal space-time codes from sum-rank metric codes over a finite field was first given in \cite{shehadeh2021space}. Additionally, sum-rank metric codes have been considered in applications such as network streaming \cite{mahmood2016convolutional}, distributed storage systems \cite{martinez2019universal,martinez2020locally,cai2021construction} and post-quantum secure code-based cryptosystems \cite{dalconzo2022codeequivalence,hormann2022security}.

Linearized Reed-Solomon (LRS) codes \cite{liu2015construction, martinez2018skew} are a class of evaluation codes based on skew polynomials \cite{ore1933theory}, achieving the Singleton bound in the sum-rank metric, and therefore known as maximum sum-rank distance (MSRD) codes. They have been applied in network coding \cite{martinez2019reliable}, locally repairable codes \cite{martinez2019universal} and code-based cryptography \cite{hormann2022security}.
  Extensive research has been done in recent years in the subareas of fundamental coding-theoretical properties of sum-rank metric codes, e.g., \cite{byrne2020fundamental, ott2021bounds, martinez2019theory,camps2022optimal, ott2022covering}, constructions of perfect/optimal/systematic sum-rank metric codes \cite{alfarano2022sum, martinez2020hamming, almeida2020systematic, martinez2022generalMSRD, caruso2022duals} and decoding algorithms for LRS codes, e.g.,
\cite{boucher2020algorithm,puchinger2021bounds,bartz2021fast,puchinger2022generic,Hoermann2022efficient,Hoermann2022errorerasure,jerkovits2022universal,bartz2022fast}.

\subsection{Our Contribution}
Motivated by practical interest in codes with support-constrained generator matrices and research on sum-rank metric codes (in particular, LRS codes), we investigate the existence of MSRD codes with a support-constrained generator matrices in this work.
As a result, we show that \eqref{eq:GM-MDScondition} is also the necessary and sufficient conditions for the existence of MSRD codes with a generator matrix fulfilling \eqref{eq:zeroConstraints}.
Further, our main result is that for any prime power $q\geq\ell+1$ and integer $m\geq \max_{l\in[\ell]}\{k-1+\log_qk, n_l\}$, there exists a linearized Reed-Solomon code in $\Fqm$ of length $n$, dimension $k$, and with $\ell$ blocks, such that its generator matrix satisfies the support constraints.
Moreover, if the conditions in \eqref{eq:GM-MDScondition} are not satisfied, we show that the largest possible sum-rank distance can be achieved by subcodes of LRS codes.
For a distributed multi-source network as introduced in \cite{halbawi2014distributed, halbawi2014distributedGab}, where a supported-constrained generator matrix is required for reliable communication against malicious (or failed) nodes in the network, we introduce a scheme to design a \emph{distributed LRS code} for any network instance. The scheme illustrates how the necessary and sufficient conditions derived in this work can be used as constraints in a linear programming problem to design the parameters of a desired distributed LRS code.

The rest of the paper is organized as following: In \cref{sec:preliminary} we fix the notations used throughout the paper and give the preliminaries on sum-rank metric, skew polynomials and LRS codes that are needed to prove the main results.
We present in \cref{sec:field_size} our main result on the necessary and sufficient conditions for the existence of MSRD codes with a support-constrained generator matrix as in \eqref{eq:zeroConstraints} and the sufficient field size of an LRS code fulfilling the support constraint.
  In \cref{sec:app_network_cod} we introduce the distributed multi-source network model and present a scheme to design a distributed LRS code for this network model, which uses the necessary and sufficient conditions derived in \cref{sec:field_size}.
\cref{sec:main_result} provides the proof of the main result, which is presented as a claim in \cref{sec:field_size}.


\section{Preliminaries}
\label{sec:preliminary}

\subsection{Notations}
Denote by $[a,b]$ the set of integers $\{a, a+1,\dots, b-1, b\}$, and let $[b]:=[1,b]$. 
Let $\bbN$ be the set of natural numbers and $\bbN_0\coloneqq\bbN\cup \{0\}$.
Denote by $\F_q$ the finite field of size $q$, and by $\Fqm$ its extension field of extension degree $m$.
  Fix a\newinline{n ordered} basis $\bbeta=(\beta_1,\beta_2,\dots,\beta_m)$ of $\Fqm$ over $\Fq$. We define a mapping from $\Fqm^n$ to $\Fq^{m\times n}$ by
  \begin{align}
    \extbasis{\bbeta} : \Fqm^n&\to\Fq^{m\times n} \nonumber\\
    \bc =(c_1,c_2,\dots, c_n) &\mapsto \bC=
                                \begin{pmatrix}
                                  c_{1,1} & c_{1,2} & \dots & c_{1,n}\\
                                  \vdots & \vdots & \ddots & \vdots \\
                                  c_{m,1} & c_{m,2} & \dots & c_{m,n}
                                \end{pmatrix} \label{eq:ext-map}
  \end{align}
  where $\bC$ is unique such that $c_j=\sum_{i=1}^{m}c_{i,j}\beta_i$, for all $j=1,\dots, n$.
  The $\Fq$-rank of $\bc$ is defined as $\rank_q(\bc)\coloneq \rank(\bC)$.

Given two vectors $\ba=(a_1,\dots,a_n),\bb=(b_1,\dots, b_n)\in\F^n$, we define their \emph{star-product} as the \newinline{element}-wise multiplication, i.e.,
$\ba\star\bb\coloneqq (a_1b_1, a_2b_2, \dots, a_nb_n)\in\F^n$.
Given a vector $\ba\in\F^n$, let $\diag(\ba)\in\F^{n\times n}$ be the diagonal matrix with entries of $\ba$ on its diagonal.

Throughout the paper, unless specified otherwise, the indices of entries in vectors, elements in sets, etc., start from $1$, while the coefficients of polynomials start from $0$.

\subsection{Sum-Rank Metric}
Let $\ell\in\bbN$ and $\bn_\ell= (n_1, \dots,n_\ell)\in\bbN^\ell$ be an ordered partition of $n=\sum_{\indBlock=1}^\ell n_\indBlock$.
For a matrix 
\begin{align*}
  \bA =
  \begin{pmatrix}
    \bA_1 & \bA_2 & \dots & \bA_\ell
  \end{pmatrix}
                        \in\Fq^{m\times n}
                        \quad\quad\text{ or }\quad\quad
                        \bB =
                        \begin{pmatrix}
                          \bB_1\\
                          \bB_2\\
                          \vdots\\
                          \bB_\ell
                        \end{pmatrix}
  \in\Fq^{n\times m}
\end{align*}
where $\bA_i\in\Fq^{m\times n_i}$ and each $\bB_i\in\Fq^{n_i\times m}$, we say $\bA$ has a \emph{column-wise} ordered partition with respect to $\bn_{\ell}$ and $\bB$ has a \emph{row-wise} ordered partition w.r.t.~$\bn_{\ell}$. The sum-rank weight w.r.t.~$\bn_\ell$ is respectively
\begin{align*}
  \wtSR{\bn_\ell}(\bA)= \sum_{\indBlock=1}^{\ell} \rank(\bA_\indBlock)
  \quad\quad\text{ or }\quad\quad
  \wtSR{\bn_\ell}(\bB)= \sum_{\indBlock=1}^{\ell} \rank(\bB_\indBlock)
\end{align*}
We can find the following relation between the sum-rank weight and the rank of a matrix.
\begin{lemma}
  \label{lem:rank-leq-sumrank-mat}
  For a matrix $\bA\in\Fq^{m\times n}$ and an ordered partition $\bn_\ell= (n_1, \dots,n_\ell)$ of $n$, $\rank(\bA)\leq\wtSR{\bn_{\ell}}(\bA)\leq \ell\cdot \rank(\bA)$. Similarly, for a matrix $\bB\in\Fq^{n\times m}$, $\rank(\bB)\leq \wtSR{\bn_{\ell}}(\bB)\leq \ell\cdot \rank(\bB)$.
\end{lemma}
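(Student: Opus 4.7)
The plan is to handle the column-partition case first and deduce the row-partition case by transposition, using only elementary dimension arguments on the block column (resp.\ row) spaces. Both inequalities are instances of standard facts: subadditivity of dimension under sums of subspaces, and monotonicity of rank under taking submatrices.

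For $\bA=(\bA_1\mid\bA_2\mid\cdots\mid\bA_\ell)\in\Fq^{m\times n}$, I would denote by $V_i\subseteq \Fq^m$ the column space of $\bA_i$, so that the column space of $\bA$ equals $V_1+V_2+\cdots+V_\ell$. For the lower bound, subadditivity of dimension under sums of subspaces gives
\begin{align*}
\rank(\bA)=\dim(V_1+\cdots+V_\ell)\leq \sum_{l=1}^\ell \dim V_l = \sum_{l=1}^\ell \rank(\bA_l)=\wtSR{\bn_\ell}(\bA).
\end{align*}
For the upper bound, each $\bA_l$ is a column-submatrix of $\bA$, hence $\rank(\bA_l)\leq \rank(\bA)$, and summing the $\ell$ inequalities yields $\wtSR{\bn_\ell}(\bA)\leq \ell\cdot\rank(\bA)$.

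For the row-partitioned matrix $\bB\in\Fq^{n\times m}$, I would apply the same argument to $\bB^\top$, whose column blocks are exactly $\bB_1^\top,\dots,\bB_\ell^\top$. Since $\rank(\bB_l^\top)=\rank(\bB_l)$ and $\rank(\bB^\top)=\rank(\bB)$, the two inequalities transfer verbatim, giving $\rank(\bB)\leq \wtSR{\bn_\ell}(\bB)\leq \ell\cdot\rank(\bB)$.

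No step here is expected to pose a real obstacle; the proof is a short bookkeeping argument. The only thing to be careful about is to cite the correct notion (column space for $\bA$, row space for $\bB$) so that the subadditivity bound is applied in the right ambient space, which is why routing the row case through a transposition is the cleanest presentation.
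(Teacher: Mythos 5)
Your proof is correct and follows essentially the same route as the paper's: subadditivity of dimension applied to the sum of the block column spaces for the lower bound, and $\rank(\bA_l)\leq\rank(\bA)$ summed over $l$ for the upper bound, with the row-partitioned case handled symmetrically (the paper argues via row spaces directly rather than transposing, but this is the same argument). No gaps.
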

\begin{proof}
  Denote by $\myspan{\bA}_{\sfc}$ the column space of a matrix $\bA$. For the first inequality,
  \begin{align*}
    \rank(\bA) = \dim(\myspan{\bA}_{\sfc}) &= \dim(\myspan{\bA_1}_{\sfc}+\cdots + \myspan{\bA_\ell}_{\sfc})\\
                                       &\leq \dim(\myspan{\bA_1}_{\sfc}) + \cdots \newinline{+}\dim(\myspan{\bA_\ell}_{\sfc})\\
    &=\rank(\bA_1) +\cdots +\rank(\bA_\ell) = \wtSR{\bn_\ell}(\bA)\ .
  \end{align*}
  For the second inequality,
  \begin{align*}
    \wtSR{\bn_\ell}(\bA)= \sum_{i=1}^{\ell}\rank(\bA_i) 
    \leq \sum_{i=1}^{\ell}\rank(\bA) =\ell\cdot \rank(\bA).
  \end{align*}
  For the matrix $\bB\in\Fq^{n\times m}$ with a row-wise ordered partition, the proof is similar with considering the row space of $\bB$ and $\bB_i$'s.
\end{proof}
For the vector space $\Fqm^n$, with an $\ell$-ordered partition $(n_1, \dots,n_\ell)\in\bbN^\ell$, the sum-rank metric is defined as the following. For simplicity of the notation, we abuse the notation $\wtSR{\bn_\ell}$ for the matrix space over $\Fq$ and the vector space over $\Fqm$.

\begin{definition}
The sum-rank weight on $\Fqm^n$, with an ordered partition $\bn_\ell =(n_1, \dots,n_\ell)$ of $n$, is defined as
\begin{align*}
    \wtSR{\bn_\ell}(\cdot)\ :\ \Fqm^n &\to \bbN_0\\
    \bx &\mapsto \sum_{i=1}^\ell \rkq(\bx_i)
\end{align*}
where $\bx = (\bx_1 | \bx_2 | \dots | \bx_\ell)$ with $\bx_i\in\Fqm^{n_i}$. Moreover, the sum-rank distance is defined as
\begin{align*}
    \dSR{\bn_\ell}(\cdot,\cdot)\ :\ \Fqm^n\times\Fqm^n &\to \bbN_0\\
    (\ba,\bb) &\mapsto \wtSR{\bn_\ell}(\ba-\bb).
\end{align*}
For a linear code $\cC\subseteq\Fqm^n$, its minimum sum-rank distance is
\begin{align*}
  \dSR{\bn_\ell}(\cC)\coloneqq \min_{\bc_1,\bc_2\in\cC} \dSR{\bn_\ell}(\bc_1,\bc_2) = \min_{\0\neq \bc\in\cC} \wtSR{\bn_\ell}(\bc).
\end{align*}
\end{definition}
It is well known that for $\ell=1$ the sum-rank \newinline{metric} coincides with the rank metric and for $\ell=n$, it is the Hamming metric (see also \cite[Proposition 1.4, 1.5]{FnTsurvey-Umberto}).
The following lemma gives a relation among \newinline{the weights of a vector $\bx\in\Fqm^n$ in} the Hamming metric, the sum-rank metric and the rank metric.

\begin{lemma}[\hspace{1pt}\cite{ott2022covering}]
  \label{lem:sum-rank-Hamming}
  For a vector $\bx\in\Fqm^n$ and any ordered partition $\bn_\ell =(n_1, \dots,n_\ell)$ of $n$, $\wtR(\bx)\leq \wtSR{\bn_{\ell}}(\bx)\leq \wtH(\bx)$, where $\wtR(\bx)=\rank_q(\bx)$ is the weight of $\bx$ in the rank metric and $\wtH(\bx)$ is the weight of $\bx$ in the Hamming metric. 
\end{lemma}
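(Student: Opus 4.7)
The plan is to reduce both inequalities to statements about the matrix representation of $\bx$ over $\Fq$ via the basis-extension map $\extbasis{\bbeta}$ introduced earlier in the preliminaries. Write $\bX=\extbasis{\bbeta}(\bx)\in\Fq^{m\times n}$ and, following the ordered partition $\bn_\ell$, decompose it column-wise as $\bX=(\bX_1\mid \bX_2\mid \cdots \mid \bX_\ell)$ with $\bX_i\in\Fq^{m\times n_i}$. By the definition of the extension map, $\extbasis{\bbeta}(\bx_i)=\bX_i$, so $\rkq(\bx_i)=\rank(\bX_i)$. Consequently, $\wtR(\bx)=\rank(\bX)$ and $\wtSR{\bn_\ell}(\bx)=\sum_{i=1}^{\ell}\rank(\bX_i)=\wtSR{\bn_\ell}(\bX)$.

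With this identification in place, the left inequality $\wtR(\bx)\leq \wtSR{\bn_\ell}(\bx)$ is immediate from \cref{lem:rank-leq-sumrank-mat} applied to $\bX$, since that lemma already gives $\rank(\bX)\leq \wtSR{\bn_\ell}(\bX)$ for column-wise partitions.

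For the right inequality, I would argue block by block: the rank of $\bX_i$ over $\Fq$ is at most the number of its nonzero columns, because zero columns contribute nothing to the column space. The number of nonzero columns of $\bX_i$ is precisely the Hamming weight $\wtH(\bx_i)$ of the corresponding vector segment, since a column of $\bX_i$ is zero if and only if the associated entry of $\bx_i$ (as an element of $\Fqm$, expanded in the basis $\bbeta$) is zero. Summing the inequality $\rank(\bX_i)\leq \wtH(\bx_i)$ over $i\in[\ell]$ yields
\begin{equation*}
\wtSR{\bn_\ell}(\bx)=\sum_{i=1}^{\ell}\rank(\bX_i)\leq \sum_{i=1}^{\ell}\wtH(\bx_i)=\wtH(\bx),
\end{equation*}
which completes the argument.

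There is no genuine obstacle here: the entire proof is bookkeeping once one sets up the matrix picture, and both bounds reduce to the elementary facts that the column space of a block-partitioned matrix is the sum of the column spaces of its blocks, and that rank is bounded above by the number of nonzero columns. The only point deserving care is ensuring the basis-expansion correctly identifies zero columns of $\bX_i$ with zero entries of $\bx_i$, which follows from the linear independence of $\beta_1,\dots,\beta_m$ over $\Fq$.
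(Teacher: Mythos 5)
Your proof is correct. Note that the paper itself does not prove this lemma — it is cited from the reference [ott2022covering] without an argument — so there is no in-paper proof to compare against; your derivation is the standard one and is exactly what one would expect: the left inequality is precisely \cref{lem:rank-leq-sumrank-mat} applied to $\bX=\extbasis{\bbeta}(\bx)$, and the right inequality follows from bounding each $\rank(\bX_i)$ by the number of nonzero columns of $\bX_i$, which equals $\wtH(\bx_i)$ by the $\Fq$-linear independence of the basis $\bbeta$. All steps are sound and complete.
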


\subsection{Skew Polynomials}
Denote by $\Fqm[x]$ a univariate (commutative) polynomial ring with coefficients from $\Fqm$ and by $\Fqm[x_1,\dots, x_\numMulPolyVar]$ a multivariate polynomial ring for $\numMulPolyVar\geq 1$.
The following lemma gives \newinline{a} sufficient \newinline{condition on the} field size such that a nonzero evaluation of a multivariate polynomial \newinline{in the field} always exists.
\begin{lemma}[Combinatorial Nullstellensatz~{\cite[Theorem 1.2]{alon1999combinatorial}}]
  \label{lem:nullstellensatz}
  Let $\F$ be an arbitrary field and $f$ be a nonzero polynomial in $\F[x_1,\dots,x_\numMulPolyVar]$ of total degree $\deg(f)=\sum_{i=1}^n t_i$, where $t_i\geq 0, \ \forall i$.
  Then, if $\cX_1,\dots, \cX_\numMulPolyVar$ are subsets of $\F$ with $|\cX_i|> t_i$, then there are $\evapt{x}_1\in \cX_1,\dots, \evapt{x}_\numMulPolyVar\in \cX_\numMulPolyVar$ so that
  \begin{align*}
    f(\evapt{x}_1,\dots,\evapt{x}_\numMulPolyVar)\neq 0.
  \end{align*}
\end{lemma}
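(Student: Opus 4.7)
The plan is to follow Alon's classical two-step strategy: first reduce to the ``box-shaped'' setting in which each individual $x_i$-degree is bounded by $t_i$, and then handle the reduced statement by induction on $n$. Throughout, I will tacitly take $(t_1,\dots,t_n)$ to be chosen so that the monomial $\prod_{i=1}^n x_i^{t_i}$ has nonzero coefficient in $f$: such a choice exists because $\deg(f)=\sum_i t_i$ guarantees at least one monomial of total degree $\sum_i t_i$ with nonzero coefficient, and this is the hypothesis actually used in Alon's argument.

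For the reduction, I would replace each $\cX_i$ by an arbitrary subset $\cX_i'\subseteq\cX_i$ of size exactly $t_i+1$ and introduce the univariate annihilator $g_i(x_i)=\prod_{s\in\cX_i'}(x_i-s)$, a monic polynomial of degree $t_i+1$. Iteratively performing the substitution $x_i^{t_i+1}\mapsto x_i^{t_i+1}-g_i(x_i)$ eliminates every monomial of $x_i$-degree exceeding $t_i$ and yields a polynomial $\tilde f$ with $\deg_{x_i}(\tilde f)\leq t_i$ for all $i$, which coincides with $f$ on $\cX_1'\times\cdots\times\cX_n'$. The invariant to verify is that these replacements affect only monomials of total degree strictly less than $\sum_i t_i$, so the coefficient of $\prod_i x_i^{t_i}$ is preserved and $\tilde f$ remains nonzero.

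Next I would induct on $n$. The base case $n=1$ is immediate: a nonzero univariate of degree at most $t_1$ has at most $t_1$ roots, while $|\cX_1'|=t_1+1$. For $n\geq 2$, write $\tilde f=\sum_{j=0}^{t_n}h_j(x_1,\dots,x_{n-1})\,x_n^j$ and focus on $h_{t_n}$. The coefficient of $\prod_{i=1}^{n-1}x_i^{t_i}$ inside $h_{t_n}$ equals the coefficient of $\prod_i x_i^{t_i}$ in $\tilde f$, which is nonzero, so $h_{t_n}$ is a nonzero polynomial in $n-1$ variables satisfying $\deg_{x_i}(h_{t_n})\leq t_i$. Applying the inductive hypothesis supplies $\hat x_1\in\cX_1',\dots,\hat x_{n-1}\in\cX_{n-1}'$ with $h_{t_n}(\hat x_1,\dots,\hat x_{n-1})\neq 0$. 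Substituting these values, $\tilde f(\hat x_1,\dots,\hat x_{n-1},x_n)$ becomes a univariate polynomial in $x_n$ of exact degree $t_n$, so it has at most $t_n<|\cX_n'|$ roots, and one finds $\hat x_n\in\cX_n'\subseteq\cX_n$ such that $\tilde f(\hat x_1,\dots,\hat x_n)=f(\hat x_1,\dots,\hat x_n)\neq 0$.

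The delicate step will be the coefficient bookkeeping in the reduction: I must check that the replacement $x_i^{t_i+1}\mapsto x_i^{t_i+1}-g_i(x_i)$ never affects the top monomial $\prod_i x_i^{t_i}$. This comes down to the combinatorial observation that the substitution strictly decreases the $x_i$-degree of any monomial it touches, while a monomial of total degree $\sum_i t_i$ satisfying $\deg_{x_i}\leq t_i$ for every $i$ must equal $\prod_i x_i^{t_i}$ exactly. Once this invariance is in place, the reduction and induction combine into a short and clean proof, essentially the one in Alon's original paper.
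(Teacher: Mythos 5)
The paper does not prove this lemma at all---it is quoted verbatim from Alon's work \cite[Theorem 1.2]{alon1999combinatorial}---so there is no internal proof to compare against; your argument is the standard one from Alon's original paper (reduction to the box-degree case via the annihilators $g_i(x_i)=\prod_{s\in\cX_i'}(x_i-s)$, followed by induction on the number of variables), and it is correct, including the degree bookkeeping showing the coefficient of $\prod_{i}x_i^{t_i}$ survives the reduction. One point you handled well and that deserves emphasis: the statement as quoted in the paper is literally too strong, since for an arbitrary decomposition $\deg(f)=\sum_i t_i$ the conclusion can fail (e.g.\ $f=x_1^2$ with $t_1=0$, $t_2=2$ and $\cX_1=\{0\}$); the hypothesis Alon actually uses is that the coefficient of $\prod_i x_i^{t_i}$ is nonzero, and your choice of $(t_1,\dots,t_n)$ as the exponent vector of a top-degree monomial with nonzero coefficient is exactly the right repair and does not affect how the lemma is invoked later in the paper.
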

Let $\SkewPolys$ be a skew polynomial ring over $\Fqm$ with automorphism $\aut: \Fqm\to\Fqm$.
The degree of a skew polynomial $f(\SkewVar) = \sum_i f_i \SkewVar^{i} \in\SkewPolys$ is $\deg f(\SkewVar)\coloneqq\max\{i ~|~ f_i\neq 0\}$ \newinline{for $f(\SkewVar)\neq 0$, and $\deg f(\SkewVar)=-\infty$ if $f(\SkewVar)=0$}.
The addition in $\SkewPolys$ is defined to be the usual addition of polynomials and the multiplication is defined by the basic rule $\SkewVar\cdot \alpha =\aut(\alpha)\cdot \SkewVar, \forall \alpha\in \Fqm$ and extended to all elements of $\SkewPolys$ by associativity and distributivity.
For two skew polynomials $f(\SkewVar)=\sum_i f_i\SkewVar^{i}$ and $g(\SkewVar)=\sum_j g_j\SkewVar^{j}$, their product is
\begin{align}\label{eq:skewProd}
f(\SkewVar)\cdot g(\SkewVar)=\sum_i\sum_j f_i\aut^i(g_j) \SkewVar^{i+j}.
\end{align}
The degree of the product is $\deg \left(f(\SkewVar)\cdot g(\SkewVar)\right) = \deg f(\SkewVar) + \deg g(\SkewVar)$.
For ease of notation, when it is clear from the context, we may omit the variable notation in $f(\SkewVar)$ for $f\in\SkewPolys$, and write only $f$.

Since skew polynomials are non-commutative under multiplication and division, we denote by $|_r$ and $|_l$ the right and left \newinline{divisibility} respectively.
The powers of $\aut$ are $\aut^i(\alpha)=\aut(\aut^{i-1}(\alpha))$. 
For any $\alpha\in\Fqm$, its \emph{$i$-th truncated norm} is defined as $N_i(\alpha)\coloneqq\prod_{j=0}^{i-1} \aut^j(\alpha)$ and $N_0(\alpha)=1$.
For the Frobenius automorphism, $\Frobaut: \alpha\mapsto \alpha^q$, $\Frobaut^i(\alpha) = \alpha^{q^i}$, and $N_i(\alpha) = \alpha^{(q^i-1)/(q-1)}$.
\begin{definition}[$\aut$-Conjugacy Classes]
  \label{def:conjugacyClasses}
  Two elements $a,b\in\Fqm$ are called $\aut$-\emph{conjugate} if there exists a nonzero element $c\in\Fqm$ such that $b=\aut(c)a c^{-1}$. Otherwise, they are called $\aut$-\emph{distinct}.
  The \emph{conjugacy class} of $a$ with respect to $\aut$ is the set
  \begin{align*}
    C_{\aut}(a)\coloneq \{ \aut(c)a c^{-1} ~|~ c\in \Fqm\setminus\{0\} \}\ .
  \end{align*}
\end{definition}

\begin{theorem}[Structure of $\Frobaut$-conjugacy classes {\cite[Theorem 2.12]{FnTsurvey-Umberto}}]
  \label{thm:conjugacyClasses}
  Let $\priEle$ be a primitive element of $\Fqm$. For the Frobenius automorphism $\Frobaut$, the $q-1$ elements $1,\priEle,\priEle^2,\dots, \priEle^{q-2}$ are pair-wise $\Frobaut$-distinct.
  There are exactly $q-1$ distinct nonzero $\Frobaut$-conjugacy classes, each of size $\frac{q^m-1}{q-1}$, in $\Fqm$, and
  $$\Fqm=C_{\Frobaut}(0)\cup C_{\Frobaut}(\priEle^0)\cup\cdots \cup C_{\Frobaut}(\priEle^{q-2}),$$ where the union is disjoint.
\end{theorem}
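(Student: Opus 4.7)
The plan is to translate $\Frobaut$-conjugacy into a statement about cosets of a subgroup of the cyclic group $\Fqm^*$ and then read off every claim from the index of that subgroup.

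First, I would unfold the definition. For $a,b\in\Fqm$ with $a\neq 0$, the condition $b=\Frobaut(c)\,a\,c^{-1}=c^{q}\,a\,c^{-1}$ simplifies, using commutativity of $\Fqm$, to $b=a\,c^{q-1}$. Hence $a$ and $b$ are $\Frobaut$-conjugate iff $b/a$ lies in the set $H\coloneqq\{c^{q-1}:c\in\Fqm^*\}$. The case $a=0$ is immediate and produces the trivial class $C_{\Frobaut}(0)=\{0\}$, which I would handle separately.

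Second, I would identify $H$ as a subgroup of $\Fqm^*$ and compute its order. Because $\Fqm^*$ is cyclic of order $q^m-1$, the map $c\mapsto c^{q-1}$ is a group homomorphism whose image has order $(q^m-1)/\gcd(q-1,q^m-1)$. Using the factorisation $q^m-1=(q-1)(1+q+\cdots+q^{m-1})$, the gcd equals $q-1$, so $|H|=(q^m-1)/(q-1)$ and $H$ has index $q-1$ in $\Fqm^*$.

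Third, since the nonzero $\Frobaut$-conjugacy classes are exactly the cosets of $H$ in $\Fqm^*$, there are exactly $q-1$ such classes, each of cardinality $(q^m-1)/(q-1)$. For a concrete transversal, note that $H$ is generated by $\priEle^{q-1}$, so the cosets $\priEle^{0}H,\priEle^{1}H,\dots,\priEle^{q-2}H$ are pairwise distinct; equivalently, $1,\priEle,\dots,\priEle^{q-2}$ are pairwise $\Frobaut$-distinct. Combining this with $C_{\Frobaut}(0)=\{0\}$ yields the claimed disjoint partition of $\Fqm$. The only real obstacle is the gcd calculation $\gcd(q-1,q^m-1)=q-1$, which is immediate from the factorisation above; the remainder is a direct unfolding of the $\Frobaut$-conjugacy definition together with standard cyclic-group bookkeeping, so I expect no substantive difficulty.
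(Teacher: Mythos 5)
Your proof is correct; the paper does not prove this statement itself but only cites it from the survey of Mart\'inez-Pe\~nas et al., and your argument (reducing $\Frobaut$-conjugacy over the commutative field $\Fqm$ to membership of $b/a$ in the index-$(q-1)$ subgroup $H=\{c^{q-1}\}$ of the cyclic group $\Fqm^*$, then counting cosets) is the standard one. The gcd step $\gcd(q-1,q^m-1)=q-1$ and the transversal claim for $1,\priEle,\dots,\priEle^{q-2}$ are both handled correctly.
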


\begin{definition}[Remainder Evaluation of Skew Polynomials~\cite{ore1933theory}]
\label{def:eva_division}
For $f(\SkewVar)\in\SkewPolys$, $\alpha\in\Fqm$, since division on the right is possible for any $f(\SkewVar)\in \SkewPolys$, we may write $f(\SkewVar)=q_r(\SkewVar)(\SkewVar-\alpha)+t$, with $t\in\Fqm$. The (right) evaluation of $f(\SkewVar)$ is then defined as
\begin{align*}
    f(\alpha) = t.
\end{align*}
\end{definition}
The next lemma shows that the remainder evaluation can be computed without using the division algorithm and it is equivalent to the evaluation in \cref{def:eva_division}. A proof can also be found in \cite[Theorem 2.3]{baumbaugh2016results}.
\begin{lemma}[Explicit Expression of Remainder Evaluation of Skew Polynomials~{\cite[Lemma 2.4]{lam1988vandermonde}}]\label{lem:evaluation}
For $f=\sum_{i=0}^{\deg f} f_i \SkewVar^{i}\in \SkewPolys$ and $\alpha\in\Fqm$, $f(\alpha) = \sum_{i=0}^{\deg f}f_iN_{i}(\alpha)$. In particular, if $\aut$ is the Frobenius automorphism $\Frobaut$, $f(\alpha) = \sum_{i=0}^{\deg f} f_i \alpha^{(q^{i}-1)/(q-1)}$.
\end{lemma}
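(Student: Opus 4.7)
The plan is to establish the monomial identity
\[
\SkewVar^i = q_i(\SkewVar)(\SkewVar-\alpha) + N_i(\alpha)
\]
in $\SkewPolys$ for some $q_i(\SkewVar)\in\SkewPolys$, by induction on $i$. The base case $i=0$ is immediate from $N_0(\alpha)=1$, and $i=1$ gives $\SkewVar = 1\cdot(\SkewVar-\alpha) + \alpha$ with $N_1(\alpha)=\alpha$. For the inductive step, I left-multiply the identity at level $i$ by $\SkewVar$ and use the commutation rule $\SkewVar\cdot c = \autPoly{c}\cdot \SkewVar$ coming from \eqref{eq:skewProd} to push the constant past the indeterminate:
\[
\SkewVar\cdot N_i(\alpha) = \autPoly{N_i(\alpha)}\cdot \SkewVar = \autPoly{N_i(\alpha)}(\SkewVar-\alpha) + \autPoly{N_i(\alpha)}\alpha.
\]

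The key reduction is then $\autPoly{N_i(\alpha)}\alpha = N_{i+1}(\alpha)$. Indeed, applying $\aut$ to $N_i(\alpha)=\prod_{j=0}^{i-1}\autPolyt{\alpha}{j}$ shifts every exponent of $\aut$ up by one, giving $\prod_{j=1}^{i}\autPolyt{\alpha}{j}$, and multiplying on the right by $\alpha=\autPolyt{\alpha}{0}$ restores the full product up to index $i$, i.e.\ $N_{i+1}(\alpha)$. Collecting terms, the level-$(i+1)$ identity holds with $q_{i+1}(\SkewVar)=\SkewVar\cdot q_i(\SkewVar) + \autPoly{N_i(\alpha)}$, completing the induction.

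With the monomial identity in hand, $\Fqm$-linearity of right division by the fixed divisor $(\SkewVar-\alpha)$ yields
\[
f(\SkewVar) = \left(\sum_{i=0}^{\deg f} f_i\, q_i(\SkewVar)\right)(\SkewVar-\alpha) + \sum_{i=0}^{\deg f} f_i\, N_i(\alpha),
\]
and since the right quotient and remainder are unique in $\SkewPolys$, \cref{def:eva_division} gives $f(\alpha)=\sum_{i=0}^{\deg f} f_i N_i(\alpha)$. Specializing to $\aut=\Frobaut$, where $\autPolyt{\alpha}{j}=\alpha^{q^j}$, the truncated norm becomes $N_i(\alpha)=\prod_{j=0}^{i-1}\alpha^{q^j}=\alpha^{1+q+\cdots+q^{i-1}}=\alpha^{(q^i-1)/(q-1)}$, which is the second displayed formula.

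The only real subtlety—not a genuine obstacle—is the non-commutative bookkeeping: the scalar $N_i(\alpha)$ must be pushed through $\SkewVar$ on the correct side so that $(\SkewVar-\alpha)$ remains the right-hand factor throughout, and the new constant term has to land on the right of the quotient, not absorbed into it. Once this is tracked correctly, the telescoping identity $\autPoly{N_i(\alpha)}\alpha=N_{i+1}(\alpha)$ drives the induction mechanically and the remainder of the argument is pure linear algebra over $\Fqm$.
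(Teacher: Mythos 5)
Your proof is correct: the induction establishing $\SkewVar^i = q_i(\SkewVar)(\SkewVar-\alpha) + N_i(\alpha)$ via the commutation rule and the telescoping identity $\aut(N_i(\alpha))\alpha = N_{i+1}(\alpha)$, followed by left-linearity and uniqueness of the right remainder, is exactly the standard argument. The paper itself gives no proof for this lemma (it cites Lam--Leroy and Baumbaugh--Pernas), and your argument matches the one found in those references.
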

Similar to the evaluation of conventional polynomials, the evaluation of a $f\in\SkewPolys$ at $\Omega=\{\alphas{n}\}\subseteq\Fqm$ can be written as $(f(\alpha_1), \dots, f(\alpha_n)) = \bf\cdot \bV_k^\aut(\Omega)$, where $k$ is the degree of $f$, $\bf=(f_0,\dots,f_{k})$ contains the coefficients of $f$, and $\bV_{k+1}^\aut(\Omega)$ is the first $k+1$ rows of $\bV^\aut(\Omega)$ defined below.
\begin{definition}[$\aut$-Vandermonde Matrix]
Let $\aut$ be an automorphism $\aut:\Fqm\to\Fqm$. Given a set $\PindSet=\{\alphas{n}\} \subseteq \Fqm$,
the $\aut$-Vandermonde matrix of $\PindSet$ is given by
\begin{align*}
    \bV^\aut(\PindSet) \coloneqq
    \begin{pmatrix}
      N_0(\alpha_1) & N_0(\alpha_2) & \dots & N_0(\alpha_n)\\
      N_1(\alpha_1) & N_1(\alpha_2) & \dots & N_1(\alpha_n)\\
      \vdots & \vdots & \ddots &\vdots \\
      N_{n-1}(\alpha_1) & N_{n-1}(\alpha_2) & \dots & N_{n-1}(\alpha_n)
    \end{pmatrix}\ ,
\end{align*}
where $N_0(\alpha)=1$, and for $i\geq 1$, $N_i(\alpha)=\prod_{j=0}^{i-1} \aut^j(\alpha)$ is the $i$-th truncated norm of $\alpha$.
\end{definition}

\begin{definition}[Minimal Polynomial]\label{def:minPoly}
Given a nonempty set $\PindSet\subseteq \Fqm$, we \newinline{say $f_{\PindSet}$ is a \emph{minimal polynomial} of $\PindSet$ if it is} a monic polynomial of minimal degree such that $f_{\PindSet}(\alpha)=0$ for all $\alpha\in\PindSet$.
\end{definition}
It was shown in \cite[Lemma 5]{lam1986general} (see also \cite[Theorem 2.5]{FnTsurvey-Umberto}) that the minimal polynomial of any nonempty set $\Omega=\{\alphas{n}\}\subseteq \Fqm$ is unique.
The minimal polynomial can be constructed by an iterative Newton interpolation approach
\versionShortLong{as in \cite[Proposition 2.6]{FnTsurvey-Umberto} or by computing}{as follows (see also \cite[Proposition 2.6]{FnTsurvey-Umberto}):
First, set
\begin{align*}
  g_1 =\SkewVar-\alpha_1 .
\end{align*}
Then for $i=2,3,\dots, n$, perform
\begin{align}\label{eq:newtonInterpolation}
  g_i=
  \begin{cases}
    g_{i-1}\quad &\text{if } g_{i-1}(\alpha_i)=0,\\
    \left(\SkewVar-\alpha^{g_{i-1}(\alpha_i)}\right)\cdot g_{i-1}\quad &\text{otherwise,}
  \end{cases}
\end{align}
where $\alpha^{g_{i-1}(\alpha_i)}\coloneq \aut(g_{i-1}(\alpha_i))\alpha_i g_{i-1}(\alpha_i)^{-1}$ is the $\aut$-conjugate of $\alpha$ w.r.t.~$g_{i-1}(\alpha_i)$.
Upon termination, $g_n(\SkewVar)=f_{\Omega}(\SkewVar)$.

It can be seen that the minimal polynomial of a set $\Omega$ can be also constructed by computing}
\begin{align}\label{eq:lclm}
  f_\Omega(\SkewVar) = \lclm_{\alpha\in\Omega} \{\SkewVar-\alpha\}.
\end{align}
where $\lclm$ is defined as follows.
\begin{definition} \label{def:lclm}
The least common left multiple (lclm) of $g_i\in \SkewPolys$, denoted by $\lclm_i\{g_i\}$, is the unique monic polynomial $h\in \SkewPolys$ s.t.~$g_i|_r h, \forall i$.
\end{definition}
The \emph{polynomial independence} of a set is defined via its minimal \newinline{polynomial}.
\begin{definition}[P-independent Set {\cite[Def.~2.6]{FnTsurvey-Umberto}}]
  \label{def:PindSet}
  A set $\Omega\subseteq \Fqm$ is P-independent in $\SkewPolys$ if $\deg (f_{\Omega}) = |\Omega|$.
\end{definition}

\begin{proposition}[{\cite[Theorem 8]{lam1986general}}]\label{prop:PindRankVand}
For a P-independent set $\PindSet$,
$\deg(f_{\PindSet})=|\PindSet|=\rank(\bV^{\aut}(\PindSet))$.
\end{proposition}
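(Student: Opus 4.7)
The plan is to reduce the rank computation to a statement about skew polynomial evaluation. The first equality $\deg(f_\PindSet) = |\PindSet|$ is immediate from Definition~\ref{def:PindSet}, so the only nontrivial content is $|\PindSet| = \rank(\bV^\aut(\PindSet))$. Writing $n \coloneqq |\PindSet|$, the matrix $\bV^\aut(\PindSet) \in \Fqm^{n \times n}$ is square, so it suffices to show that it has trivial left kernel.

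The key observation I would use is that Lemma~\ref{lem:evaluation} lets me identify left multiplication by a row vector $\bf = (f_0, \dots, f_{n-1}) \in \Fqm^n$ with remainder evaluation: the $j$-th entry of $\bf \cdot \bV^\aut(\PindSet)$ is $\sum_{i=0}^{n-1} f_i N_i(\alpha_j) = f(\alpha_j)$, where $f(\SkewVar) \coloneqq \sum_{i=0}^{n-1} f_i \SkewVar^i \in \SkewPolys$. Consequently, $\bV^\aut(\PindSet)$ is rank-deficient if and only if there exists a nonzero skew polynomial $f$ of degree at most $n-1$ vanishing on all of $\PindSet$.

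Suppose toward a contradiction that such an $f$ exists. Definition~\ref{def:eva_division} immediately yields the equivalence $g(\alpha) = 0 \iff (\SkewVar - \alpha)$ right-divides $g$. Combined with~\eqref{eq:lclm}, any skew polynomial vanishing on $\PindSet$ must be right-divisible by every $\SkewVar - \alpha$ for $\alpha\in\PindSet$, and hence by $f_\PindSet = \lclm_{\alpha \in \PindSet}\{\SkewVar - \alpha\}$. Since $f \neq 0$, this forces $\deg f \geq \deg f_\PindSet = n$, contradicting $\deg f \leq n-1$. Therefore the left kernel of $\bV^\aut(\PindSet)$ is trivial and the matrix has full rank $n$.

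The only delicate point is the equivalence between skew polynomial roots and right divisibility by $\SkewVar - \alpha$; this is standard but must be invoked carefully, since the non-commutativity of $\SkewPolys$ rules out a naive transposition of the classical Vandermonde argument and forces us to work with the remainder evaluation of Definition~\ref{def:eva_division} rather than with substitution.
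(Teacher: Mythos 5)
The paper does not prove this proposition; it simply cites \cite[Theorem 8]{lam1986general}. Your argument is a correct, self-contained proof and is essentially the standard one behind that citation: the first equality is \cref{def:PindSet}, and the second follows because a nonzero left-kernel vector of the square matrix $\bV^{\aut}(\PindSet)$ would, via \cref{lem:evaluation}, give a nonzero skew polynomial of degree at most $|\PindSet|-1$ vanishing on $\PindSet$, which is impossible since such a polynomial is right-divisible by $f_{\PindSet}$ of degree $|\PindSet|$. The only step resting on something beyond the paper's literal definitions is the passage from ``$(\SkewVar-\alpha)$ right-divides $f$ for every $\alpha\in\PindSet$'' to ``$f_{\PindSet}$ right-divides $f$'': this needs the universal property of the lclm (every common left multiple is a left multiple of the lclm), which holds because $\SkewPolys$ is a left principal ideal domain, but is not quite contained in \cref{def:lclm} as stated; it would be worth one sentence to invoke it explicitly.
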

\begin{proposition}[{\cite[Corollary 2.8]{FnTsurvey-Umberto}}]\label{prop:subsetPind}
  Any subset of a P-independent set is P-independent.
\end{proposition}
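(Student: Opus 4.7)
The plan is to argue directly via the Newton interpolation construction \eqref{eq:newtonInterpolation} of the minimal polynomial. Let $\Omega$ be P-independent with $|\Omega|=n$ and let $\Omega'\subseteq\Omega$ with $|\Omega'|=n'$. I would first reorder the elements of $\Omega$ so that those of $\Omega'$ come first, writing $\Omega=\{\alpha_1,\dots,\alpha_n\}$ with $\Omega'=\{\alpha_1,\dots,\alpha_{n'}\}$, and then run \eqref{eq:newtonInterpolation} in this order to produce the intermediate polynomials $g_1,g_2,\dots,g_n$.

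The key observation is that every step of \eqref{eq:newtonInterpolation} either preserves the degree (when $g_{i-1}(\alpha_i)=0$) or increases it by exactly one (otherwise). Using the uniqueness of the minimal polynomial \cite[Lemma 5]{lam1986general}, I would identify the intermediate polynomial $g_{n'}$ with $f_{\Omega'}$ and the final polynomial $g_n$ with $f_{\Omega}$. Counting the degree increments during the last $n-n'$ steps then yields
\[
  \deg f_{\Omega} \;\leq\; \deg f_{\Omega'} + (n-n').
\]
Since $\Omega$ is P-independent, the left-hand side equals $n$, which forces $\deg f_{\Omega'}\geq n'$. Combined with the elementary upper bound $\deg f_{\Omega'}\leq n'$ obtained by running \eqref{eq:newtonInterpolation} on $\Omega'$ alone, I conclude $\deg f_{\Omega'} = n' = |\Omega'|$, so $\Omega'$ is P-independent.

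The main obstacle I expect is pinning down the identification $g_{n'}=f_{\Omega'}$ in the middle of the recursion: a priori, $g_{n'}$ is only known to be a monic polynomial of degree at most $n'$ vanishing on $\Omega'$. This gap is closed by invoking the uniqueness of the minimal polynomial, which forces $g_{n'}$ to be exactly $f_{\Omega'}$. Once that identification is secured, the remainder of the argument is a clean degree-counting inequality along the two segments of the recursion, and the trivial bound $\deg f_{\Omega'}\leq|\Omega'|$ closes out the proof.
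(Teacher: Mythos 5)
Your proof is correct, and it is worth pointing out that the paper itself gives no proof of \cref{prop:subsetPind} at all: the statement is imported by citation from \cite[Corollary 2.8]{FnTsurvey-Umberto}. Your self-contained degree-counting argument along the Newton recursion \eqref{eq:newtonInterpolation} is therefore a genuinely different (and more elementary) route, using only tools the paper already introduces. The core of the argument is sound: reordering $\Omega$ so that $\Omega'$ comes first is legitimate because the minimal polynomial depends only on the set, each step of \eqref{eq:newtonInterpolation} raises the degree by at most one, and combining $\deg f_{\Omega}\leq \deg f_{\Omega'}+(n-n')$ with $\deg f_{\Omega}=n$ and the trivial bound $\deg f_{\Omega'}\leq n'$ forces $\deg f_{\Omega'}=n'$. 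The one imprecision is your justification of $g_{n'}=f_{\Omega'}$: uniqueness of the minimal polynomial does not by itself promote a monic polynomial of degree at most $n'$ vanishing on $\Omega'$ to the minimal one (it could a priori fail to have minimal degree). What actually closes that gap is the correctness of the interpolation procedure itself, i.e.\ \cite[Proposition 2.6]{FnTsurvey-Umberto}: the first $n'$ iterations of the recursion on $\Omega$ coincide verbatim with the full recursion on $\Omega'$, whose output is $f_{\Omega'}$. With that substitution (and the trivial convention for $\Omega'=\varnothing$), your argument is complete. As an aside, the same inequality follows without any reordering from $f_{\Omega}=\lclm\{f_{\Omega'},f_{\Omega\setminus\Omega'}\}$ together with $\deg\lclm\{f,g\}\leq\deg f+\deg g$.
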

\begin{lemma}\label{lem:noMoreZero}
Given a P-independent set $\Omega$, for any subset $\rootSet\subset \Omega$, let $f_{\rootSet}(x)\in\SkewPolys$ be the minimal polynomial of $\rootSet$. Then, for any element $\alpha\in\Omega\setminus\rootSet$, $f_{\rootSet}(\alpha)\neq 0$.
\end{lemma}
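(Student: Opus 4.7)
The plan is to proceed by contradiction. Suppose that there exists some $\alpha \in \Omega \setminus \rootSet$ with $f_{\rootSet}(\alpha) = 0$. Then $f_{\rootSet}$ vanishes on every element of the enlarged set $\rootSet \cup \{\alpha\}$, while $f_{\rootSet}$ is monic of degree exactly $|\rootSet|$ (since $\rootSet \subset \Omega$ is P-independent, by \cref{prop:subsetPind}, and by \cref{def:PindSet}).

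Next, I would invoke \cref{prop:subsetPind} again applied to the subset $\rootSet \cup \{\alpha\} \subseteq \Omega$ to conclude that $\rootSet \cup \{\alpha\}$ is itself P-independent. By \cref{def:PindSet}, its minimal polynomial $f_{\rootSet \cup \{\alpha\}}$ must have degree exactly $|\rootSet|+1$.

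The contradiction then comes from comparing degrees. Since the minimal polynomial $f_{\rootSet \cup \{\alpha\}}$ is, by \cref{def:minPoly}, the monic polynomial of \emph{minimal} degree vanishing on $\rootSet \cup \{\alpha\}$, and $f_{\rootSet}$ is a monic polynomial vanishing on this same set, we must have
\begin{align*}
  \deg f_{\rootSet \cup \{\alpha\}} \leq \deg f_{\rootSet} = |\rootSet|,
\end{align*}
which contradicts $\deg f_{\rootSet \cup \{\alpha\}} = |\rootSet|+1$.

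I do not expect a real obstacle: the whole argument is just a two-line consequence of the hereditary nature of P-independence (\cref{prop:subsetPind}) together with the degree characterization in \cref{def:PindSet}. The only subtlety to double-check is that ``$f_{\rootSet}$ vanishes on $\rootSet \cup \{\alpha\}$'' really does force $\deg f_{\rootSet \cup \{\alpha\}} \leq \deg f_{\rootSet}$, which follows directly from the minimality clause in \cref{def:minPoly}, so no additional machinery (e.g., right-division by $\SkewVar - \alpha$ or lclm manipulations via \eqref{eq:lclm}) is needed.
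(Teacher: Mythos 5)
Your proposal is correct and follows essentially the same argument as the paper: assume $f_{\rootSet}(\alpha)=0$, note that $f_{\rootSet}$ then vanishes on $\rootSet\cup\{\alpha\}$ forcing $\deg f_{\rootSet\cup\{\alpha\}}\leq|\rootSet|$, and contradict the P-independence of $\rootSet\cup\{\alpha\}\subseteq\Omega$. Your version is in fact slightly more careful than the paper's, which asserts the equality $f_{\rootSet\cup\{\alpha\}}=f_{\rootSet}$ where only the degree inequality is needed.
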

\begin{IEEEproof}
Assume $f_{\rootSet}(\alpha)=0$, then the minimal polynomial $f_{\rootSet\cup\{\alpha\}}=f_{\rootSet}$ and $\deg(f_{\rootSet\cup\{\alpha\}}) = |\rootSet|< |\rootSet\cup\{\alpha\}|$, which contradicts to that $\rootSet\cup\{\alpha\}\subseteq \Omega$ is P-independent.
\end{IEEEproof}

\subsection{Linearized Reed-Solomon Codes}
The definition of LRS codes adopted in this paper follows from the \emph{generalized skew evaluations codes} \cite[Section III]{liu2015construction} with particular choices of the evaluation points and column multipliers.

\begin{definition}[Linearized Reed-Solomon (LRS) Codes]
\label{def:LRS}
Let $\ell\leq q-1$ and $(n_1,\dots, n_{\ell})$ be an ordered partition of $n$ with $n_i\leq m$ for all $i=1,\dots,\ell$.
Let $a_1,\dots, a_\ell\in\Fqm$ be from distinct $\Frobaut$-conjugacy classes  of $\Fqm$, and called \emph{block representatives}. 
Let
$$\colMulVec=(\beta_{1,1},\dots,\beta_{1,n_1}\ \vdots\ \cdots\ \vdots \ \beta_{\ell,1},\dots,\beta_{\ell,n_\ell})\in\Fqm^n$$
be a vector of \emph{column multipliers}, where $\betal{1},\dots,\betal{n_l}$ are linearly independent over $\Fq, \forall l\in[\ell]$.

Let the set of \emph{code locators} be
\begin{align}\label{eq:LRSlocators}
  \locSet 
  =&\{a_1\beta_{1,1}^{q-1}, \dots, a_1\beta_{1, n_1}^{q-1}
  \vdots
  \cdots
  \vdots
  a_\ell\beta_{\ell,1}^{q-1}, \dots, a_\ell\beta_{\ell, n_\ell}^{q-1}\}\ .
\end{align}
An $[n,k]_{q^m}$ \emph{linearized Reed-Solomon} code is defined as
\begin{align*}
  \cC_{k}^{\Frobaut}(\locSet, \colMulVec):= \{ \colMulVec \star (f(\alpha))_{\alpha\in\locSet} \  |& \ f(\SkewVar)\in\FrobPolys,\\ &\deg f(\SkewVar)<k\},
\end{align*}
where
the evaluation $f(\alpha) = \sum_{i=0}^{\deg f}f_iN_{i}(\alpha)$ is the remainder evaluation\versionShortLong{}{ as in \cref{lem:evaluation}}.
\end{definition}
The code locator set $\locSet$ of LRS codes \versionShortLong{is P-independent \cite[Theorem 2.11]{FnTsurvey-Umberto}.}{has the following properties.
\begin{proposition}[{\cite[Theorem 4.5]{lam1988vandermonde}}]
  \label{prop:blockPind}
  The code locators in the $\indBlock$-th block, i.e., $\{a_\indBlock\beta_{\indBlock,1}^{q-1},\dots, a_\indBlock\beta_{\indBlock,n_\indBlock}^{q-1}\}$ are P-independent, if and only if $\betal{1},\dots,\betal{n_\indBlock}$ are linearly independent \newinline{over $\Fq$}.
\end{proposition}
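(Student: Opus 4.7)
The plan is to translate P-independence of the block-$l$ locator set $\Omega_l = \{a_l \beta_{l,j}^{q-1}: j \in [n_l]\}$ into a statement about the rank of a classical Moore matrix in $\beta_{l,1}, \dots, \beta_{l,n_l}$. By Proposition~\ref{prop:PindRankVand}, $\Omega_l$ is P-independent if and only if the $\Frobaut$-Vandermonde matrix $\bV^{\Frobaut}(\Omega_l) \in \Fqm^{n_l \times n_l}$ has rank $n_l$, so it suffices to reduce this rank condition to $\Fq$-linear independence of the $\beta_{l,j}$'s.

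The key computation uses Lemma~\ref{lem:evaluation}: for the Frobenius automorphism, $N_i(\alpha)=\alpha^{(q^i-1)/(q-1)}$, which gives
\[
N_i\bigl(a_l\beta_{l,j}^{q-1}\bigr)=a_l^{(q^i-1)/(q-1)}\cdot \beta_{l,j}^{q^i-1}.
\]
Since $a_l$ represents a nontrivial $\Frobaut$-conjugacy class (hence $a_l \neq 0$), and since any $\Fq$-linearly independent family consists of nonzero elements, I can factor out the $i$-th row scalar $a_l^{(q^i-1)/(q-1)}$ and the $j$-th column scalar $\beta_{l,j}^{-1}$ to obtain
\[
\bV^{\Frobaut}(\Omega_l) = \bD_r \cdot \bM(\boldsymbol\beta_l) \cdot \bD_c,
\]
where $\bD_r = \diag\bigl(a_l^{(q^i-1)/(q-1)}\bigr)_{i=0}^{n_l-1}$ and $\bD_c = \diag(\beta_{l,j}^{-1})_{j=1}^{n_l}$ are invertible diagonal matrices, and $\bM(\boldsymbol\beta_l)$ is the classical Moore matrix with $(i,j)$-entry $\beta_{l,j}^{q^i}$. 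Consequently, $\rank(\bV^{\Frobaut}(\Omega_l))=\rank(\bM(\boldsymbol\beta_l))$.

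The proof is then closed by invoking the well-known Moore-matrix rank criterion, namely that $\bM(\boldsymbol\beta_l)$ has full rank $n_l$ if and only if $\beta_{l,1},\dots,\beta_{l,n_l}$ are $\Fq$-linearly independent. I expect the main content to be the explicit Frobenius computation that exposes $\bV^{\Frobaut}(\Omega_l)$ as a diagonal scaling of a classical Moore matrix. The only delicate point is ensuring both diagonal matrices are invertible, which requires $a_l \neq 0$ (from the conjugacy-class assumption) and each $\beta_{l,j} \neq 0$ (consistent with the linear-independence hypothesis on the $\beta_{l,j}$'s stipulated in the LRS definition); once this is in place, the equivalence is immediate from standard linear algebra over $\Fqm$.
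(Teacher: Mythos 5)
The paper does not actually prove this proposition --- it is quoted from Lam--Leroy (\cite[Theorem 4.5]{lam1988vandermonde}) --- so there is no internal proof to compare against; your argument is a correct, self-contained finite-field specialization of that theorem, and the factorization you use is precisely the identity $N_i(a_l\betal{t}^{q-1})\cdot\betal{t}=N_i(a_l)\,\betal{t}^{q^i}$ that the paper itself displays in \eqref{eq:MooreMatrix} (there applied to the generator-matrix block rather than to the square $\Frobaut$-Vandermonde matrix). Two points should be made explicit to close the argument fully. First, \cref{prop:PindRankVand} as stated gives only one direction ($\Omega$ P-independent $\implies \rank\bV^{\Frobaut}(\Omega)=|\Omega|$); your ``if and only if'' reduction needs the stronger fact from \cite{lam1986general} that $\deg f_{\Omega}=\rank\bV^{\Frobaut}(\Omega)$ holds for \emph{every} finite set, which is true but is not what the paper's proposition literally says. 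Second, in the ``only if'' direction the factorization presupposes every $\betal{t}\neq 0$: if some $\betal{t}=0$ the corresponding locator is $0$, which lies in the separate conjugacy class $C_{\Frobaut}(0)$ and can happily join the remaining locators in a P-independent set (e.g.\ $\lclm\{\SkewVar,\SkewVar-\alpha\}$ has degree $2$ for $\alpha\neq 0$), so the equivalence is only meaningful under the standing convention that each $\betal{t}$ is nonzero --- equivalently, that $a_l\betal{t}^{q-1}=\Frobaut(\betal{t})\,a_l\,\betal{t}^{-1}$ is a genuine conjugate of $a_l$. With those two caveats stated, the proof is complete and is essentially the standard one.
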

\begin{proposition}[{\cite[Theorem 2.11]{FnTsurvey-Umberto}}]
  \label{lem:codeLocatorsPind}
  The union of P-independent sets which are subsets of different conjugacy classes is P-independent. Hence, the code locator set $\cL$ given in \eqref{eq:LRSlocators} is P-independent.
\end{proposition}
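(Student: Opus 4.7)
I will first dispatch the ``Hence'' clause by reducing it to the main statement. Using commutativity of $\Fqm$, I can rewrite $a_l \beta_{l,t}^{q-1} = \Frobaut(\beta_{l,t}) \cdot a_l \cdot \beta_{l,t}^{-1}$, which lies in $C_{\Frobaut}(a_l)$; thus each block of $\cL$ sits inside a single conjugacy class and is P-independent by \cref{prop:blockPind}, while the block representatives $a_1, \dots, a_\ell$ lie in pairwise distinct $\Frobaut$-conjugacy classes by assumption, so the first part of the proposition applies.

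For the main statement I plan to induct on $\ell$. The base case $\ell=1$ is the hypothesis. For the inductive step, set $\Omega' = \bigcup_{i=1}^{\ell-1} \Omega_i$, which is P-independent by induction, and let $\Omega = \Omega' \cup \Omega_\ell$. Applying \eqref{eq:lclm} together with the associativity and idempotence of $\lclm$ yields $f_\Omega = \lclm(f_{\Omega'}, f_{\Omega_\ell})$. In the Ore domain $\SkewPolys$ the degree identity
\[
\deg\lclm(f,g) + \deg\gcrd(f,g) = \deg f + \deg g
\]
holds, so combining with $\deg f_{\Omega'} = |\Omega'|$ (induction hypothesis), $\deg f_{\Omega_\ell} = |\Omega_\ell|$ (P-independence of $\Omega_\ell$), and $|\Omega| = |\Omega'| + |\Omega_\ell|$ (disjointness of conjugacy classes) gives
\[
\deg f_\Omega = |\Omega| - \deg\gcrd(f_{\Omega'}, f_{\Omega_\ell}).
\]
Thus P-independence of $\Omega$ reduces to the coprimality statement $\gcrd(f_{\Omega'}, f_{\Omega_\ell}) = 1$.

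The main obstacle is precisely this coprimality claim, which is where the distinct-conjugacy-class hypothesis bites. The route I would take relies on the structural Lam--Leroy fact that a minimal polynomial of a set contained in a single conjugacy class admits an iterated Newton factorization $f_{\Omega_i} = \prod_{j} (X - \gamma_j)$ with every factor satisfying $\gamma_j \in C_\aut(a_i)$; this follows because each step of the Newton construction \eqref{eq:newtonInterpolation} introduces a right factor $X - \alpha^{g_{i-1}(\alpha_i)}$ whose root is the $\aut$-conjugate of $\alpha_i$, and conjugation preserves the class. Consequently every monic linear right divisor $X - \gamma$ of $f_{\Omega_i}$ has $\gamma \in C_\aut(a_i)$. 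Now if $\gcrd(f_{\Omega'}, f_{\Omega_\ell})$ had positive degree, peeling off a linear right factor would produce some $\gamma$ lying simultaneously in $C_\aut(a_\ell)$ (from its right-dividing $f_{\Omega_\ell}$) and in $\bigcup_{i<\ell} C_\aut(a_i)$ (from its right-dividing $f_{\Omega'}$), contradicting disjointness. The technically delicate ingredient here --- that the roots of a minimal polynomial ``stay inside the defining class'' --- is the structural content cited from \cite[Theorem 2.11]{FnTsurvey-Umberto} and rests on the theory of bound polynomials of conjugacy classes; this is the step I expect to be the real obstacle, and it is the reason the result is quoted rather than re-derived from scratch in this paper.
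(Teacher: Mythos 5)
The paper never proves this proposition; it is imported verbatim from \cite[Theorem 2.11]{FnTsurvey-Umberto}, so there is no internal argument to compare against and your reconstruction already supplies more than the paper does. Your overall route is the standard one and is sound: the reduction of the ``Hence'' clause via $a_l\beta_{l,t}^{q-1}=\Frobaut(\beta_{l,t})\,a_l\,\beta_{l,t}^{-1}\in C_{\Frobaut}(a_l)$ together with \cref{prop:blockPind} is exactly right, as are the induction on $\ell$, the identity $f_\Omega=\lclm(f_{\Omega'},f_{\Omega_\ell})$, and the Ore degree formula reducing everything to $\gcrd(f_{\Omega'},f_{\Omega_\ell})=1$. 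Two points in the coprimality step need tightening, and they are where the real content sits. First, your justification that every monic linear right divisor of $f_{\Omega_i}$ has its root in $C_\aut(a_i)$ leans on the Newton construction, but \eqref{eq:newtonInterpolation} builds $f_{\Omega_i}$ by stacking \emph{left} linear factors, and skew factorizations are not unique, so controlling the left factors does not by itself control the right linear divisors; the clean argument is that $(\SkewVar-\gamma)\,|_r\, f$ iff $f(\gamma)=0$, and the zero set of $f_{\Omega_i}$ is contained in $C_\Frobaut(a_i)$ --- which is precisely \cref{lem:structureRootsBlock}, already quoted in the paper's appendix. Second, ``peeling off a linear right factor'' of a positive-degree $\gcrd$ presupposes that this $\gcrd$ has a linear right factor at all; an arbitrary skew polynomial of positive degree need not (irreducible skew polynomials of higher degree exist). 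You must invoke that any monic right divisor of the minimal polynomial of a P-independent set is itself completely reducible with roots in the P-closure (the Wedderburn-polynomial property of Lam--Leroy), which is the structural input you correctly flag as the obstacle. With those two facts made explicit the argument closes; as written, the gap is exactly at the step you yourself identified as delicate.
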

}
A generator matrix of the LRS code in \cref{def:LRS} is given by
\begin{align}\label{eq:Gevaluation}
  \GLRS =&
             \left( \quad \GLRSi{1}\quad \newinline{|}\quad  \dots \quad  \newinline{|} \quad \GLRSi{\ell}\quad \right)
           \in\Fqm^{k\times n}
\end{align}
where for each $l\in[\ell]$,
\begingroup
\setlength\arraycolsep{3pt}
\allowdisplaybreaks
\begin{align}
  &\GLRSi{l}
   = \bV^\Frobaut_{k}(\locSet^{(l)})  \cdot \diag(\colMulVec^{(l)}) \nonumber \\
  =& \begin{pmatrix}
    1 & \dots & 1 \\
    N_1(a_l\betal{1}^{q-1})& \dots & N_1(a_l\betal{n_l}^{q-1})\\
    \vdots & \ddots & \vdots \\
    N_{k-1}(a_l\betal{1}^{q-1})& \dots & N_{k-1}(a_l\betal{n_l}^{q-1})
  \end{pmatrix}\cdot
  \begin{pmatrix}
    \betal{1} &&\\
    &\ddots & \\
    && \betal{n_l}
  \end{pmatrix}\nonumber \\
     =&\begin{pmatrix}
       1 &&\\
       &\ddots & \\
       && N_{k-1}(a_l)
     \end{pmatrix}\cdot
           \begin{pmatrix}
             \betal{1} & \betal{2} & \dots & \betal{n_l}\\
             \betal{1}^{q^1} & \betal{2}^{q^1} & \dots & \betal{n_l}^{q^1}\\
               \vdots & \vdots & \ddots & \vdots \\
             \betal{1}^{q^{k-1}} & \betal{2}^{q^{k-1}} & \dots & \betal{n_l}^{q^{k-1}}
    \end{pmatrix} \label{eq:MooreMatrix}
     \ ,
\end{align}
\endgroup
where $\locSet^{(l)}\coloneqq\{a_l\betal{1}^{q-1},\dots,a_l\betal{n_l}^{q-1}\}$ and $\colMulVec^{(l)}\coloneqq (\betal{1},\dots,\betal{n_l})$. Eq.~\eqref{eq:MooreMatrix} holds because for $\Frobaut(a)=a^q$, $N_i(\betalt^{q-1})\cdot \betalt=\left(\betalt^{q-1}\right)^{(q^i-1)/(q-1)}\cdot \betalt= \betalt^{q^i}$.

In \cite[Def.~31]{martinez2018skew}, LRS codes are defined in the notion of linear operator evaluation with respect to the block representatives $\ba=(a_1,\dots,a_\ell)$ and block basis $\bb_i = (\beta_{i,1},\dots,\beta_{i,n_i})$. It was shown in \cite[Theorem 2.18]{FnTsurvey-Umberto} that these two definitions are equivalent.

LRS codes are MSRD codes \cite[Theorem 2.20]{FnTsurvey-Umberto} while they are \emph{maximum distance separable} (MDS) linear codes $\cC\subseteq\Fqm^n$ and \newinline{for small dimensions $k\leq\min\{n_1,\dots,n_{\ell}\}$,} the punctured codes $\cC_i\subseteq \Fqm^{n_i}$ at any block $i=1,\dots,\ell$ are \emph{maximum rank distance} (MRD) codes \cite[Section III.C]{liu2015construction}.


\section{LRS Codes with Support Constraints}
\label{sec:field_size}

In this section we show that \eqref{eq:GM-MDScondition} is also a necessary and sufficient condition that a matrix $\bG$ fulfilling \eqref{eq:zeroConstraints} generates an MSRD code.

Since the sum-rank weight is at most the Hamming weight for any vector in $\Fqm^n$, an MSRD code is necessarily an MDS code.
Therefore, \eqref{eq:GM-MDScondition} is also a necessary condition for $\G$ to generate an MSRD code.

Now we proceed to show the sufficiency of \eqref{eq:GM-MDScondition} for MSRD codes, in particular, LRS codes.
Note that for any $\Omega = \{i\}$, we have $|\zeroSet_i|\leq k-1$. One can add elements from $[n]$ to each $\zeroSet_i$ until $|\zeroSet_i|$ reaches $k-1$ while preserving \eqref{eq:GM-MDScondition} \cite[Corollary 3]{yildiz2019gabidulin}. This operation will only put more zero constraints on $\G$ but not remove any. This means that the code we design under the new $\zeroSet_i$'s of size $k-1$ will also satisfy the original constraints. Therefore, without loss of generality, along with \eqref{eq:GM-MDScondition}, we will further assume that
\begin{align}
  \label{eq:maxRowZeros}
    |\zeroSet_i|= k-1,\ \forall i\in[k]\ .
\end{align}

Let $\GLRS$ be a generator matrix of an LRS code as given in \eqref{eq:Gevaluation}. Given the following matrix
\begin{align}
  \G = \bT\cdot \GLRS \label{eq:defTmat},
\end{align}
if $\bT\in\Fqm^{k\times k}$ has full rank, then $\G$ is another generator matrix of the same LRS code generated by $\GLRS$.
Recall that $a_1,\dots, a_\ell\in\Fqm$ are the block representatives, $\beta_{1,1},\dots,\beta_{1,n_1}, \dots, \beta_{\ell,1},\dots,\beta_{\ell,n_\ell}\in\Fqm$ are the column multipliers, and $\locSet = \{\alphas{n}\}$ is the code locator set as defined in \cref{def:LRS}, where $\alpha_j = a_l\beta^{q-1}_{l, t}$ for some $l\in[\ell]$ and $t\in[n_l]$, $\forall j\in [n]$.
Let $n_0=0$. Define the following bijective map between the indices, $\indMap: \bbN\times \bbN \to \bbN$,
\begin{equation}
  \label{eq:indMap}
    (l,t) \mapsto j=t+\sum_{r=0}^{l-1}n_r,
\end{equation}
such that $\alpha_j = a_l\beta^{q-1}_{t}$.
The inverse map $\indMap^{-1}: \bbN \to \bbN\times \bbN$ is
$j \mapsto (l, t)$,
where $l=\max\{i ~|~ \sum_{r=1}^i n_r\leq j\}$ and $t=j-\sum_{r=0}^{l-1}n_r$.

For all $i\in[k]$, define the skew polynomials
\begin{align}\label{eq:skewPolyEachRow}
  \rowpoly_i(\SkewVar) \coloneqq \sum_{j=0}^{k-1} T_{i,j+1}\SkewVar^{j}\ \in\FrobPolys\ ,
\end{align}
where $T_{i,j+1}$ is the entry at $i$-th ($1\leq i\leq k$) row, $\newinline{(j+1)}$-th ($1\leq j+1\leq k$) column in $\bT$.
The entries of $\G$ will be 
$G_{ij}=\rowpoly_i(a_l\betalt^{q-1})\betalt,i\in[k],j=\indMap(l,t)\in[n]$. Then, the zero constraints in \eqref{eq:zeroConstraints} become root constraints on $\rowpoly_i$'s:
\begin{align}\label{eq:rootConstraints}
   \rowpoly_i(a_l\betalt^{q-1})=0, \quad\forall i \in[k],\ \forall j=\indMap(l,t)\in \zeroSet_i\ .
\end{align}

For brevity, we denote by
\begin{align}
  \label{eq:rootSet}
  \rootSet_i \coloneqq\{a_l\betalt^{q-1}\ |\ \indMap(l,t)\in \zeroSet_i\}
\end{align}
corresponding to the zero set $\zeroSet_i$.
\newinline{Let $\rowpoly_{\rootSet_i}(\SkewVar)= \lclm_{\alpha\in\rootSet_i} \{(\SkewVar-\alpha)\}$ be the minimal polynomial of $\rootSet_i$. It follows from \eqref{eq:rootConstraints} that $\rowpoly_i(\alpha)=0,\ \forall \alpha \in \rootSet_i$ and hence, $\rowpoly_{\rootSet_i}(\SkewVar)|_r\rowpoly_{i}(\SkewVar)$. Note that $\deg \rowpoly_{\rootSet_i}(\SkewVar)=\deg \rowpoly_i(\SkewVar)=k-1$, therefore, $\rowpoly_i(\SkewVar)=c\cdot\rowpoly_{\rootSet_i}(\SkewVar)$ for some $c\in\Fqm\setminus\{0\}$. For simplicity, let $c=1$, i.e.,}
\begin{align}\label{eq:lclmMinPoly}
  \rowpoly_i(\SkewVar) = \rowpoly_{\rootSet_i}(\SkewVar)\ .
\end{align}
Since $\cL$ and any subset $\rootSet_i\subset \cL$ are all P-independent, it follows from \cref{lem:noMoreZero} that $\rowpoly_i(\alpha)\neq 0$, for all $\alpha\in\cL\setminus\rootSet_i$. Hence, there is no other zero in $\G$ than the required positions in $\zeroSet_i$'s.
Moreover, by the assumption in \eqref{eq:maxRowZeros}, $|\rootSet_i|=|\zeroSet_i|= k-1$, and $\deg \rowpoly_i (\SkewVar)=k-1, \forall i\in[k]$.
Hence the coefficients of $\rowpoly_i(\SkewVar)$ in \eqref{eq:skewPolyEachRow} are uniquely determined (up to scaling) in terms of $a_1\beta_{1,1}^{q-1}, \dots,a_\ell\beta_{\ell,n_\ell}^{q-1}$.
In the following, we assume $a_1,\dots,a_\ell$ are fixed, nonzero, and from distinct $\Frobaut$-conjugacy classes. We see $\betalt$'s as variables of the following commutative multivariate polynomial ring
\begin{align}
\label{eq:multiVarRing}
    R_{\numMulPolyVar}\coloneq
    &\Fqm[\beta_{1,1},\dots, \beta_{\ell,n_\ell}],
\end{align}
and the coefficients $T_{i,j+1}$ of $\rowpoly_i(\SkewVar)$ can be seen as polynomials in $R_{\numMulPolyVar}$.
Then the problem of finding $\betalt$'s such that $\bG$ generates the same LRS code as $\GLRS$ becomes finding $\betalt$'s such that
\begin{align}\label{eq:defPl}
P(\beta_{1,1},\dots, \beta_{\ell,n_\ell})\coloneq & P_\bT(\beta_{1,1},\dots, \beta_{\ell,n_\ell}) \nonumber \\
&\cdot \prod_{l=1}^{\ell} P_{\bM_l}(\beta_{l,1},\dots, \beta_{l,n_l})
\neq 0
\end{align}
where $P_\bT$ is the determinant of $\bT$, whose entries are determined by the minimal polynomials $f_i$'s, and
\begin{align*}
  P_{\bM_l}
  \coloneqq & \det
          \begin{pmatrix}
             \betal{1} & \betal{2} & \dots & \betal{n_l}\\
             \betal{1}^{q^1} & \betal{2}^{q^1} & \dots & \betal{n_l}^{q^1}\\
              \vdots & \vdots & \ddots & \vdots \\
             \betal{1}^{q^{n_l-1}} & \betal{2}^{q^{n_l-1}} & \dots & \betal{n_l}^{q^{n_l-1}}
          \end{pmatrix}\ . 
\end{align*}
Since the coefficient of the monomial $\prod_{i=1}^{n_l} \beta_{l,i}^{q^{i-1}}$ in  $P_{\bM_l}$ is $1$, $P_{\bM_l}$ is a nonzero polynomial in $R_{\numMulPolyVar}$.
With \cref{claim} below, we can conclude that $P(\beta_{1,1},\dots, \beta_{\ell,n_\ell})$ is a nonzero polynomial in $R_{\numMulPolyVar}$.

\begin{claim}\label{claim}
If the condition in \eqref{eq:GM-MDScondition} is satisfied, then $P_{\bT}$ is a nonzero polynomial in $R_\numMulPolyVar$.
\end{claim}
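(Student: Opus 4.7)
The plan is to show that $P_\bT = \det \bT$, regarded as an element of $R_\numMulPolyVar$, is not identically zero by reducing the statement to a rank question on an evaluation matrix and then exhibiting a non-vanishing monomial coefficient through an inductive argument that combines Hall's marriage theorem with the Newton-interpolation form of the minimal polynomials $f_{\rootSet_i}$. The template is a hybrid of Lovett's proof of the GM--MDS condition (handling several conjugacy classes) and Yildiz-Hassibi's proof of the GM--MRD condition (handling the Moore structure inside a single class).

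I would first rephrase $\det \bT \neq 0$ as a rank condition on the polynomial matrix $\bE = \bT \cdot \bV^\Frobaut_k(\cL)$ whose $(i,j)$-entry equals $f_{\rootSet_i}(\alpha_j)$. Because $\cL$ is P-independent of size $n \geq k$, the matrix $\bV^\Frobaut_k(\cL)$ has rank $k$, so $\det \bT \neq 0$ iff $\bE$ has rank $k$ over $R_\numMulPolyVar$. By \cref{lem:noMoreZero}, entry $(i,j)$ of $\bE$ vanishes identically in the $\betalt$'s exactly when $j \in \zeroSet_i$. Under the simplifying assumption \eqref{eq:maxRowZeros}, the condition \eqref{eq:GM-MDScondition} is equivalent to the Hall marriage condition for the bipartite graph with edge set $\{(i,j) : j \notin \zeroSet_i\}$, so it produces an injection $\pi : [k] \to [n]$ with $\pi(i) \notin \zeroSet_i$. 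The $\pi$-diagonal of the submatrix $\bE[[k], \pi([k])]$ is then already a nonzero polynomial in the $\betalt$'s; the remaining task is to rule out cancellation with the off-diagonal contributions.

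I would then proceed by induction on a suitable complexity measure such as $k$ or $\sum_l n_l$. The reduction step would single out one distinguished variable $\betalt$, use the Newton-interpolation construction of $f_{\rootSet_i}$ together with the Moore identity $N_i(a_l \betalt^{q-1})\betalt = a_l^{(q^i-1)/(q-1)}\betalt^{q^i}$ to expand each entry of the chosen $k \times k$ submatrix as a polynomial in $\betalt$, and then extract the coefficient of a carefully chosen power of $\betalt$. The hoped-for outcome is that this coefficient factors as a nonzero scalar times the determinant of a smaller matrix associated with parameters $(n_1, \dots, n_l - 1, \dots, n_\ell)$ and modified zero sets $\zeroSet_i' = \zeroSet_i \setminus \{\indMap(l,t)\}$. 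Verifying that the $\zeroSet_i'$'s still satisfy \eqref{eq:GM-MDScondition} (after also reducing $k$ by one when appropriate) closes the induction, with base cases that degenerate either to GM--MDS (all blocks of size one, $\ell = n$) or to GM--MRD (a single block, $\ell = 1$).

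The chief obstacle is the cross-block coupling of the $\betalt$ variables inside the $f_{\rootSet_i}$'s. Unlike the GM--MDS setting, where each column variable appears only in its own evaluation, and unlike the GM--MRD setting, where all variables live inside a single Moore matrix with clean degree bookkeeping, here a single $\betalt$ can enter the coefficients of $f_{\rootSet_i}$ for every row $i$ whose $\rootSet_i$ meets block $l$, with the interaction mediated by $a_l$ and by exponentially growing $q$-powers. Designing an inductive invariant whose leading-monomial extraction survives this coupling while preserving \eqref{eq:GM-MDScondition} at each reduction step is, in my view, where the real technical work lies.
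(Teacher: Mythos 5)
Your proposal is a plan rather than a proof, and the plan defers precisely the step that constitutes the entire content of \cref{claim}. Exhibiting, via Hall's theorem, an injection $\pi$ with $\pi(i)\notin Z_i$ shows only that one permutation term in the expansion of $\det\bT$ (equivalently, of a $k\times k$ minor of your evaluation matrix) is a nonzero polynomial; it says nothing about cancellation among the $k!$ terms. Note also that Hall's condition for that bipartite graph is $\left|\bigcap_{i\in\Omega}Z_i\right|+|\Omega|\le n$, which is strictly weaker than \eqref{eq:GM-MDScondition}; if the two were equivalent, the GM--MDS conjecture would have been immediate, whereas it resisted exactly this ``find one nonzero term'' attack for years. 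So ``the remaining task is to rule out cancellation'' is not a remaining task but the theorem itself, and your sketch offers only a ``hoped-for outcome'' for it.

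The inductive mechanism you propose also does not match what is actually needed, in two concrete ways. First, substituting a single variable $\beta_{\ell,n_\ell}=0$ does not send a minimal polynomial $f_{\rootSet_i}$ to a minimal polynomial of a smaller root set of the same shape: by the Newton interpolation step, killing the root $a_\ell\beta_{\ell,n_\ell}^{q-1}$ produces a factor of $\SkewVar$ (property \ref{p:reducevar}), so your class of objects is not closed under your own reduction. This is why the paper works with the enlarged family $\Skewnk$ of polynomials of the form $\SkewVar^{\tau}\cdot\lclm\{\cdot\}$ and proves the strictly more general \cref{thm:sufficientCond} and \cref{thm:sufficientMat} (full row rank of a block matrix of shifts, with parameters $(\zeroSet_i,\tau_i)$), of which \cref{claim} is the corner case $s=k$, $\tau_i=0$, $|\zeroSet_i|=k-1$. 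Second, your induction has only ``reduce $k$'' and ``reduce a block length'' moves. The paper's induction on $(k,s,n)$ requires a third, independent move: when some proper $\Omega$ attains equality in \eqref{eq:equiv2}, one cannot safely substitute a variable (strictness of the inequalities is what guarantees the reduced sets still satisfy the condition); instead one factors $f_i=f_i'\cdot f_\Omega$ and splits the problem into two subproblems with fewer rows via the gcrd (\ref{case_s3n2_c}). Without a mechanism for this tight case the induction does not close. Finally, the reformulation that makes all of this tractable --- replacing ``$\det\bT\neq 0$'' by ``no nontrivial relation $\sum_i g_i\cdot f_i=0$ with $\deg(g_i\cdot f_i)\le k-1$ over $\FrobPolysn$'' --- is absent from your sketch, and it is this reformulation, not a leading-monomial extraction from the determinant, that lets the cross-block coupling you correctly worry about be handled by divisibility arguments (\ref{p:gcrd}, \ref{p:xdivision}) rather than by degree bookkeeping.
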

Now we proceed to present the result on the field size by assuming \cref{claim} is true.
A more general version (\cref{thm:sufficientCond}) of the claim is given in \cref{sec:ind_proof}.

For a fixed $l\in[\ell],t\in[n_\ell]$, the degree in $\betalt$ of $P_{\bM_l}$ is $\deg_{\betalt}P_{\bM_l} = q^{n_l-1}$ \cite[Lemma 3.51]{lidl1997finite}. Moreover, $\deg_{\betalt}P_\bT \leq (k-1)(q-1)\cdot q^{k-2}$, which can be shown by extending the analysis of linearized polynomials for Gabidulin codes in \cite[Section II.F]{yildiz2019gabidulin} to skew polynomials. The details of this extension are provided in \versionShortLong{\cite[Appendix B]{full_version}}{\cref{apendix:fieldSizeSkewPoly}}.
Then, the degree of $P(\beta_{1,1},\dots, \beta_{\ell,n_\ell})$ in \eqref{eq:defPl} as a polynomial in $\betalt$ is
\begin{align*}
  \deg_{\betalt}P &\leq (k-1)(q-1)\cdot q^{k-2}+q^{n_l-1}\ .
\end{align*}

\begin{theorem}\label{thm:fieldSizeFromGab}
  Let $\ell,n_l$ be positive integers 
  and $n \coloneqq \sum_{l=1}^\ell n_l$.
  Let $\zeroSet_1,\dots,\zeroSet_k\subset [n]$ fulfill \eqref{eq:GM-MDScondition} for any nonempty $\Omega\subseteq [k]$. Then for any prime power $q\geq\ell+1$ and integer $m\geq \max_{l\in[\ell]}\{k-1+\log_qk, n_l\}$, there exists an $[n,k]_{q^m}$ linearized Reed-Solomon code with $\ell$ blocks, and each block of length $n_l$, $l\in[\ell]$ with a generator matrix $\bG\in\Fqm^{k\times n}$ fulfilling the support constraints in \eqref{eq:zeroConstraints}.
\end{theorem}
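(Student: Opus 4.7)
The plan is to apply the Combinatorial Nullstellensatz (\cref{lem:nullstellensatz}) to the polynomial $P$ defined in~\eqref{eq:defPl}, treating the column multipliers $\betalt$ as variables and the block representatives $a_1, \dots, a_\ell$ as fixed parameters chosen in advance. First I would pick $a_1, \dots, a_\ell$: since $q \geq \ell + 1$, \cref{thm:conjugacyClasses} guarantees at least $\ell$ pairwise distinct nonzero $\Frobaut$-conjugacy classes in $\Fqm$, so one representative can be taken from each, matching the block-representative hypothesis of \cref{def:LRS}.

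Next I would verify that $P \in R_\numMulPolyVar$ is a nonzero polynomial. By \cref{claim} (whose proof is postponed to \cref{sec:main_result}), $P_\bT \not\equiv 0$. Each Moore-type factor $P_{\bM_l}$ is also nonzero, since its Leibniz expansion contains the ``diagonal'' monomial $\prod_{t=1}^{n_l} \betal{t}^{q^{t-1}}$ with coefficient $\pm 1$. Hence $P = P_\bT \cdot \prod_{l=1}^{\ell} P_{\bM_l}$ is a nonzero element of $R_\numMulPolyVar$, and I would pair this ``diagonal'' monomial of $\prod_l P_{\bM_l}$ with a top-degree monomial of $P_\bT$ to select a specific maximum-total-degree monomial of $P$ whose individual exponent in each $\betalt$ is bounded by the quantity $(k-1)(q-1)q^{k-2} + q^{n_l-1}$ computed just before the theorem statement.

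Taking the subsets $\cX_{l,t} = \Fqm$ in \cref{lem:nullstellensatz}, it remains to check that $q^m$ strictly exceeds each of those exponents. A short case split handles this. If $n_l \leq k$, then $m \geq k - 1 + \log_q k$ gives $q^m \geq k q^{k-1}$, and the identity $k q^{k-1} - (k-1)(q-1)q^{k-2} = (k+q-1)q^{k-2}$ together with $q^{n_l-1} \leq q^{k-1}$ yields $k q^{k-1} > (k-1)(q-1)q^{k-2} + q^{n_l-1}$ for $k \geq 2$. If $n_l > k$, I would split again: when $k q^{k-1} \geq q^{n_l}$ the first bound still suffices (the same identity reduces the goal to $k + q - 1 > q^{n_l - k + 1}$, which follows from $k \geq q^{n_l-k+1}$); when $k q^{k-1} < q^{n_l}$, I would instead use $q^m \geq q^{n_l}$, so the required $q^{n_l-1} > (k-1)q^{k-2}$ follows from $q^{n_l-k+1} > k > k-1$. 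The Nullstellensatz then delivers explicit $\betalt \in \Fqm$ at which $P$ is nonzero, which simultaneously witnesses invertibility of $\bT$ and $\Fq$-linear independence of each block basis $\{\betal{1}, \dots, \betal{n_l}\}$. Setting $\bG = \bT \cdot \GLRS$ then produces a generator matrix of a well-defined $[n,k]_{q^m}$ LRS code satisfying \eqref{eq:zeroConstraints}.

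The main obstacle is the arithmetic case split above: since the individual degree of $P$ in $\betalt$ is bounded by a \emph{sum} of contributions from $P_\bT$ and $P_{\bM_l}$, while the hypothesis only provides $q^m \geq \max\{k q^{k-1}, q^{n_l}\}$, one has to argue that the two one-sided bounds together already dominate that sum in every regime of $k$, $n_l$, and $q$. The deeper technical content of the theorem, namely the non-vanishing of $P_\bT$ asserted in \cref{claim}, is the genuine hard part and is handled separately in \cref{sec:main_result}.
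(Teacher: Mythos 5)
Your proposal is correct and follows essentially the same route as the paper: fix the block representatives using the $q-1\geq\ell$ conjugacy classes, reduce everything to the non-vanishing of $P=P_\bT\cdot\prod_l P_{\bM_l}$ via \cref{claim} and the diagonal monomial of the Moore determinant, bound $\deg_{\betalt}P\leq(k-1)(q-1)q^{k-2}+q^{n_l-1}$, and apply the Combinatorial Nullstellensatz. The only difference is cosmetic: where you run a case split on $n_l$ versus $k$, the paper disposes of the final inequality in one line by writing $q^m=(q-1)q^{m-1}+q^{m-1}$ and letting the two summands dominate $k(q-1)q^{k-2}$ and $q^{n_l-1}$ respectively.
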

\begin{IEEEproof}
  \newinline{Recall that $a_1,\dots,a_{\ell}$ are fixed nonzero elements from distinct $\Frobaut$-conjugacy classes.}
  \cref{claim} has shown that $P(\beta_{1,1},\dots, \beta_{\ell,n_\ell})$ is a nonzero polynomial.
  By the Combinatorial Nullstellensatz \cite[Theorem 1.2]{alon1999combinatorial}\versionShortLong{,}{(see \cref{lem:nullstellensatz}),}
  there exist $\evapt{\beta}_{1,1}, \dots, \evapt{\beta}_{\ell, n_\ell}$ in $\Fqm$ such that  $$P(\evapt{\beta}_{1,1}, \dots, \evapt{\beta}_{\ell, n_\ell})\neq 0$$
  if
  \begin{align}
    q^m &>\max_{l\in[\ell], t\in[n_l]}\{\deg_{\betalt}P\}  \nonumber\\
        &=\max_{l\in[\ell]}\{ (k-1)(q-1)\cdot q^{k-2}+q^{n_l-1}\}\ .\label{eq:condition_on_qm_2}
  \end{align}
  If $m\geq \max_{l\in[\ell]}\{k-1+\log_qk\ ,\ n_l\}$, we have
  \begin{align*}
    q^m =& (q-1)q^{m-1}+q^{m-1}\\
    \geq& \max_{l\in[\ell]}\{k(q-1)\cdot q^{k-2} + q^{n_l-1}\} >\eqref{eq:condition_on_qm_2}\ .
  \end{align*}
  To have $a_1,\dots,a_\ell$ from different nontrivial $\Frobaut$-conjugacy class of $\Fqm$, by \versionShortLong{the structure of $\Frobaut$-conjugacy classes \cite[Theorem 2.12]{FnTsurvey-Umberto}}{\cref{thm:conjugacyClasses}}, we require $q-1\geq \ell$.
\end{IEEEproof}
\begin{remark}
Consider the extreme cases:
\begin{enumerate}
\item
For $\ell=1$, the sum-rank metric is the rank metric and LRS codes are Gabidulin codes.
\item
For $\ell=n$ and $n_l=1,\forall l\in[\ell]$, the sum-rank metric is the Hamming metric. In addition, with $\theta=\mathrm{Id}$, LRS codes are generalized RS codes with distinct nonzero $a_1,\dots,a_\ell$ as code locators and nonzero $\betalt$'s as column multipliers (see \cite[Theorem 2.17]{FnTsurvey-Umberto}, \cite[Table II]{martinez2019reliable}).
\end{enumerate}
For the first case, our result on the field size in \cref{thm:fieldSizeFromGab} coincides with \cite[Theorem 1]{yildiz2019gabidulin}. For the second case, by adapting the setup in \eqref{eq:multiVarRing}-\eqref{eq:defPl} to $\theta=\mathrm{Id}$, and the proof in \versionShortLong{\cite[Appendix B]{full_version}}{ \cref{apendix:fieldSizeSkewPoly}} with the usual evaluation of commutative polynomials, one can obtain the same results as in \cite[Theorem~2]{yildiz2018optimum}.
\end{remark}

  If the necessary and sufficient condition on $\zeroSet_1,\dots,\zeroSet_k$ in \eqref{eq:GM-MDScondition} is not satisfied, we cannot obtain an MSRD code fulfilling the zero constraints. The following result shows that the largest possible sum-rank distance can be achieved given any zero constraints. In fact, the largest sum-rank distance can be achieved by subcodes of LRS codes with large enough field sizes. This result is an analogue to those for MDS codes \cite{yildiz2018optimum} and MSRD codes \cite{yildiz2019gabidulin}.
  In \cite[Theorem 1]{yildiz2018optimum}, the following upper bound on the Hamming distance of a code with support-constrained generator matrix is given
  \begin{align*}
    \dH \leq n-\widetilde{k}+1
  \end{align*}
  where
  \begin{align}
    \label{eq:kSuperCode}
    \widetilde{k} \coloneqq \max_{\varnothing \neq \Omega\subseteq [k]}  \left|\bigcap_{i\in\Omega} \zeroSet_i\right|+|\Omega|.
  \end{align}
  Note that $\widetilde{k}>k$ if the condition on $\zeroSet_1,\dots,\zeroSet_k$ in \eqref{eq:GM-MDScondition} is not satisfied.
  For any ordered partition $\bn_\ell =(n_1, \dots,n_\ell)$ of $n$, according to \cref{lem:sum-rank-Hamming}, we have
  \begin{align}\label{eq:distanceSubcode}
    \dSR{\bn_{\ell}} \leq n-\widetilde{k}+1\ .
  \end{align}
  \begin{theorem}\label{thm:subcode}
    Given zero constraints $\zeroSet_1,\dots,\zeroSet_k\subseteq[n]$, for any prime power $q\geq\ell+1$ and integer $m\geq \max_{l\in[\ell]}\{\widetilde{k}-1+\log_q\widetilde{k}, n_l\}$, there exists a subcode of an $[n,\widetilde{k}]_{q^m}$ linearized Reed-Solomon code with $\ell$ blocks, and each block of length $n_l$, $l\in[\ell]$ such that its generator matrix satisfies \eqref{eq:zeroConstraints}.
  \end{theorem}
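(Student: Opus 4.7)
The plan is to reduce \cref{thm:subcode} to \cref{thm:fieldSizeFromGab} via an extension-and-restriction argument, in analogy with \cite[Theorem 3]{yildiz2018optimum} and the corresponding remark in \cite{yildiz2019gabidulin} for the rank-metric case. The key observation is that both the GM-MDS condition and the construction of LRS generator matrices are purely combinatorial/algebraic in the zero sets, while the minimum sum-rank distance bound in \eqref{eq:distanceSubcode} is achieved whenever we land inside an MSRD code of dimension $\widetilde{k}$.

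First, I would produce an enlarged collection $\widetilde{\zeroSet}_1, \dots, \widetilde{\zeroSet}_{\widetilde{k}} \subseteq [n]$ such that: (i) $\widetilde{\zeroSet}_i \supseteq \zeroSet_i$ for every $i \in [k]$; (ii) $|\widetilde{\zeroSet}_i| = \widetilde{k}-1$ for every $i \in [\widetilde{k}]$; and (iii) the enlarged collection satisfies the GM-MDS condition \eqref{eq:GM-MDScondition} with $k$ replaced by $\widetilde{k}$. The existence of this extension is a purely set-theoretic fact, which follows by the same padding procedure used in \cite{yildiz2018optimum} (and invoked in the Gabidulin setting in \cite[Corollary 3]{yildiz2019gabidulin}): by the very definition of $\widetilde{k}$ in \eqref{eq:kSuperCode}, any intersection that saturates the bound already has size $\widetilde{k}-|\Omega|$, and remaining rows can be padded one at a time with fresh indices of $[n]$ without enlarging any intersection beyond this threshold.

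Next, I would apply \cref{thm:fieldSizeFromGab} to the enlarged system with parameters $(n, \widetilde{k}, \ell, n_1, \dots, n_\ell)$. The hypothesis $q \geq \ell + 1$ and $m \geq \max_{l \in [\ell]}\{\widetilde{k}-1+\log_q \widetilde{k},\, n_l\}$ are exactly those of \cref{thm:fieldSizeFromGab} applied to dimension $\widetilde{k}$, so there exists an $[n,\widetilde{k}]_{q^m}$ LRS code $\widetilde{\cC}$ with generator matrix $\widetilde{\bG} \in \Fqm^{\widetilde{k}\times n}$ satisfying $\widetilde{G}_{ij}=0$ for all $i\in[\widetilde{k}]$ and $j\in \widetilde{\zeroSet}_i$.

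Finally, I would take the $k$-dimensional subcode $\cC \subseteq \widetilde{\cC}$ generated by the first $k$ rows of $\widetilde{\bG}$, i.e., $\bG \coloneqq (\widetilde{\bG})_{[k],:}$. Property (i) gives $\widetilde{\zeroSet}_i \supseteq \zeroSet_i$ for $i\in[k]$, so the zero pattern on the top $k$ rows of $\widetilde{\bG}$ immediately entails $G_{ij}=0$ for all $i\in[k]$ and $j\in\zeroSet_i$, i.e., $\bG$ satisfies the original support constraints \eqref{eq:zeroConstraints}. Since $\cC\subseteq \widetilde{\cC}$ and $\widetilde{\cC}$ is MSRD with minimum sum-rank distance $n-\widetilde{k}+1$, the code $\cC$ automatically achieves the upper bound in \eqref{eq:distanceSubcode}. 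The only technical hurdle is the combinatorial extension in the first step; but because it is metric-agnostic, the argument from \cite{yildiz2018optimum} transfers verbatim, and the rest of the proof is a direct invocation of \cref{thm:fieldSizeFromGab} followed by row-restriction.
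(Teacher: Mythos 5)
Your proposal is correct and follows essentially the same route as the paper: extend the zero-set collection to $\widetilde{k}$ sets satisfying \eqref{eq:GM-MDScondition} at level $\widetilde{k}$, invoke \cref{thm:fieldSizeFromGab} in dimension $\widetilde{k}$, and keep the first $k$ rows of the resulting generator matrix. The only (immaterial) difference is that the paper simply appends empty sets $\zeroSet_{k+1}=\cdots=\zeroSet_{\widetilde{k}}=\varnothing$ and lets the padding to size $\widetilde{k}-1$ happen inside the proof of \cref{thm:fieldSizeFromGab}, whereas you perform that padding explicitly up front.
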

  \begin{IEEEproof}
    Let $\zeroSet_{k+1}=\cdots = \zeroSet_{\widetilde{k}} = \varnothing$. Then for any nonempty $\Omega\subseteq [\widetilde{k}]$, we have
    \begin{align*}
      \left|\bigcap_{i\in\Omega}\zeroSet_i\right|+|\Omega|\leq \widetilde{k}\ .
    \end{align*}
    Then, by \cref{thm:fieldSizeFromGab}, there exists an LRS code of dimension $\widetilde{k}$ with a $\widetilde{k}\times n$ generator matrix $\bG$ having zeros specified by $\zeroSet_1, \dots,\zeroSet_{\widetilde{k}}$. Since it is an MSRD code, its sum-rank distance is $n-\widetilde{k}+1$. The first $k$ rows of $\bG$ will generate a subcode whose sum-rank distance is as good as the LRS codes, i.e., $\dSR{\bn_{\ell}}\geq n-\widetilde{k}+1$, where $\bn_{\ell}=(n_1,\dots, n_\ell)$. Hence, the subcode achieves the largest possible distance given in \eqref{eq:distanceSubcode}.
  \end{IEEEproof}


\section{Application to Distributed Multi-Source Network Coding}
\label{sec:app_network_cod}

  Consider a \emph{distributed multi-source network} as illustrated in \cref{fig:RLNmodel}. The receiver at the sink intends to obtain all the messages in a set $\cM$ by downloading through an $\Fq$-linear network from multiple source nodes, where each of them has access to only a few messages in $\cM$. This source node access is assumed to have unlimited link capacity (e.g., the source nodes store the subset of $\cM$ locally) \newinline{and marked by the dashed line in \cref{fig:RLNmodel}. The capacity\footnote{The number of symbols in $\Fqm$ or the number of vectors in $\Fq^m$ can be sent in one time slot.} of the link between each source node $S_{\cJ_i}$ and the $\Fq$-linear network is $n_{\cJ_i}$, the capacity of a link between any pair of node in the $\Fq$-linear network is $1$, and the capacity of the link between the $\Fq$-linear network and the sink is $N$.}
The topology of the $\Fq$-linear network is not known to the source nodes nor to the sink; therefore, it is a \emph{noncoherent} communication scenario.
This model can find its applications in data sharing platforms, sensor networks, satellite communication networks and MIMO (massive input and output) attenna communication systems, etc.
  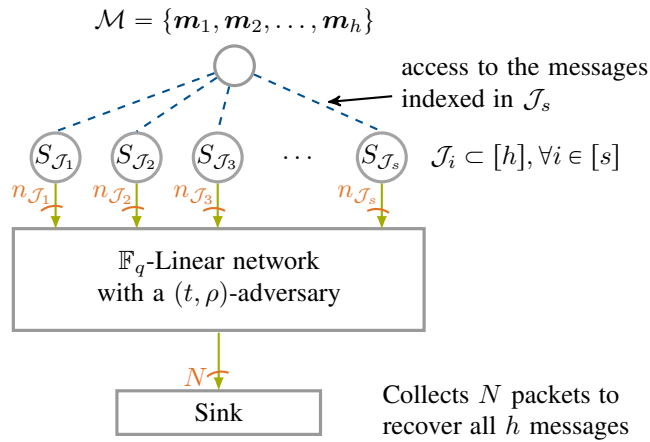
\begin{figure}[h]
    \centering
    \def\x{0.55}

\begin{tikzpicture}
  [font=\normalsize,>=stealth',
  mycircle/.style={circle, draw=TUMGray, very thick, text width=.1em, minimum height=1.5em, text centered},
  myrectangle/.style={rectangle, draw=TUMGray, very thick, text width=\linewidth, minimum height=1.5em, text centered},
  mycircle_small/.style={circle,draw=TUMGray!90,very thick, inner sep=0,minimum size=1em,text centered},
  mylink/.style={color=TUMBlueDark, thick, dashed},
  mylink_net/.style={color=TUMGreen, thick},
  myarrow/.style={->, color=black, thick},
  myarc/.style={color=TUMOrange, thick},
  ]
  \coordinate (Mset) at (0*\x,4*\x);
  {\node[mycircle,label=above:{$\cM=\{\Msym_1,\Msym_2,\dots,\Msym_h\}$}] (Mset) {};}
  \node[mycircle_small,below left = \x*40pt and \x*100pt of Mset] (M0) {$S_{\cJ_1}$};
  \node[mycircle_small,right = \x*20pt of M0] (M1) {$S_{\cJ_2}$};
  \node[mycircle_small,right = \x*20pt of M1] (M2) {$S_{\cJ_3}$};
  \node[draw=none,right = \x*20pt of M2] (M3) {$\dots$};
  \node[mycircle_small,right = \x*20pt of M3] (M4) {$S_{\cJ_s}$};

  \node[draw=none,right= 5pt of M4] (cJ) {$\cJ_i\subset [h], \forall i \in[s]$};
  \node[below left= \x*30pt and \x*10pt of M0] (R0) {};
  \node[below right= \x*100pt and \x*10pt of M4] (R4) {};
  \draw[myrectangle] (R0) rectangle (R4) node[pos=.5, text width=5cm] (RLN) {$\Fq$-Linear network with a $(t,\rho)$-adversary};

  \path[] (Mset) edge[mylink] (M0.north)
  edge[mylink] node [midway] (link1) {} (M1.north)
  edge[mylink] (M2.north)
  edge[mylink] node [midway] (links) {} (M4.north);

  \node[draw=none,above= \x*10pt of cJ, text width=\x*6cm] (T1) {access to the messages indexed in $\cJ_s$};
  \path[] (T1) edge[myarrow] (links);

  \path[-latex] (M0.south) edge[mylink_net] node (S1N) {} ($(M0.south)-(0, \x*32pt)$);
  \path[-latex] (M1.south) edge[mylink_net] node (S2N) {} ($(M1.south)-(0, \x*32pt)$);
  \path[-latex] (M2.south) edge[mylink_net] node (S3N) {} ($(M2.south)-(0, \x*32pt)$);
  \path[-latex] (M4.south) edge[mylink_net] node (SsN) {} ($(M4.south)-(0, \x*32pt)$);

  \draw [myarc] ($(S1N)-(\x*10pt,\x*3pt)$) arc (130:70:\x*15pt);
  \node[draw=none, myarc] at ($(M0.south)-(\x*16pt, \x*8pt)$) (c1) {$n_{\cJ_1}$};
  \draw [myarc] ($(S2N)-(\x*10pt,\x*3pt)$) arc (125:65:\x*15pt);
  \node[draw=none, myarc] at ($(M1.south)-(\x*16pt, \x*8pt)$) (c2) {$n_{\cJ_2}$};
  \draw [myarc] ($(S3N)-(\x*10pt,\x*3pt)$) arc (120:60:\x*15pt);
  \node[draw=none, myarc] at ($(M2.south)-(\x*16pt, \x*8pt)$) (c3) {$n_{\cJ_3}$};
  \draw [myarc] ($(SsN)-(\x*10pt,\x*3pt)$) arc (115:55:\x*15pt);
  \node[draw=none, myarc] at ($(M4.south)-(\x*16pt, \x*8pt)$) (cJ) {$n_{\cJ_s}$};

  \node[draw=none] at ($(RLN.south)-(-\x*70pt,\x*50pt)$) (sink1) {};
  \node[draw=none] at ($(RLN.south)-(\x*70pt,\x*80pt)$) (sink2) {};
  \draw[myrectangle] (sink1) rectangle (sink2) node[pos=.5] (sink) {Sink};
  \node[right=-\x*13cm of sink,text width=4cm] (recover) {Collects $N$ packets to recover all $h$ messages};

  \path[-latex] ($(RLN.south)-(0, \x*10pt)$) edge[mylink_net] (sink.north);
  \draw [myarc] ($(sink.north)-(\x*8pt,-\x*10pt)$) arc (130:70:\x*15pt);
  \node[draw=none, myarc] at ($(sink.north)-(\x*16pt, -\x*10pt)$) (sinkN) {$N$};




\end{tikzpicture}

    \caption{Illustration of the distributed multi-source network model.}
    \label{fig:RLNmodel}
  \end{figure}

  The set $\cM$ contains $h$ messages. The message $\bm_j, j\in[h]$ is composed of $r_j$ symbols over $\Fqm$, i.e., $\bm_j\in\Fqm^{r_j}$.
  The source node $S_{\cJ_i}, i\in[s]$ has access only to the messages indexed in $\cJ_i$, e.g., if $\cJ_2=\{3,6\}$, then $S_{\cJ_2}$ only has the access to the messages $\bm_3$ and $\bm_6$.
    Let $\cS=\{\cJ_1,\dots, \cJ_s\}$. For any $\cJ\in\cS$, the source node $S_{\cJ}$ encodes the messages $\bm_j, j\in\cJ$, into $n_{\cJ}$ symbols over $\Fqm$, denoted by $\bc_{\cJ}\in\Fqm^{n_\cJ}$. It then extends them to their matrix representation over $\Fq$, denoted by $\bC_{\cJ}\in\Fq^{m\times n_{\cJ}}$, and then generates $\bX_{\cJ}= (\0\ \cdots \ \bI_{n_{\cJ}}\ \0 \ \cdots\ \bC_{\cJ}^{\top})\in\Fq^{n_{\cJ}\times (n+m)}$, where $n=\sum_{\cJ\in\cS}n_{\cJ}$. We call each row of $\bX_{\cJ}$ a \emph{packet}.
  Denote $\bX=
  \begin{pmatrix}
    \bX_{\cJ_1}\\
    \vdots\\
    \bX_{\cJ_s}
  \end{pmatrix}\in\Fq^{n\times (n+m)}
  $, where the rows are the packets transmitted by all the source nodes into the $\Fq$-linear network.
  The task is to design $n_\cJ$ for all $\cJ\in\cS$ such that the sink can recover all the messages $\bm_i$. The goal of the design is that the total number of packets $n$ is minimized.
  A concrete example can be found in \cref{sec:ex-dist-LRS}.

  In the $\Fq$-linear network, whenever there is a transmission opportunity, a relay node in the network produces and sends an arbitrary $\Fq$-linear combination of all the incoming packets they have received.
  Suppose that there are at most $t$ \emph{malicious} nodes that inject erroneous packets and at most $\rho$ \emph{frozen} nodes that do not send any packet, which we refer as a \emph{$(t,\rho)$-adversary}.
 The sink collects $N\geq n-\rho$ packets, which are represented by the rows of $\bY\in\Fq^{N\times (n+m)}$.
 The transmitted packets (rows of $\bX$) and the received packets (rows of $\bY$) can be related via the following network equation:
  \begin{align}
    \label{eq:net-equation}
    \bY=\bA\bX+\bE
  \end{align}
  where $\bA\in\Fq^{N\times n}$ is the \emph{transfer} matrix of the network and the difference between the number of columns and its row-rank is at most $\rho$. In other words, $n-\rank(\bA)\leq \rho$.
  $\bE\in\Fq^{N\times M}$ is an error matrix of $\rank(\bE)\leq t$.
  Note that the matrices $\bA$ and $\bE$ are not known to any of the source nodes or the sink since we consider a noncoherent communication scenario.

    The capacity region of a multi-source network with $h$ messages is a set $\{(r_1,\dots, r_h)\}\subseteq\bbN^h$ such that the receiver at the sink can recover \newinline{every} message $\bm_j\in\Fqm^{r_j}, j\in[h]$.
    The capacity region of a multi-source network against a $(t,\rho)$-adversary has been given in \cite[Theorem 2]{dikaliotis2011multiple} (for $\rho=0$) and \cite[Corollary 66]{ravagnani2018adversarial}.
    To present the result, we require the following definitions of \emph{min-cut}.
    \begin{definition}[Min-cut between a set of nodes and another node]
      For a directed graph $\cG(\cV,\cE)$ composed of a set of nodes $\cV$ and a set of edges $\cE$,
      a \emph{cut} between a set of nodes $\cV'\subset\cV$ and another node $t\in\cV\setminus\cV'$ is a subset $\cE_{\cV',t}\subseteq\cE$ such that after removing the edges in $\cE_{\cV',t}$, there is no path from any of the nodes in $\cV'$ to $t$. \newinline{The \emph{capacility of a cut} is the sum of the capacility of each edge in the cut.} The \emph{min-cut} between $\cV'$ and $t$ is the smallest cardinality of a cut between $\cV'$ and $t$.
    \end{definition}
    \begin{definition}[Min-cut between a subset of messages and the sink]
        Consider the multi-source network with $h$ messages, as above. Given a subset of messages, $\cJ'\subseteq[h]$, consider the set of source nodes $\cV'$ that contain messages in $\cJ'$, namely,
        \[
        \cV' = \{ S_\cJ \in \cS ~:~ \cJ\cap\cJ'\neq\emptyset\}.
        \]
        We define the \emph{min-cut between $\cJ'$ and the sink} as the min-cut between $\cV'$ and the sink, and denote it by $\mincut_{\cJ'}$.
    \end{definition}
    \begin{theorem}[\hspace{1pt}\cite{dikaliotis2011multiple, ravagnani2018adversarial}]
      \label{thm:capacity-distributed-network}
      Consider a multi-source network with $h$ messages. For any $(r_1,\dots, r_h)\in\bbN^h$ in the capacity region against a $(t,\rho)$-adversary, we have
    \begin{align}
      \label{eq:multi-source-capacity}
      \forall \varnothing \neq \cJ'\subseteq [h],\ \sum_{i\in\cJ'} r_i\leq \mincut_{\cJ'} - 2t-\rho\ ,
    \end{align}
    where $\mincut_{\cJ'}$ is the min-cut between the set ${\cJ'}$ of messages and the sink.
    \end{theorem}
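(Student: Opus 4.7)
The plan is to reduce the multi-source cut-set bound to a single-source adversarial network-coding Singleton bound of Silva--K\"otter--Kschischang type, which underlies the cited results \cite{dikaliotis2011multiple, ravagnani2018adversarial}. Fix an arbitrary nonempty $\cJ'\subseteq [h]$ and let $\cV'=\{S_\cJ\in\cS : \cJ\cap\cJ'\neq\varnothing\}$ be the set of source nodes with access to at least one message in $\cJ'$. By the definition of $\mincut_{\cJ'}$, there is a cut $\cE^*$ separating $\cV'$ from the sink whose total edge capacity is exactly $w\coloneqq \mincut_{\cJ'}$, so at most $w$ packets per time unit can traverse $\cE^*$.

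The first step is to observe that every received packet at the sink that carries information about any $\bm_j$ with $j\in\cJ'$ must originate at a node in $\cV'$, simply because no other source node has access to these messages. Hence, after a standard genie-argument that hands the sink (i) all outputs of source nodes outside $\cV'$ and (ii) the full transfer matrix $\bA$ of the $\Fq$-linear network, the decoding problem for $\{\bm_j\}_{j\in\cJ'}$ collapses to a \emph{single-source} noncoherent network-coding problem in which the super-source encodes $\sum_{j\in\cJ'} r_j$ message symbols over $\Fqm$ and sends them to the sink through a network of min-cut exactly $w$, subject to a $(t,\rho)$-adversary injecting at most rank-$t$ errors and $\rho$ erasures into the received matrix via the equation $\bY=\bA\bX+\bE$ of \eqref{eq:net-equation}. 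This genie only enlarges the capacity region, so the implied upper bound on the original rates remains valid.

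The second step is to invoke the single-source rank-metric Singleton bound: in such a noncoherent channel the maximum reliable rate is $w-2t-\rho$, because correcting $t$ rank errors and $\rho$ rank erasures in the matrix/operator channel requires $2t+\rho$ redundant dimensions in the sum-rank/rank metric. Substituting gives $\sum_{i\in\cJ'} r_i\leq \mincut_{\cJ'}-2t-\rho$, as claimed. The main obstacle is the formal genie-reduction in the noncoherent multi-source setting: one must argue that revealing the outputs of source nodes outside $\cV'$ and the transfer matrix does not reduce the achievable rate for $\{\bm_j\}_{j\in\cJ'}$, and that the cut $\cE^*$ chosen for the fixed subset $\cJ'$ remains the bottleneck after the reduction. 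This is the information-theoretic heart of the argument and is carried out in \cite{ravagnani2018adversarial} via an entropy/rank-distance inequality on the received matrix $\bY$, which I would adapt verbatim rather than rederive.
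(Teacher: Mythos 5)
The paper does not prove \cref{thm:capacity-distributed-network} at all: it is imported verbatim from \cite{dikaliotis2011multiple} (for $\rho=0$) and \cite{ravagnani2018adversarial}, so there is no in-paper argument to compare yours against. Your outline --- fix $\cJ'$, take a minimum cut $\cE^*$ of capacity $\mincut_{\cJ'}$ separating the sources $\cV'$ that see $\cJ'$ from the sink, hand the sink the remaining sources' outputs and the transfer matrix via a genie, and then apply the single-source adversarial Singleton bound $w-2t-\rho$ across that cut --- is the standard route and is consistent with how the cited works obtain the bound. Be aware, though, that what you have written is a sketch rather than a proof: the step you yourself flag as the ``information-theoretic heart'' (that the genie reduction preserves the cut $\cE^*$ as the bottleneck in the noncoherent, adversarial, multi-source setting, and the entropy/rank inequality that turns $t$ malicious and $\rho$ frozen \emph{nodes} into a $2t+\rho$ capacity loss on the cut) is exactly the content of the cited theorems, so deferring it to \cite{ravagnani2018adversarial} means you have reproduced the statement's provenance rather than supplied an independent proof. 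Since the paper treats this as a black-box citation, that is acceptable here, but it should be presented as such.
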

  In addition to the general setting, we further assume the following setup of the noncoherent network:
  \begin{itemize}
  \item
    The communication capacity of the non-coherent linear network is large enough so that the min-cut $\mincut_{\cJ'}$ for all $\cJ'\subseteq[h]$ is determined by the number of encode symbols $n_{\cJ}$ sent by the source node $S_{\cJ}$ for all $\cJ\in\cS$. I.e.,
      \begin{align*}
        \mincut_{\cJ'}= n - \sum_{\substack{\cJ\in\cS\\\cJ\subseteq [h]\setminus\cJ'}}n_{\cJ}\ .
      \end{align*}
      Note that the term $\sum_{\substack{\cJ\in\cS\\\cJ\subseteq [h]\setminus\cJ'}} n_{\cJ}$ is the total number of encoded symbols that do not contain any information about the messages in $\cJ'$.
  \item Although the encoding is distributed (since each source node may access only a few messages), there is a centralized coordination unit designing the overall code, and the sink knows the distributed code.
  \end{itemize}

  \subsection{Sum-Rank Weight of Error and Erasure with Constrained Rank Weight}
  \label{sec:sum-rank-weight-rank-error}
  In the following, we intend to use LRS codes for the distributed multi-source linear network model.
  Note that the errors and erasures in the $(t,\rho)$-adversarial model are measured in the rank metric. However, LRS codes are used to deal with errors and erasures in the sum-rank metric.
  Hence, we first look into the sum-rank deficiency of the network transfer matrix $\bA\in\Fq^{ N\times n}$ and the sum-rank weight of the error matrix $\bE\in\Fq^{N\times M}$.

  Let $\ell\in\bbN$ and $\bn_\ell=(n_1,\dots, n_\ell)$ be an ordered partition of $n$.
  By \cref{lem:rank-leq-sumrank-mat}, we have
  \begin{align}
    \label{eq:sumrank-weight-erasure}
    \wtSR{\bn_\ell}(\bA)\geq \rank(\bA)\geq n-\rho\ .
  \end{align}
  Hence the sum-rank weight of the erasure induced by the rank-deficient $\bA$ is at most $\rho$.

  For the error $\bE$, consider an ordered partition $\bN_{\ell}=(N_1,\dots, N_\ell)$ of $N$ such that

  \pgfkeys{tikz/mymatrixenv/.style={decoration={brace},every left delimiter/.style={xshift=5pt, yshift=-1pt},every right delimiter/.style={xshift=-5pt, yshift=-1pt}}}

\pgfkeys{tikz/mymatrix/.style={matrix of math nodes,nodes in empty cells,left delimiter={(},right delimiter={)},inner sep=2pt,outer sep=2pt,column sep=4pt,row sep=4pt,nodes={minimum width=3pt,minimum height=2pt,anchor=center,inner sep=0pt,outer sep=0pt}}}

\pgfkeys{tikz/mymatrixbrace/.style={decorate,thick}}

\tikzset{style green/.style={
    set fill color=TUMGreenDark!80!lime!20,fill opacity=0.3,
    set border color=TUMGreenDark!60!lime!40,draw opacity=1.0,
  },
  style cyan/.style={
    set fill color=cyan!90!blue!60, draw opacity=0.4,
    set border color=blue!70!cyan!30,fill opacity=0.1,
  },
  style orange/.style={
    set fill color=TUMOrange!80,fill opacity=0.3,
    set border color=TUMOrange!90,  draw opacity=0.8,
  },
  style brown/.style={
    set fill color=brown!70!orange!40, draw opacity=0.4,
    set border color=brown, fill opacity=0.3,
  },
  style purple/.style={
    set fill color=violet!90!pink!20, draw opacity=0.5,
    set border color=violet, fill opacity=0.3,
  },
  style black/.style={
    set fill color=none, fill opacity=0.3,
    set border color=black, draw opacity=0.5,
  },
  kwad/.style={
    above left offset={-0.07,0.23},
    below right offset={0.07,-0.23},
    #1
  },
  pion/.style={
    above left offset={-0.07,0.2},
    below right offset={0.07,-0.32},
    #1
  },
  border/.style={
    above left offset={-0.03,0.18},
    below right offset={0.03,-0.3},
    #1
  },
  set fill color/.code={\pgfkeysalso{fill=#1}},
  set border color/.style={draw=#1}
}

\[  
  \begin{tikzpicture}[baseline={-0.5ex},mymatrixenv]
    \matrix [mymatrix,inner sep=4pt, row sep=6pt] (A)
    {
      \bA_{1,1} & \bA_{1,2} & \cdots & \bA_{1,\ell} \\
      \bA_{2,1} & \bA_{2,2} & \cdots & \bA_{2,\ell} \\
      \vdots & \vdots & \vdots & \vdots \\
      \bA_{\ell,1} & \bA_{\ell,2} & \cdots & \bA_{\ell,\ell} \\
    };
    \matrix [right= 8pt of A, mymatrix, inner sep=4pt] (X)
    {
      \bX_{1} \\
      \bX_{2} \\
      \vdots \\
      \bX_{\ell} \\
    };
    \node[right =30pt of X] (plus) {$+$};

    \matrix [mymatrix, right = 5pt of plus, row sep=8pt, column sep=6pt] (E)
    {
      \bE_1\\
      \bE_2\\
      \vdots\\
      \bE_{\ell}\\
    };
    \node[left =30pt of A] (eq) {$\bA\bX+\bE =$};
        \mymatrixbracebottom{A}{1}{1}{$n_1$}
        \mymatrixbracebottom{A}{2}{2}{$n_2$}
        \mymatrixbracebottom{A}{4}{4}{$n_{\ell}$}
        \mymatrixbraceleft{A}{1}{1}{$N_1$}
        \mymatrixbraceleft{A}{2}{2}{$N_2$}
        \mymatrixbraceleft{A}{4}{4}{$N_{\ell}$}

        \mymatrixbraceright{X}{1}{1}{$n_1$}
        \mymatrixbraceright{X}{2}{2}{$n_2$}
        \mymatrixbraceright{X}{4}{4}{$n_4$}
        \mymatrixbracebottom{X}{1}{1}{$M$}

        \mymatrixbracebottom{E}{1}{1}{$M$}
        \mymatrixbraceright{E}{1}{1}{$N_1$}
        \mymatrixbraceright{E}{2}{2}{$N_2$}
        \mymatrixbraceright{E}{4}{4}{$N_{\ell}$}

    \end{tikzpicture}
\]

  Given $\rank(\bE)\leq t$, by \cref{lem:rank-leq-sumrank-mat}, we have
  \begin{align}
    \label{eq:sumrank-weight-E}
    \wtSR{\bN_\ell}(\bE) = \sum_{i=1}^{\ell} \rank(\bE_i)
    \leq \sum_{i=1}^{\ell} \rank(\bE) = \ell t\ .
  \end{align}
  This upper bound holds for any arbitrary $\ell$-ordered partition $\bN_{\ell}$ of $N$.
    A lower bound on $\Pr[\wtSR{\bN_\ell}(\bE)=\ell t\ |\ \rank(\bE)=t]$ (i.e., the probability that \eqref{eq:sumrank-weight-E} is tight) for small $t$ ($t\leq N_i, \forall i\in[\ell]$) is given in \cite[Theorem 1]{couvee2023notes}. In particular, if $q\geq \ell+1$, $\Pr[\wtSR{\bN_\ell}(\bE)=\ell t\ |\ \rank(\bE)=t]>1/4$ \cite[Corollary 1]{couvee2023notes}.

  It can been seen from \eqref{eq:sumrank-weight-erasure} and \eqref{eq:sumrank-weight-E} that the network model in \eqref{eq:net-equation} results in an erasure of sum-rank weight at most $\rho$ and an error of sum-rank weight at most $\ell t$.
  It has been shown in \cite[Theorem 1, Eq.(4), Proposition 2]{martinez2019reliable} that a code with sum-rank distance $d$ can guarantee reliable communication against errors of sum-rank weight at most $\ell t$ and erasures with sum-rank weight at most $\rho$ in the noncoherent communication if $d \geq 2\ell t+\rho+1$.
  Therefore, an LRS code with sum-rank distance $d\geq 2\ell t+\rho+1$ can correct any error of rank weight at most $t$ and erasure of rank weight at most $\rho$. 

  \subsection{Example of Distributed LRS codes}
  \label{sec:ex-dist-LRS}
  We first give a toy example to show the usage of LRS codes for a distributed multi-source network. In \cref{sec:distributed-LRS} we provide the general scheme to design the LRS codes for an arbitrary distributed multi-source network.

  Suppose that there are $h=4$ messages in $\cM$. The lengths of messages are $(r_1, r_2, r_3, r_4)=(1,3,2,3)$. There are $4$ source nodes and each can access to only $3$ messages, i.e., $\cJ_1= \{1,2,3\}, \cJ_2= \{1,2,4\}, \cJ_3 = \{1,3,4\}, \cJ_4 = \{2,3,4\}$.
  Suppose there is a $(t=2,\rho=2)$-adversary in the $\Fq$-linear network.

  The number of encoded packets from each source node is $(n_{\cJ_1}, n_{\cJ_2}, n_{\cJ_3}, n_{\cJ_4})= (6,7,2,8)$ (see in \cref{sec:distributed-LRS} how these values are obtained) and $n=\sum_{i=1}^4 n_{\cJ_i} = 23$.
  Let $\bm=(\bm_1,\bm_2,\bm_3,\bm_4)$ be a concatenated vector of all messages. Some entries in a encoding matrix $\bG$ are forced to be $0$, as shown in \cref{fig:eg-support-constrained-LRS}, so that $\bm\cdot \bG$ represents the overall encoding at all source nodes.
  \begin{figure}[h]
    \centering
    \pgfkeys{tikz/mymatrixenv/.style={decoration={brace},every left delimiter/.style={xshift=5pt, yshift=-1pt},every right delimiter/.style={xshift=-5pt, yshift=-1pt}}}

\pgfkeys{tikz/mymatrix/.style={matrix of math nodes,nodes in empty cells,left delimiter={(},right delimiter={)},inner sep=1pt,outer sep=3pt,column sep=4pt,row sep=4pt,nodes={minimum width=9pt,minimum height=6pt,anchor=center,inner sep=0pt,outer sep=0pt}}}

\pgfkeys{tikz/mymatrixbrace/.style={decorate,thick}}

\tikzset{style green/.style={
    set fill color=TUMGreenDark!80!lime!20,fill opacity=0.3,
    set border color=TUMGreenDark!60!lime!40,draw opacity=1.0,
  },
  style cyan/.style={
    set fill color=cyan!90!blue!60, draw opacity=0.4,
    set border color=blue!70!cyan!30,fill opacity=0.1,
  },
  style orange/.style={
    set fill color=TUMOrange!80,fill opacity=0.3,
    set border color=TUMOrange!90,  draw opacity=0.8,
  },
  style brown/.style={
    set fill color=brown!70!orange!40, draw opacity=0.4,
    set border color=brown, fill opacity=0.3,
  },
  style purple/.style={
    set fill color=violet!90!pink!20, draw opacity=0.5,
    set border color=violet, fill opacity=0.3,
  },
  kwad/.style={
    above left offset={-0.07,0.23},
    below right offset={0.07,-0.23},
    #1
  },
  pion/.style={
    above left offset={-0.07,0.2},
    below right offset={0.07,-0.32},
    #1
  },
  poz/.style={
    above left offset={-0.03,0.18},
    below right offset={0.03,-0.3},
    #1
  },set fill color/.code={\pgfkeysalso{fill=#1}},
  set border color/.style={draw=#1}
}

\[
   \bG =
    \begin{tikzpicture}[baseline={-0.5ex},mymatrixenv]
        \matrix [mymatrix,inner sep=2pt] (G)
        {
          \tikzmarkin[kwad=style green]{LRS-first} \times  &  \times &  \times&  \times&  \times&  \times&
          \hphantom{\times} &  \times&  \times& \tikzmarkin[kwad=style purple]{LRS-second} \times &  \times&  \times&  \times & \times&
          \hphantom{\times} &  \times&  \times&
          \tikzmarkin[kwad=style orange]{LRS-third} \hphantom{\times} & \textcolor{TUMRed}{0} & \textcolor{TUMRed}{0} & \textcolor{TUMRed}{0} & \textcolor{TUMRed}{0} & \textcolor{TUMRed}{0} & \textcolor{TUMRed}{0} & \textcolor{TUMRed}{0} & \textcolor{TUMRed}{0}\\
    \times&  \times& \times&  \times& \times&  \times&  \hphantom{\times} &
    \times&  \times& \times&  \times& \times&  \times&  \times & \hphantom{\times} &
    \textcolor{TUMRed}{0} & \textcolor{TUMRed}{0} & \hphantom{\times}&
    \times&  \times& \times&  \times& \times & \times&  \times& \times\\
    \times&  \times& \times&  \times& \times&  \times&  \hphantom{\times} &
    \times&  \times& \times&  \times& \times&  \times&  \times & \hphantom{\times} &
    \textcolor{TUMRed}{0} & \textcolor{TUMRed}{0} & \hphantom{\times}&
    \times&  \times& \times&  \times& \times & \times&  \times& \times\\
    \times&  \times& \times&  \times& \times&  \times&  \hphantom{\times} &
    \times&  \times& \times&  \times& \times&  \times&  \times & \hphantom{\times} &
    \textcolor{TUMRed}{0} & \textcolor{TUMRed}{0} & \hphantom{\times}&
    \times&  \times& \times&  \times& \times & \times&  \times& \times\\
    \times&  \times& \times&  \times& \times&  \times&  \hphantom{\times} &
    \textcolor{TUMRed}{0} & \textcolor{TUMRed}{0} & \textcolor{TUMRed}{0} & \textcolor{TUMRed}{0} & \textcolor{TUMRed}{0} & \textcolor{TUMRed}{0} & \textcolor{TUMRed}{0} &  \hphantom{\times} &
    \times&  \times& \hphantom{\times} &
    \times&  \times& \times&  \times& \times&  \times&  \times&  \times\\
    \times&  \times& \times&  \times& \times&  \times&  \hphantom{\times} &
    \textcolor{TUMRed}{0} & \textcolor{TUMRed}{0} & \textcolor{TUMRed}{0} & \textcolor{TUMRed}{0} & \textcolor{TUMRed}{0} & \textcolor{TUMRed}{0} & \textcolor{TUMRed}{0} &  \hphantom{\times} &
    \times&  \times& \hphantom{\times} &
    \times&  \times& \times&  \times& \times&  \times&  \times&  \times\\
    \textcolor{TUMRed}{0}  & \textcolor{TUMRed}{0} & \textcolor{TUMRed}{0} & \textcolor{TUMRed}{0} & \textcolor{TUMRed}{0} & \textcolor{TUMRed}{0} & \hphantom{\times} &
    \times & \times & \times & \times & \times & \times & \times &\hphantom{\times} &
    \times & \times &\hphantom{\times} &
    \times  & \times & \times & \times & \times & \times & \times & \times \\
    \textcolor{TUMRed}{0}  & \textcolor{TUMRed}{0} & \textcolor{TUMRed}{0} & \textcolor{TUMRed}{0} & \textcolor{TUMRed}{0} & \textcolor{TUMRed}{0} & \hphantom{\times} &
    \times & \times & \times & \times & \times & \times & \times &\hphantom{\times} &
    \times & \times &\hphantom{\times} &
    \times  & \times & \times & \times & \times & \times & \times & \times \\
    \textcolor{TUMRed}{0}  & \textcolor{TUMRed}{0} & \textcolor{TUMRed}{0} & \textcolor{TUMRed}{0} & \textcolor{TUMRed}{0} & \textcolor{TUMRed}{0} & \hphantom{\times}&
    \times & \times \tikzmarkend{LRS-first} & \times & \times & \times & \times & \times &\hphantom{\times} &
    \times & \times \tikzmarkend{LRS-second}&\hphantom{\times} &
    \times  & \times & \times & \times & \times & \times & \times & \times \tikzmarkend{LRS-third} \\
    };

    \mymatrixbraceright{G}{1}{1}{Encoding for $\bm_1$}
    \mymatrixbraceright{G}{2}{4}{for $\bm_2$}
    \mymatrixbraceright{G}{5}{6}{for $\bm_3$}
    \mymatrixbraceright{G}{7}{9}{for $\bm_4$}
    \mymatrixbracetop{G}{1}{6}{Encoding at $S_{\cJ_1}$}
    \mymatrixbracetop{G}{8}{14}{at $S_{\cJ_2}$}
    \mymatrixbracetop{G}{16}{17}{at $S_{\cJ_3}$}
    \mymatrixbracetop{G}{19}{26}{at $S_{\cJ_4}$}
    \mymatrixbracebottom{G}{1}{9}{First block of LRS code}
    \mymatrixbracebottom{G}{10}{17}{Second block}
    \mymatrixbracebottom{G}{18}{26}{Third block}
  \end{tikzpicture}
\]
    \caption{Illustration of the support-constrained generator matrix of the $[23,9,15]_{\newinline{4^9}}$ LRS code for the distributed multi-source network.}
    \label{fig:eg-support-constrained-LRS}
  \end{figure}

  Let $q=4, m=9$. We can obtain the support-constrained encoding matrix $\bG$ from a generator matrix of a $[23, 9, 15]_{4^9}$ LRS code with $\ell=3$ blocks. The lengths of blocks are $(n_1, n_2, n_3)=(8,7,8)$.
  Let $\priEle$ be a primitive element of $\F_{4^{9}}$. The block representatives of the LRS code are $(a_1, a_2,a_3)=(1,\priEle,\priEle^2)$ and the column multipliers are $\bb=( 1,\priEle,\dots,\priEle^7, \priEle, \priEle^2,\dots, \priEle^7,\priEle^2,$ $ \priEle^3,\dots, \priEle^9)$.
  Construct a generator matrix $\GLRS$ of the LRS code according to \eqref{eq:Gevaluation} and find a full-rank matrix $\bT\in\F_{4^9}^{9\times 9}$ such that the support-constrained encoding matrix $\bG$ is given by $\bG=\bT\cdot \GLRS$.
  It can be verified by \cref{thm:fieldSizeFromGab} that such a matrix $\bT$ exists over $\F_{4^9}$ and it can be found by solving a linear system of equations.
  For brevity, we omit the explicit solution of $\bT$ here.
\begin{remark}
  With the choice of $\ell$ and $(n_1,\dots,n_\ell)$ for this toy LRS code we intend to show that the number of blocks $\ell$ does not need to be the same as the number of source nodes $s$.
  The value of $\ell$ determines the upper bound in \eqref{eq:sumrank-weight-E} on the sum-rank weight of $\bE$.
  We listed several other parameters of the LRS codes in \cref{tab:ell_grows} that can be used for this network example.
  It can be seen larger the $\ell$ is, larger the error-correction capability required for the LRS code, which results in a larger total length $n$.
  However, larger $\ell$ may result in smaller field size. For instance, the messages $\bm_i$'s are over $\F_{3^{11}}$. According to \cref{tab:ell_grows}, setting $\ell=1$ (i.e., using a distribued Gabidulin code \cite{halbawi2014distributedGab}) requires a field size $q^m=3^{15}$ while using the distributed LRS codes with $\ell=2$ requires a field size $q^m=3^{11}$ (note that the field size of the messages is $3^{11}$).

  \begin{table}[h!]
      \centering
      \begin{tabular}{c|c|c|c|c|c}
      $\ell$ & $q$ & $m$ & $[n,\widetilde{k}, d= 2\ell t+\rho+1]$ & $(n_1,\dots, n_{\ell})$ & $(n_{\cJ_1},n_{\cJ_2},n_{\cJ_3},n_{\cJ_4})$
      \\
      \hline
      $1$ (distributed Gabidulin code) & $2$ & $15$ & $[15, 9, 7]$ & $(15)$ & $(6,1,0,8)$
      \\
      \hline
          $2$ & $3$ & $10$ & $[19, 9 , 11]$ & $(10, 9)$ & $(6,5,0,8)$
      \\
      \hline
      $3$ (\cref{fig:eg-support-constrained-LRS}) & $4$ & $9$ & $[23,9,15]$ & $(8,7,8)$ & $(6,7,2,8)$\\
      \hline
      $4$ & $5$ & $9$ & $[27, 9, 19]$ & $(7,7,7,6)$ & $(6,7,6,8)$\\
      \hline
      $5$ & $7$ & $11$ & $[33, 11, 23]$ & $(7,6,7,7,6)$ & $(8,9,6,10)$\\
      \hline
      $6$ & $7$ & $12$ & $[38, 12, 27]$ & $(7,6,6,7,6,6)$ & $(9,10,8,11)$\\
      \hline
      $7$ & $8$ & $13$ & $[43, 13, 31]$ & $(6,6,6,7,6,6,6)$ & $(10,11,10,12)$\\
    \end{tabular}
      \caption{Resulting parameters of distributed LRS codes for the toy example while increasing $\ell$. The $q$ and $m$ are the minimal value of the required parameters of the field over which the $[n,k,d]$ distributed LRS code can be constructed.}
      \label{tab:ell_grows}
  \end{table}

    In \cref{tab:s_changes}, we list the parameters of LRS codes for several different $\cS=\{\cJ_1,\cJ_2,\cJ_3,\cJ_4\}$. It can be seen that encoding each message independently requires longer code (hence, larger alphabet size) than allowing jointly encoded subsets of all messages.
    \begin{table}[h!]
      \centering
      \begin{tabular}{c|c|c|c|c|c|c}
      $\cS$ &$\ell$ & $q$ & $m$ & $[n,\widetilde{k}, d= 2\ell t+\rho+1]$ & $(n_1,\dots, n_{\ell})$ & $(n_{\cJ_1},n_{\cJ_2},n_{\cJ_3},n_{\cJ_4})$
      \\
      \hline
      \multirow{3}{*}{$\{\{1\},\{2\},\{3\},\{4\}\}$} &$1$& $2$ & $33$ & $[33, 27, 7]$ & $(33)$ & $(7,9,8,9)$\\
      \cline{2-7}
       &$2$& $3$ & $39$ & $[49, 39, 11]$ & $(25,24)$ & $(11,13,12,13)$\\
      \cline{2-7}
       &$3$& $4$ & $51$ & $[65, 51, 15]$ & $(22,22,21)$ & $(15,17,16,17)$\\
      \hline
      \multirow{3}{*}{$\{\{1,2\},\{1,3\},\{2,4\},\{3,4\}\}$} & $1$ & $2$ & $17$ & $[17, 11, 7]$ & $(17)$ & $(6,1,3,7)$\\
      \cline{2-7}
       & $2$ & $3$ & $15$ & $[25, 15, 11]$ & $(13,12)$ & $(10,1,3,11)$\\
      \cline{2-7}
       & $3$ & $4$ & $19$ & $[33, 19, 15]$ & $(11,11,11)$ & $(14,1,3,15)$\\
    \end{tabular}
      \caption{Resulting parameters of distributed LRS codes for the toy example while changing $\cS$. }
      \label{tab:s_changes}
  \end{table}
  \end{remark}
  Now we proceed to apply the \emph{lifting} technique \cite{silva2008rank} to deal with the noncoherent situation.
  Supposing $(\bc_{\cJ_1},\bc_{\cJ_2},\bc_{\cJ_3},\bc_{\cJ_4})= (\bm_1,\bm_2,\bm_3,\bm_4)\cdot \bG$. Each source node $S_{\cJ_i}$ generates $\bC_{\cJ_i}=\extbasis{\bbeta}(\bc_{\cJ_i})\in\Fq^{m\times n_{\cJ_i} }$ by the map defined in \eqref{eq:ext-map} and lifts the $\bC_{\cJ_i}^\top$ by adding the identity and zero matrices as below to obtain the transmitted packets (rows of $\bX$):

  \pgfkeys{tikz/mymatrixenv/.style={decoration={brace},every left delimiter/.style={xshift=5pt, yshift=-1pt},every right delimiter/.style={xshift=-5pt, yshift=-1pt}}}

\pgfkeys{tikz/mymatrix/.style={matrix of math nodes,nodes in empty cells,left delimiter={(},right delimiter={)},inner sep=2pt,outer sep=2pt,column sep=4pt,row sep=4pt,nodes={minimum width=3pt,minimum height=12pt,anchor=center,inner sep=0pt,outer sep=0pt}}}

\pgfkeys{tikz/mymatrixbrace/.style={decorate,thick}}

\tikzset{style green/.style={
    set fill color=TUMGreenDark!80!lime!20,fill opacity=0.3,
    set border color=TUMGreenDark!60!lime!40,draw opacity=1.0,
  },
  style cyan/.style={
    set fill color=cyan!90!blue!60, draw opacity=0.4,
    set border color=blue!70!cyan!30,fill opacity=0.1,
  },
  style orange/.style={
    set fill color=TUMOrange!80,fill opacity=0.3,
    set border color=TUMOrange!90,  draw opacity=0.8,
  },
  style brown/.style={
    set fill color=brown!70!orange!40, draw opacity=0.4,
    set border color=brown, fill opacity=0.3,
  },
  style purple/.style={
    set fill color=violet!90!pink!20, draw opacity=0.5,
    set border color=violet, fill opacity=0.3,
  },
  style black/.style={
    set fill color=none, fill opacity=0.3,
    set border color=black, draw opacity=0.5,
  },
  kwad/.style={
    above left offset={-0.07,0.23},
    below right offset={0.07,-0.23},
    #1
  },
  pion/.style={
    above left offset={-0.07,0.2},
    below right offset={0.07,-0.32},
    #1
  },
  border/.style={
    above left offset={-0.03,0.18},
    below right offset={0.03,-0.3},
    #1
  },
  set fill color/.code={\pgfkeysalso{fill=#1}},
  set border color/.style={draw=#1}
}

\[
  \bX =
  \begin{tikzpicture}[baseline={-0.5ex},mymatrixenv]
    \matrix [mymatrix,inner sep=4pt] (LX)
    {
      \bI_{n_{\cJ_1}} &&&&  \hphantom{holder} \bC_{\cJ_1}^\top \hphantom{holder}  \\
      & \bI_{n_{\cJ_2}} &&& \hphantom{holder}  \bC_{\cJ_2}^\top \hphantom{holder}  \\
      && \bI_{n_{\cJ_3}} && \hphantom{holder}  \bC_{\cJ_3}^\top \hphantom{holder}  \\
      &&& \bI_{n_{\cJ_4}} & \hphantom{holder}  \bC_{\cJ_4}^\top \hphantom{holder}  \\
    };

        \mymatrixbraceright{LX}{1}{4}{$n=\sum_{i=1}^4n_{\cJ_i}$}
        \mymatrixbracebottom{LX}{5}{5}{$m$}
        \mymatrixbracebottom{LX}{1}{4}{$n$}

    draw[thick] (m-4-1.south west) -- (m-4-10.south east);
    \end{tikzpicture}
\]

  Each row is a packet of length $n+m$ ($=23+9=31$ in this toy example) over $\Fq$ ($\F_4$) transmitted into the network.
  Note that for the lifting step, a centralized coordination unit is also needed to instruct the source nodes where to put the identity matrix in their packets.
  \subsection{The General Scheme: Distributed LRS Codes}
  \label{sec:distributed-LRS}
  This section provides a general scheme at the centralized coordination unit to design the overall distributed LRS codes, given:
  \begin{itemize}
  \item the total number of messages $h$ and their lengths $r_1,\dots, r_h$;
  \item the set $\cS=\{\cJ_1, \dots,\cJ_s\}$, where each $\cJ_i\subset [h]$ contains the indices of the messages that the source node $S_{\cJ_i}$ has access to;
  \item the maximum number of malicious nodes $t$ and frozen nodes $\rho$ in the network;
  \item the number of blocks $\ell$ of LRS codes.
  \end{itemize}
  The task is to design the $n_{\cJ}$, for all $\cJ\in\cS$, such that the sink can recover all $h$ messages. The goal of the design is to minimize the total number $n$ of the encoded symbols.

  The general scheme contains the following steps:
  \begin{enumerate}
  \item Solving the following integer linear programming problem for $(n_{\cJ_1},\dots,n_{\cJ_s})$
    \begin{align}
      \text{minimize } \quad & n= n_{\cJ_1}+\dots+n_{\cJ_s}\nonumber \\
      \text{subject to } \quad & \forall \varnothing\neq \cJ'\subseteq[h],\ \sum_{i\in\cJ'} r_i + 2t+\rho\leq n - \sum_{\substack{\cJ\in\cS\\\cJ\subseteq [h]\setminus\cJ'}}n_{\cJ}  \label{eq:capacity-constraints}\ ,\\
                             & \forall \varnothing\neq \Omega\subseteq[h],\ \sum_{\substack{\cJ\in\cS\\ [h]\setminus \cJ\supseteq\Omega} }  n_{\cJ} + \sum_{i\in\Omega} r_i \leq n-2\ell t-\rho \label{eq:dis-zero-constraints} \ ,\\
                             &\forall \cJ\in\cS, \quad\quad\quad n_{\cJ}\geq 0\ .\nonumber
    \end{align}
    \textit{Remark:
    Recall that we assume the min-cut $\mincut_{\cJ'}= n - \sum_{\substack{\cJ\in\cS\\\cJ\subseteq [h]\setminus\cJ'}}n_{\cJ}$, for all $\varnothing\neq\cJ'\subseteq[h]$. With the constraints in \eqref{eq:capacity-constraints}, the choice of $(n_{\cJ_1},\dots, n_{\cJ_s})$ guarantees that the message lengths $(r_1,\dots,r_h)$ are in the capacity region given in \cref{thm:capacity-distributed-network}. \newline
    Let $$\widetilde{k}:=\underset{\varnothing\neq\Omega\subseteq[h]}{\max} \sum\limits_{\substack{\cJ\in\cS\\ [h]\setminus \cJ\supseteq\Omega} }  n_{\cJ} + \sum\limits_{i\in\Omega} r_i\ .$$
    By \cref{thm:subcode}, there exist a subcode of an $[n,\widetilde{k}]$ LRS code whose generator matrix fulfills the support constraints of the encoding matrix $\bG$ for the distributed multi-source network.
    The constraints in \eqref{eq:dis-zero-constraints} guarantees that $\widetilde{k}\leq n-2\ell t-\rho$, which guarantees that the $[n,\widetilde{k}]$ LRS code can decode the rank-metric errors and erasures (see \cref{sec:sum-rank-weight-rank-error}).
    }
  \item Determine the field size $q^m$ required for the $[n, \widetilde{k}]$ LRS code with $\ell$ blocks according to \cref{thm:subcode}. (Tip: The total length should be distributed as evenly as possible into $\ell$ blocks so that the extension degree $m$ is minimized.)
  \item Construct a generator matrix $\GLRS$ of the $[n, \widetilde{k}]_{q^m}$ LRS code according to \eqref{eq:Gevaluation} and \eqref{eq:MooreMatrix}.
  \item Find a full-rank $\bT\in\Fqm^{k\times \widetilde{k}}$ (where $k=\sum_{i=1}^h r_i$) such that the support-constrained encoding matrix $\bG$ can be obtained from $\bG=\bT\cdot \GLRS$. (This can be done by solving a linear system of equations for the entries of $\bT$.)
  \end{enumerate}

\section{Proof of Claim 1}
\label{sec:main_result}

\subsection{Problem Setup}\label{sec:proof_setup}
Let $R_\numMulPolyVar$
be the multivariate commutative polynomial ring as defined in \eqref{eq:multiVarRing}. Note that $R_0=\Fqm$.
Let $\Frobaut$ be the Frobenius automorphism of $R_0$, which we extend to any $a=\sum_{\bi\in\bbN^n_0} a_{\bi}\cdot \beta_{1,1}^{i_1}\cdots \beta_{\ell, n_\ell}^{i_\numMulPolyVar}\in R_\numMulPolyVar$ by
\begin{align*}
    \Frobaut\ :\ R_\numMulPolyVar &\to R_\numMulPolyVar\\
    \sum_{\bi\in\bbN^n_0} a_{\bi}\cdot \beta_{1,1}^{i_1}\cdots \beta_{\ell, n_\ell}^{i_\numMulPolyVar} &\mapsto \sum_{\bi\in\bbN^n_0} \Frobaut(a_{\bi})\cdot \Frobaut(\beta_{1,1}^{i_1})\cdots \Frobaut(\beta_{\ell, n_\ell}^{i_\numMulPolyVar})\ .
\end{align*}
Let $R_\numMulPolyVar[\SkewVar;\Frobaut]$ be the univariate skew polynomial ring with indeterminate $X$, whose coefficients are from $R_\numMulPolyVar$, i.e.,
\begin{align*}
  R_\numMulPolyVar[\SkewVar;\Frobaut] \coloneqq \left\{ \left. \sum_{i=0}^d c_iX^i ~\right|~ d\geq 0, c_0,\dots,c_d \in R_\numMulPolyVar, \right\}\ .
\end{align*}
For ease of notation, when it is clear from the context, we may omit the variable notation in $f(\SkewVar)$ for $f\in\FrobPolysn$, and write only $f$.
The degree of $f=\sum_{i=0}^{d} c_i\SkewVar^{i}\in\FrobPolysn$ is $\deg f=d$ if $d$ is the largest integer such that $c_d\neq 0$. We define $\deg 0 = -\infty$.

Similar to skew polynomials over a finite field, addition is commutative and multiplication is defined using the commutation rule
\begin{align}\label{eq:commutationRule}
  \SkewVar\cdot a = \Frobaut(a)\cdot \SkewVar, \ \forall a\in R_\numMulPolyVar,
\end{align}
which is naturally extended by distributivity and associativity. Just like \eqref{eq:skewProd}, the product of $f,g\in\FrobPolysn$ with $\deg f=d_f$ and $\deg g=d_g$ is
\begin{align}
f\cdot g=\sum_{i=0}^{d_f}\sum_{j=0}^{d_g} f_i\Frobaut^{i}(g_j) \SkewVar^{i+j},\label{eq:prodRn}
\end{align}
and
the degree of the product is $\deg \left(f\cdot g\right) = d_f + d_g$. Note that
in general, $f\cdot g\neq g\cdot f$, for $f,g\in\FrobPolysn$.

By abuse of notation, in the following, we also denote by
\begin{align*}
\locSet
  =&\{a_1\beta_{1,1}^{q-1}, \dots, a_1\beta_{1, n_1}^{q-1}\ \vdots\ \dots \ \vdots \ a_\ell\beta_{\ell,1}^{q-1}, \dots, a_\ell\beta_{\ell, n_\ell}^{q-1}\} \subseteq \mulVarRng
  \end{align*}
  the P-independent set as a subset of $\mulVarRng$.
  Let $\rootSet_i\subseteq \locSet$ be the set as in \eqref{eq:rootSet} corresponding to $\zeroSet_i$ and $f_{\rootSet_i}\in\FrobPolysn$ be the minimal polynomial of $\rootSet_i$. 
We note the following properties of $\FrobPolysn$, which will be useful for the proof of the main result in \cref{sec:ind_proof}. The detailed proofs of these properties can be found in \cref{appendix:SkewPolyProperty}.
\begin{enumerate}[label={\bfseries P\arabic*}]
\item \label{p:noZeroDiv}
  $\FrobPolysn$ is a ring without zero divisors.
\item \label{p:gcrd}
  For any sets 
  $\rootSet_1, \rootSet_2\subseteq \mulVarRng$ s.t. $\rootSet_1\cup\rootSet_2$ is P-independent,
  let $f_1, f_2\in \FrobPolysn$ be the minimal polynomials of $\rootSet_1, \rootSet_2$, respectively. Then the greatest common right divisor $\gcrd(f_1,f_2)$ is the minimal polynomial of $\rootSet = \rootSet_1\cap \rootSet_2$, denoted by $f_{\rootSet_1\cap\rootSet_2}$.
  In particular, $\rootSet_1\cap\rootSet_2=\varnothing \iff \gcrd(f_1,f_2)=1$.
\item \label{p:xlrdivision}
  For $t\in\bbN$ and any $f\in\FrobPolysn$, $\SkewVar^t |_l f \iff \SkewVar^t|_r f$.
  In this case, we may write just $\SkewVar^t| f$.
\item \label{p:xdivision}
  For $t\in\bbN$ and any $f_1, f_2 \in\FrobPolysn$ such that $\SkewVar\not\divides f_2$, then $\SkewVar^t| (f_1\cdot f_2)$ if and only if $\SkewVar^t| f_1$.
\end{enumerate}
In the main result in \cref{thm:sufficientCond}, we are interested in skew polynomials in the following form: for any $\zeroSet\subseteq[n], \tfzt\geq 0$
\begin{align}\label{eq:fZt}
  \fZt(\zeroSet,\tfzt)\coloneqq \SkewVar^{\tfzt}\cdot \lclm_{\substack{\alpha\in\{a_l\betalt^{q-1}  |\\ \indMap(l,t)\in\zeroSet\}}}\{(\SkewVar-\alpha)\} \in \FrobPolysn\ ,
\end{align}
where $\indMap(l,t)$ is as defined in \eqref{eq:indMap}.

Define the set of skew polynomials of this form:
\begin{equation}
  \label{eq:Skewnk}
  \begin{split}
  \Skewnk \coloneqq \{&\fZt(\zeroSet,\tfzt)\ |\ \tfzt\geq 0, \zeroSet\subseteq [n]\\&\text{s.t. }|\zeroSet|+\tfzt\leq k-1\}\subseteq \FrobPolysn.
  \end{split}
\end{equation}
Note that $\deg f \leq k-1, \forall f\in\Skewnk$.
We also note the following properties of polynomials in $\Skewnk$, whose proofs are given in \cref{appendix:SkewPolyProperty}.
\begin{enumerate}[resume,label={\bfseries P\arabic*}]
\item \label{p:gcrd2polys}
  For any $f_1 = \fZt(\zeroSet_1,\tfzt_1), f_2 = \fZt(\zeroSet_2, \tfzt_2)\in\Skewnk$, we have
  \begin{align*}
    \gcrd(f_1,f_2)=\fZt(\zeroSet_1\cap \zeroSet_2, \min\{\tfzt_1,\tfzt_2\})\in\Skewnk\ .
  \end{align*}
\item \label{p:reducevar}
  Let $f=\fZt(\zeroSet,\tfzt)\in\Skewnk$ and let $f'=f|_{\beta_{\ell, n_\ell}=0}\in R_{\numMulPolyVar-1}[\SkewVar;\Frobaut]$ (namely, we substitute $\beta_{\ell, n_\ell}=0$ in each coefficient of $f$). Then $f'\in\cS_{n-1,k}$ and
\begin{align*}
    f'= \begin{cases}
    \fZt(\zeroSet,\tfzt) & n\not\in \zeroSet,\\
    \fZt(\zeroSet\setminus\{n\}, \tfzt+1) & n\in \zeroSet.
    \end{cases}
\end{align*}
\end{enumerate}
\subsection{Main Result}\label{sec:ind_proof}
The following theorem is a more general statement than \cref{claim} and it is the analog of \cite[Theorem 3.A]{yildiz2019gabidulin} for skew polynomials.
\begin{theorem}\label{thm:sufficientCond}
  Let $k\geq \numRows\geq 1$ and $n\geq 0$. For any $f_1, f_2,\dots, f_\numRows\in \Skewnk$, the following are equivalent:
  \begin{enumerate}[label={\rm(\roman*)}]
  \item For all $g_1,g_2,\dots, g_\numRows\in\FrobPolysn$ such that $\deg (g_i\cdot f_i)\leq k-1$, we have \label{item:equiv1}
    \begin{align*}
      \sum_{i=1}^\numRows g_i\cdot f_i = 0\quad \implies \quad g_1=g_2=\dots = g_\numRows =0\ .
    \end{align*}
  \item For all nonempty $\Omega\subseteq [\numRows]$, we have \label{item:equiv2}
    \begin{align}\label{eq:equiv2}
      k-\deg (\gcrd_{i\in\Omega} f_i) \geq \sum_{i\in\Omega} (k-\deg f_i)\ .
    \end{align}
  \end{enumerate}
\end{theorem}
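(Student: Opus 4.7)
My approach is to work inside $F[\SkewVar;\Frobaut]$, where $F := \mathrm{Frac}(\mulVarRng)$ is the field of fractions of $\mulVarRng$. Although $\Frobaut$ extends to $F$ only as an injective endomorphism (transcendentals such as $\betalt$ are not $q$-th powers in $F$), right Euclidean division in $F[\SkewVar;\Frobaut]$ remains available because leading coefficients are invertible. This alone yields principality of left ideals, together with the identities $F[\SkewVar;\Frobaut]\, f + F[\SkewVar;\Frobaut]\, g = F[\SkewVar;\Frobaut]\,\gcrd(f,g)$ and $F[\SkewVar;\Frobaut]\, f \cap F[\SkewVar;\Frobaut]\, g = F[\SkewVar;\Frobaut]\,\lclm(f,g)$, and the degree relation $\deg\gcrd(f,g) + \deg\lclm(f,g) = \deg f + \deg g$.

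For \ref{item:equiv1} $\Rightarrow$ \ref{item:equiv2} I argue by contraposition. Assume (ii) fails for some $\Omega$, and let $V_i := F[\SkewVar;\Frobaut]_{<k - \deg f_i} \cdot f_i$, an $F$-subspace of dimension $k-\deg f_i$. Every element of $\sum_{i\in\Omega} V_i$ is a left multiple of $\gcrd_{i\in\Omega} f_i$ of degree less than $k$, so the sum embeds into an ambient $F$-subspace of dimension $k-\deg\gcrd_{i\in\Omega} f_i$. Failure of (ii) gives $\dim\sum V_i < \sum \dim V_i$, hence the $F$-linear map $(g_i)\mapsto\sum g_i f_i$ on $\bigoplus_{i\in\Omega} V_i$ has nonzero kernel. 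Clearing denominators by a nonzero element of $\mulVarRng$ and padding by zero outside $\Omega$ produces a nonzero tuple in $\FrobPolysn^{\numRows}$ violating (i).

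For the main direction \ref{item:equiv2} $\Rightarrow$ \ref{item:equiv1}, suppose $\sum_{j=1}^{\numRows} g_j f_j = 0$ with $\deg(g_j f_j) < k$. I show $g_i = 0$ for each fixed $i$. Moving the other terms across and applying Bezout,
\[
g_i f_i \;=\; -\sum_{j\neq i} g_j f_j \;\in\; \sum_{j\neq i} F[\SkewVar;\Frobaut] \cdot f_j \;=\; F[\SkewVar;\Frobaut] \cdot h_i,
\]
where $h_i := \gcrd_{j\neq i} f_j$. Intersecting with $F[\SkewVar;\Frobaut] \cdot f_i$ places $g_i f_i$ into the principal left ideal generated by $\lclm(f_i,h_i)$. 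Since $\gcrd(f_i,h_i) = \gcrd_{j\in[\numRows]} f_j$, the degree formula gives $\deg\lclm(f_i,h_i) = \deg f_i + \deg\gcrd_{j\neq i} f_j - \deg\gcrd_{j\in[\numRows]} f_j$. Subtracting inequality (ii) for $\Omega = [\numRows]\setminus\{i\}$ from the one for $\Omega = [\numRows]$ rearranges exactly to $\deg\lclm(f_i,h_i) \geq k$. Since $\deg(g_i f_i) < k$, this forces $g_i f_i = 0$, and then \ref{p:noZeroDiv} gives $g_i = 0$. The edge case $\numRows = 1$ is immediate from \ref{p:noZeroDiv} alone.

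The main obstacle I anticipate is rigorously establishing the skew-PID structure of $F[\SkewVar;\Frobaut]$ when $\Frobaut$ is only an endomorphism of $F$: classical references typically assume an automorphism twist, so I must verify that right Euclidean division alone suffices to derive principality of left ideals, the Bezout identity, and the $\gcrd$--$\lclm$ degree formula. The cleanest route is through the short exact sequence $0\to R/(I\cap J)\to R/I\oplus R/J\to R/(I+J)\to 0$ of left $R$-modules, counting $F$-dimensions.
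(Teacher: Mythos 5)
Your direction \ref{item:equiv1}$\implies$\ref{item:equiv2} is fine and matches the paper's (a dimension count on a homogeneous linear system after localizing at the fraction field). The gap is in the main direction \ref{item:equiv2}$\implies$\ref{item:equiv1}, at the step where you claim that ``subtracting inequality (ii) for $\Omega=[\numRows]\setminus\{i\}$ from the one for $\Omega=[\numRows]$'' yields $\deg\lclm(f_i,h_i)\geq k$. You cannot subtract inequalities that point the same way: from $a\geq b$ and $c\geq d$ one does not get $a-c\geq b-d$. Writing $A=\deg\gcrd_{j\in[\numRows]}f_j$ and $B=\deg h_i$, condition \ref{item:equiv2} for $\Omega=[\numRows]$ gives $\deg\lclm(f_i,h_i)=\deg f_i+B-A\geq B+\sum_{j\neq i}(k-\deg f_j)$, and \ref{item:equiv2} for $\Omega=[\numRows]\setminus\{i\}$ bounds that right-hand side \emph{above} by $k$, not below. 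The desired inequality genuinely fails: take $k=5$, $\numRows=3$, $n=8$, $\tau_i=0$, $\zeroSet_1=\{1,2,3,4\}$, $\zeroSet_2=\{1,2,5,6\}$, $\zeroSet_3=\{1,2,7,8\}$. Then \ref{item:equiv2} holds (singletons $1\geq 1$, pairs $3\geq 2$, triple $3\geq 3$), yet $\deg\lclm(f_1,\gcrd(f_2,f_3))=4+2-2=4<k$. So your argument only places $g_1f_1$ in a principal left ideal generated by a polynomial of degree $k-1$, which does not force $g_1f_1=0$; the row-by-row Bezout/lclm constraint is strictly weaker than statement \ref{item:equiv1}, which is about the joint kernel.

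This is not a repairable slip in arithmetic but a structural obstacle: the commutative analogue of your two-line argument would prove the GM--MDS conjecture immediately, whereas that result required the intricate inductions of Yildiz--Hassibi and Lovett. The paper's proof of \cref{thm:sufficientCond} is correspondingly a lexicographic induction on $(k,\numRows,n)$ with several cases --- factoring $\SkewVar$ out of all $f_i$ to reduce $k$, splitting the family at a tight subset $\Omega$ to reduce $\numRows$, and specializing $\beta_{\ell,n_\ell}=0$ (property \ref{p:reducevar}) to reduce $n$ --- and none of these reductions is captured by your localization-plus-Bezout framework. Your preliminary structural points about $F[\SkewVar;\Frobaut]$ with $\Frobaut$ only an injective endomorphism (right division, principality of left ideals, the $\gcrd$--$\lclm$ degree formula) are essentially correct and would also need the identification of the $F$-linear $\gcrd$ with the minimal polynomial of $\bigcap_i\rootSet_i$ from \ref{p:gcrd}, but they do not rescue the key step.
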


The proof of \cref{thm:sufficientCond} is given in \versionShortLong{\cite[Appendix C]{full_version}}{\cref{sec:proofMainResult}}. We will show in \cref{cor:claim} that \cref{claim} is a special case of \cref{thm:sufficientCond}.
For this purpose, we give an equivalent way of writing it in terms of matrices with entries from $R_\numMulPolyVar$. This is done in \cref{thm:sufficientMat}, which is an analog to \cite[Theorem 3.B]{yildiz2019gabidulin}.

We first describe the multiplication between skew polynomials in matrix language.
Let $\up=\sum_{i} \upc_i \SkewVar^{i}\in\FrobPolysn$. For $b-a\geq \deg \up$, define the following matrix in $R_n^{a\times b}$
\begin{align*}
  \bS_{a\times b}(\up)\coloneq \nonumber 
  \begin{pmatrix}
    \upc_0 & \cdots & \upc_{b-a} &\vphantom{v} \\
    & \Frobaut(\upc_0) & \cdots & \Frobaut(\upc_{b-a})\\
    & \ddots &  \ddots &\ddots\\
    &&\Frobaut^{a-1}(\upc_0) &  \cdots  &\Frobaut^{a-1}(\upc_{b-a})
 \end{pmatrix}.
\end{align*}
  In particular, for $a=1$, denote by $\FrobPolysn_{<b}$ the set of skew polynomials of degree strictly less than $b$. The map
  \begin{equation}
    \label{eq:polyMatMap}
    \begin{split}
      \bS_{1\times b}(\cdot)\ : \ \FrobPolysn_{<b} &\to \mulVarRng^b\\
      u &\mapsto (u_0, \dots, u_{b-1})
    \end{split}
  \end{equation}
  is bijective and $\bS_{1\times b}(0)=\0, \forall b\in\bbN$.
For any skew polynomial $\vp=\sum_i\vpc_i\SkewVar^{i}\in\FrobPolysn$, we have
\begin{align}\label{eq:matOfProduct}
  \bS_{a\times b}(\vp\cdot \up) = \bS_{a\times c}(\vp)\cdot \bS_{c\times b}(\up)\ ,
\end{align}
where $a,b,c\in\bbN$ are such that $c-a\geq \deg  \vp, b-c\geq \deg  \up$.
As a special case, when $\vp = \SkewVar^\tfzt, \tfzt\in\bbN$, we can write
\begin{align}
  \bS_{a\times(b+\tfzt)}(\SkewVar^\tfzt\cdot \up)=& \bS_{a\times (a+\tfzt)}(\SkewVar^\tfzt)\cdot \bS_{(a+\tfzt)\times (b+\tfzt)}(\up)\nonumber\\
  =&
      \left( \0_{a\times \tfzt}\quad \bI_{a \times a}\right)
                           \cdot \bS_{(a+\tfzt)\times (b+\tfzt)}(\up)  \label{eq:matfZt}. 
\end{align}
By the definition in \eqref{eq:fZt}, $\fZt(\zeroSet,\tfzt)=\SkewVar^\tfzt\cdot \up$ for some $\up\in\FrobPolysn$. It can be readily seen from \eqref{eq:matfZt} that the first $\tfzt$ columns of $\bS_{a\times (b+\tfzt)}(\fZt(\zeroSet,\tfzt))$ are all zero.

For $s\in[k]$, $i\in [\numRows]$, let $f_i=\fZt(\zeroSet_i, \tfzt_i)\in\Skewnk$.
We write $\bS(f_i)$ instead of $\bS_{(k-\tfzt_i-|\zeroSet_i|)\times k}(f_i)$ for ease of notation. By \eqref{eq:matfZt}, $\bS(f_i)$ looks like
\begingroup
\setlength\arraycolsep{3pt}
\begin{align*}
\bS(f_i)
	 &=\begin{pmatrix}
        0 & \cdots & 0 & \times & \times & \cdots & \times\\
        0 & \cdots & 0 &  & \times & \times & \cdots & \times\\
        \vdots && \vdots &&&\ddots & \ddots & &\ddots\\
        \makebox[0pt][l]{$\smash{\underbrace{\phantom{\begin{matrix}0&\cdots&0\end{matrix}}}_{\tfzt_i}}$}0 & \cdots & 0&
        \makebox[0pt][l]{$\smash{\hspace{-5pt}\underbrace{\phantom{\begin{matrix}\times&\times&\cdots\end{matrix}}}_{k-1-\tfzt_i-|\zeroSet_i|}}$}
        &&&
        \makebox[0pt][l]{$\smash{\underbrace{\phantom{\begin{matrix}\times&\times&\cdots&\times\end{matrix}}}_{|\zeroSet_i|+1}}$}
        \times & \times & \cdots & \times\\
    \end{pmatrix}\!\!\!\!\left.\vphantom{\begin{pmatrix}0\\0\\\vdots\\0\end{pmatrix}}
				\right\} {\scriptstyle k-\tfzt_i-|\zeroSet_i|}\\
\end{align*}
\endgroup
where the $\times$'s represent possibly non-zero entries.
Then, applying \eqref{eq:matOfProduct} to the expression $g_i\cdot f_i$ in \cref{thm:sufficientCond} yields
\begin{align}
  \bS_{1\times k}(g_i\cdot f_i) = \bu_i \cdot \bS(f_i)\ ,\nonumber
\end{align} 
where $\bu_i=\bS_{1\times (k-\tfzt_i-|\zeroSet_i|)}(g_i)$ is a row vector.
Therefore, we can write
\begin{align}
  \bS_{1\times k}(\sum_{i=1}^\numRows g_i\cdot f_i) =
    (\bu_1, \cdots , \bu_\numRows)
                     \cdot
                     \underbrace{\begin{pmatrix}
                       \bS(f_1)\\ \vdots \\ \bS(f_\numRows)
                     \end{pmatrix}}_{=:\bM(f_1,\dots, f_\numRows)}\label{eq:defMmat}
\end{align}
which is a linear combination of the rows of $\bM(f_1,\dots,f\numRows)$.

The following theorem is an equivalent statement to \cref{thm:sufficientCond}, in matrix language.
\begin{theorem}\label{thm:sufficientMat}
  Let $k\geq s\geq 1$ and $n\geq 0$. For $i\in [\numRows]$, let $\zeroSet_i\in[n], \tfzt_i\geq 0$ such that $\tfzt_i+|\zeroSet_i|\leq k-1$ and $f_i=\fZt(\zeroSet_i, \tfzt_i)\in\Skewnk$. The matrix $\bM(f_1,\dots, f_\numRows)$ defined in \eqref{eq:defMmat} has full row rank if and only if, for all nonempty $\Omega\subseteq[\numRows]$,
  \begin{align}
    k-\left|\bigcap_{i\in\Omega}\zeroSet_i\right| - \min_{i\in\Omega} \tfzt_i \geq \sum_{i\in\Omega}(k-\tfzt_i-|\zeroSet_i|)\ .
    \label{eq:matEquiv2}
  \end{align}
\end{theorem}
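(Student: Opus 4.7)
My plan is to deduce \cref{thm:sufficientMat} directly from \cref{thm:sufficientCond} by translating each side of the latter into matrix language. The two statements concern the same objects $f_1,\dots,f_\numRows\in\Skewnk$ and use the same combinatorial data $(\zeroSet_i,\tfzt_i)$, so the proof should reduce to bookkeeping using the map $\bS_{1\times b}(\cdot)$ and properties~\ref{p:gcrd2polys} and~\ref{p:xdivision} of $\Skewnk$.

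The first step is to interpret ``full row rank of $\bM(f_1,\dots,f_\numRows)$'' as a polynomial statement. The matrix $\bM$ has $\sum_{i=1}^\numRows(k-\tfzt_i-|\zeroSet_i|)$ rows and $k$ columns. By \eqref{eq:defMmat}, any left linear combination $(\bu_1,\dots,\bu_\numRows)\cdot \bM$ equals $\bS_{1\times k}\!\bigl(\sum_{i=1}^\numRows g_i\cdot f_i\bigr)$, where $\bu_i=\bS_{1\times(k-\tfzt_i-|\zeroSet_i|)}(g_i)$. Since the map $\bS_{1\times b}(\cdot)$ in \eqref{eq:polyMatMap} is a bijection between $\FrobPolysn_{<b}$ and $\mulVarRng^b$, the row vectors $(\bu_1,\dots,\bu_\numRows)$ range bijectively over all tuples $(g_1,\dots,g_\numRows)$ with $\deg g_i<k-\tfzt_i-|\zeroSet_i|=k-\deg f_i$, which is exactly the constraint $\deg(g_i\cdot f_i)\leq k-1$ appearing in \cref{thm:sufficientCond}\ref{item:equiv1}. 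Hence $\bM$ has full row rank iff the only such tuple with $\sum_i g_i\cdot f_i=0$ is the zero tuple, which is precisely condition \ref{item:equiv1} of \cref{thm:sufficientCond}.

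The second step is to show that the inequality \eqref{eq:matEquiv2} is just \eqref{eq:equiv2} rewritten using $(\zeroSet_i,\tfzt_i)$. By definition \eqref{eq:fZt}, $\deg f_i=\tfzt_i+|\zeroSet_i|$, so $k-\deg f_i=k-\tfzt_i-|\zeroSet_i|$, matching the right-hand side. For the left-hand side, property~\ref{p:gcrd2polys} gives
\begin{equation*}
\gcrd_{i\in\Omega}f_i=\fZt\Bigl(\bigcap_{i\in\Omega}\zeroSet_i,\ \min_{i\in\Omega}\tfzt_i\Bigr),
\end{equation*}
whose degree is $\bigl|\bigcap_{i\in\Omega}\zeroSet_i\bigr|+\min_{i\in\Omega}\tfzt_i$, so $k-\deg\gcrd_{i\in\Omega}f_i=k-\bigl|\bigcap_{i\in\Omega}\zeroSet_i\bigr|-\min_{i\in\Omega}\tfzt_i$, matching the left-hand side of \eqref{eq:matEquiv2}. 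Combining both steps yields the claimed equivalence.

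I expect the main obstacle to be purely notational: verifying that the row dimension $k-\tfzt_i-|\zeroSet_i|$ chosen for $\bS(f_i)$ is the correct one so that $\bS_{1\times k}(g_i\cdot f_i)=\bu_i\cdot \bS(f_i)$ holds via identity \eqref{eq:matOfProduct}, and that this choice is compatible with the degree bound $\deg(g_i\cdot f_i)\leq k-1$. Once this dimension matching is checked (it uses $\deg f_i=\tfzt_i+|\zeroSet_i|$ and the associativity \eqref{eq:matOfProduct}), the argument reduces to invoking \cref{thm:sufficientCond} via the two dictionaries established above, so no additional inductive or algebraic work is required.
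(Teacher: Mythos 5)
Your proposal is correct and follows essentially the same route as the paper's own proof: both establish the equivalence by (a) using the bijectivity of $\bS_{1\times b}(\cdot)$ and identity \eqref{eq:defMmat} to identify full row rank of $\bM$ with condition \ref{item:equiv1} of \cref{thm:sufficientCond}, and (b) using \ref{p:gcrd2polys} together with $\deg f_i=\tfzt_i+|\zeroSet_i|$ to identify \eqref{eq:matEquiv2} with \eqref{eq:equiv2}. The only cosmetic difference is your passing mention of \ref{p:xdivision}, which is not actually needed for either direction.
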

\begin{IEEEproof}
    For brevity, we write $\bM$ instead of $\bM(f_1,\dots, f_\numRows)$ in the proof.
    The logic of the proof is as following
    \begin{align*}
      \bM \text{ has full row rank }\overset{\matProne}{\iff} \ref{item:equiv1} \overset{\text{\cref{thm:sufficientCond}}}{\iff} \ref{item:equiv2} \overset{\matPrtwo}{\iff} \eqref{eq:matEquiv2}\text{ holds}
    \end{align*}
    where \ref{item:equiv1} and \ref{item:equiv2} are shown to be equivalent in \cref{thm:sufficientCond}.
    We only need to show the equivalence \matProne{} and \matPrtwo.

    \matProne: Assuming $\bM$ has full row rank, it is equivalent to write
  \begin{equation}
    \begin{split}
      \forall &\bu
      \in\mulVarRng^{1\times \sum_{i=1}^s(k-\tfzt_i-|\zeroSet_i|)},
      \bu\cdot \bM=\0 \Longrightarrow \bu=\0\ .
  \end{split}
    \label{eq:matEquiv1}
  \end{equation}
  Partition $\bu$ into $s$ blocks $(\bu_1,\dots, \bu_\numRows)$, where $\bu_i\in\mulVarRng^{1\times (k-\tfzt_i-|\zeroSet_i|)}$. Note that $\bu=\0\iff \forall i\in[s], \bu_i =\0$.
  For each $i\in[s]$, the set $\{g_i\ |\ \bS_{1\times (k-\tfzt_i-|\zeroSet_i|)}(g_i)=\bu_i, \forall \bu_i\in\mulVarRng^{1\times (k-\tfzt_i-|\zeroSet_i|)}\}$ is $\FrobPolysn_{<(k-\tfzt_i-|\zeroSet_i|)}$, which is the set of skew polynomials of degree less than $ (k-\tfzt_i-|\zeroSet_i|)$, since the map $\bS_{1\times *}(\cdot)$ as in \eqref{eq:polyMatMap} is bijective. Therefore, $\bu_i=\0\iff g_i=0, \forall i\in[\numRows]$. It can be further inferred that every $\bu\in \mulVarRng^{1\times \sum_{i=1}^{\numRows}(k-\tfzt_i-|\zeroSet_i|)}$ corresponds to a unique tuple $(g_1, \dots, g_\numRows)\in \FrobPolysn_{<(k-\tfzt_1-|\zeroSet_1|)}\times \dots\times\FrobPolysn_{< (k-\tfzt_\numRows-|\zeroSet_\numRows|)}$. We denote the Cartesian product by $\cG$.
  Since $\deg f_i= \tfzt_i+|\zeroSet_i|,\forall i\in[\numRows]$, for any tuple $(g_1,\dots, g_\numRows)\in\cG$, $\deg (g_i\cdot f_i) \leq k-1, \forall i\in[\numRows]$.

  By the equality in \eqref{eq:defMmat}, $\bu\cdot \bM=\bS_{1\times k}(\sum_{i=1}^\numRows g_i\cdot f_i)$ and $\bS_{1\times k}(\sum_{i=1}^\numRows g_i\cdot f_i)=\0 \iff \sum_{i=1}^\numRows g_i\cdot f_i=0$.
  Hence it is equivalent to write \eqref{eq:matEquiv1} as:
  \begin{align*}
    \forall g_1,\dots,g_\numRows\in\FrobPolysn &\text{ such that }\deg (g_i\cdot f_i)\leq k-1\ ,\\
    &  \sum_{i=1}^\numRows g_i\cdot f_i=0 \implies g_i=0, \forall i\in[\numRows]\ ,
  \end{align*}
  which is exactly the statement \ref{item:equiv1}.
\matPrtwo:
  It follows from \ref{p:gcrd2polys} that for any nonempty set $\Omega\subseteq [\numRows]$,
  \begin{align*}
    \deg (\gcrd_{i\in\Omega} f_i) = \left|\bigcap_{i\in\Omega}\zeroSet_i\right| + \min_{i\in\Omega} \tfzt_i\ .
  \end{align*}
  Then the left hand side of \eqref{eq:equiv2} $k-\deg (\gcrd_{i\in\Omega} f_i)= k- |\bigcap_{i\in\Omega}\zeroSet_i| - \min_{i\in\Omega} \tfzt_i$, which is the left hand side of \eqref{eq:matEquiv2}. By the definition of $f_i=\fZt(\zeroSet_i,\tfzt_i)$ in \eqref{eq:fZt}, the right hand side of \eqref{eq:equiv2} is
    $\sum_{i\in\Omega}(k-\deg f_i) = \sum_{i\in\Omega}(k-(|\zeroSet_i|+\tfzt_i))$, which is the right hand side of \eqref{eq:matEquiv2}.
\end{IEEEproof}

As a special case, when $s=k, \tfzt_i=0$ and $|\zeroSet_i| = k-1, \forall i\in[k]$, each block $\bS(f_i)$ becomes a row vector with entries being the coefficients of $f_i=\fZt(\zeroSet_i,\tfzt_i)=\sum_{j=0}^{k-1} f_{i,j+1}\SkewVar^{j}\in\FrobPolysn$ and
\begin{align}
  \bM(f_1,\dots,f_k) =
  \begin{pmatrix}
    f_{11} & f_{12} & \cdots & f_{1k}\\
    f_{21} & f_{22} & \cdots & f_{2k}\\
    \vdots & \vdots & \ddots & \vdots\\
    f_{k1} & f_{k2} & \cdots & f_{kk}
  \end{pmatrix}\in\mulVarRng^{k\times k}\ .\label{eq:defMk}
\end{align}
Note that $\bM(f_1, \dots,f_k)$ coincides with the matrix $\bT$ in \eqref{eq:defTmat}.
Hence we have \cref{cor:claim} below, which is \cref{claim}.

\begin{corollary}\label{cor:claim}
For $i\in[k]$, let $\zeroSet_i\subseteq [n]$ with $|\zeroSet_i|=k-1$. Then $\det \bM(f_1,\dots,f_k)$ is a nonzero polynomial in $\mulVarRng$ if and only if for all nonempty $\Omega\subseteq[k]$,
  $k-\left |\bigcap_{i\in\Omega} \zeroSet_i \right| \geq |\Omega|. $
\end{corollary}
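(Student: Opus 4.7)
The plan is to derive \cref{cor:claim} as a direct specialization of \cref{thm:sufficientMat}. I would set $\numRows = k$, $\tfzt_i = 0$, and $|\zeroSet_i| = k-1$ for every $i \in [k]$. Under these choices, the hypothesis $\tfzt_i + |\zeroSet_i| \leq k-1$ of \cref{thm:sufficientMat} holds with equality, and each block $\bS(f_i) = \bS_{(k-\tfzt_i-|\zeroSet_i|)\times k}(f_i)$ has row dimension $k - \tfzt_i - |\zeroSet_i| = 1$. Therefore $\bM(f_1, \dots, f_k) \in \mulVarRng^{k \times k}$ is a square matrix whose $i$-th row is precisely the coefficient vector $(f_{i,1}, \dots, f_{i,k})$ of $f_i = \fZt(\zeroSet_i, 0)$, exactly matching the display in \eqref{eq:defMk}.

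Next, I would invoke the standard equivalence that, for a square matrix over the commutative integral domain $\mulVarRng = \Fqm[\beta_{1,1}, \dots, \beta_{\ell,n_\ell}]$, having full row rank (over the fraction field of $\mulVarRng$) is the same as having a nonzero determinant in $\mulVarRng$. Consequently, the property "$\bM(f_1, \dots, f_k)$ has full row rank" appearing in \cref{thm:sufficientMat} translates exactly into the conclusion "$\det \bM(f_1, \dots, f_k)$ is a nonzero polynomial in $\mulVarRng$" required by the corollary.

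Finally, I would substitute $\tfzt_i = 0$ and $|\zeroSet_i| = k-1$ into the rank condition \eqref{eq:matEquiv2}. The left-hand side simplifies to $k - \bigl|\bigcap_{i \in \Omega} \zeroSet_i\bigr| - \min_{i \in \Omega} 0 = k - \bigl|\bigcap_{i \in \Omega} \zeroSet_i\bigr|$, and the right-hand side becomes $\sum_{i \in \Omega}(k - 0 - (k-1)) = |\Omega|$. Hence \eqref{eq:matEquiv2} for every nonempty $\Omega \subseteq [k]$ reads $k - \bigl|\bigcap_{i \in \Omega} \zeroSet_i\bigr| \geq |\Omega|$, which is exactly the condition stated in \cref{cor:claim}.

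There is no real obstacle in this final step: all of the heavy lifting has already been carried out in \cref{thm:sufficientCond} and its matrix restatement \cref{thm:sufficientMat}. The only subtlety worth noting is the justification of the "full row rank iff nonzero determinant" equivalence over the commutative polynomial ring $\mulVarRng$, which follows at once from $\mulVarRng$ being an integral domain, so that the rows of $\bM$ can be viewed inside the vector space $\mathrm{Frac}(\mulVarRng)^k$ where the usual linear algebra applies.
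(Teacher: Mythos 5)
Your proposal is correct and follows exactly the route the paper takes: it obtains the corollary as the specialization $s=k$, $\tfzt_i=0$, $|\zeroSet_i|=k-1$ of \cref{thm:sufficientMat}, under which each block $\bS(f_i)$ collapses to a single row so that $\bM$ is the square matrix in \eqref{eq:defMk}, and \eqref{eq:matEquiv2} reduces to $k-\bigl|\bigcap_{i\in\Omega}\zeroSet_i\bigr|\geq|\Omega|$. Your explicit remark that full row rank of a square matrix over the integral domain $\mulVarRng$ is equivalent to a nonzero determinant is a point the paper leaves implicit, but it is standard and correctly justified.
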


\begin{new}
  \section{Conclusion and Outlook}
  In this work, we extended the previous work on MDS codes and MRD codes with a support-constrained generator matrix
  to MSRD codes.
  We first investigated the minimum required field size to construct an MSRD code (particularly an LRS code) with a support-constrained generator matrix. For this purpose, we proved that the conditions on the support constraints such that an MDS/MRD code exists are also the necessary and sufficient conditions for the existence of an MSRD code via the framework of skew polynomials.
For any support constraints fulfilling the conditions, an $[n,k]_{q^m}$ LRS code with a support-constrained generator matrix exists, for any prime power $q\geq \ell+1$ and integer $m\geq \max_{l\in[\ell]}\{k-1+\log_qk, n_l\}$, where $\ell$ is the number of blocks and $n_l$ is the length of the $l$-th block of the LRS code.
  With these results, we proposed an application of LRS codes with a support-constrained generator matrix in the distributed multi-source networks, where a collection of messages is to be sent via a linearly coded network with unknown topology, and each source node only has access to a subset of the messages. The provided LRS codes with a support-constrained generator matrix enable the correction of errors and erasures that occur in the network.

  A recent work \cite{martinez2022generalMSRD} has provided new constructions of MSRD codes that generalize LRS codes. Several constructions require smaller field sizes than LRS codes. Investigating these new constructions for the support constraints may result in a smaller required field size to obtain an MSRD code with a support-constrained generator matrix.
\end{new}
\appendices
\section{Proofs of Properties of Skew Polynomials}

\label[appendix]{appendix:SkewPolyProperty}

\ref{p:noZeroDiv}:
  $\FrobPolysn$ is a ring without zero divisors.
\begin{IEEEproof}
The ring properties of $\FrobPolysn$ are trivial, we only need to show that it has no zero divisors.

Note that for any $a,b\in R_\numMulPolyVar$, $\Frobaut(a+b)=\Frobaut(a) + \Frobaut(b)$.
It can be seen from \eqref{eq:prodRn} that if $f,g\neq 0$, then $f\cdot g\neq 0$ since the leading coefficients of $f_{d_f}, g_{d_g}$ are nonzero and therefore $f_{d_f}\Frobaut^{d_f}(g_{d_g})$ is nonzero. Hence, $\FrobPolysn$ does not have zero divisors.
\end{IEEEproof}

\ref{p:gcrd}:
For any sets
$\rootSet_1, \rootSet_2\subseteq \mulVarRng$ s.t. $\rootSet_1\cup\rootSet_2$ is P-independent,
let $f_1, f_2\in \FrobPolysn$ be the minimal polynomials of $\rootSet_1, \rootSet_2$, respectively. Then the greatest common right divisor $\gcrd(f_1,f_2)$ is the minimal polynomial of $\rootSet = \rootSet_1\cap \rootSet_2$, denoted by $f_{\rootSet_1\cap\rootSet_2}$.
In particular, $\rootSet_1\cap\rootSet_2=\varnothing \iff \gcrd(f_1,f_2)=1$.
\begin{IEEEproof}
  We can write the minimal polynomial of the set $\rootSet_1\cap\rootSet_2$ by the least common left multiplier as in \eqref{eq:lclm}, i.e., $f_{\rootSet_1\cap\rootSet_2} = \underset{a\in\rootSet_1\cap\rootSet_2}{\lclm}\{(\SkewVar-a)\}$. Then we can write
  $f_1 = g_1\cdot \underset{a\in\rootSet_1\cap\rootSet_2}{\lclm}\{(\SkewVar-a)\}$
  and
  $f_2 = g_2 \cdot\underset{a\in\rootSet_1\cap\rootSet_2}{\lclm}\{(\SkewVar-a)\},$ for some $g_1,g_2\in\FrobPolysn$.
  Therefore, it is clear that $f_{\rootSet_1\cap\rootSet_2}\mid \gcrd(f_1,f_2)$.
  
  Now we only need to show that $\deg f_{\rootSet_1\cap\rootSet_2}=\deg \gcrd(f_1,f_2)$.
  Since $\rootSet_1\cup\rootSet_2$ is P-independent, $\rootSet_1,\rootSet_2$ and $\rootSet_1\cap\rootSet_2$ are also P-independent. Then $\deg f_{\rootSet_1\cup\rootSet_2} = |\rootSet_1\cup\rootSet_2|$, $\deg f_{\rootSet_1}=|\rootSet_1|$, $\deg_{\rootSet_2}=|\rootSet_2|$ and $\deg f_{\rootSet_1\cap\rootSet_2} = |\rootSet_1\cap\rootSet_2|$.
  It follows from \cite[Proposition 5.12]{gluesing2021introduction} that the minimal polynomial of $\rootSet_1\cup\rootSet_2$ is
  \begin{align*}
    f_{\rootSet_1\cup\rootSet_2} = \lclm(f_1, f_2)
  \end{align*}
  and from \cite[Eq.(24)]{ore1933theory} that
  \begin{align*}
    \deg\gcrd(f_1,f_2)=&\deg f_1+\deg f_2-\deg \lclm(f_1,f_2) \\
    =& |\rootSet_1|+|\rootSet_2|-|\rootSet_1\cup\rootSet_2|\\
    =& |\rootSet_1\cap\rootSet_2|= \deg f_{\rootSet_1\cap\rootSet_2}
  \end{align*}
\end{IEEEproof}

\ref{p:xlrdivision}:
For $t\in\bbN$ and any $f\in\FrobPolysn$, $\SkewVar^t |_l f \iff \SkewVar^t|_r f$.
In this case, we may write just $\SkewVar^t| f$.
\begin{IEEEproof}
  If $\SkewVar^t|_l f$, then with some $g\in\FrobPolysn$ we can write $f= \SkewVar^t\cdot g = \FrobautPolyt{g}{t}\cdot \SkewVar^t$, where $\FrobautPolyt{g}{t}=\sum_i \Frobaut^t(g_i)\SkewVar^i$. Then it is obvious that $X^t|_rf$. Similarly, if $\SkewVar^t |_r f$, we can write $f = g\cdot \SkewVar^t = \SkewVar^t\cdot \FrobautPolyt{g}{-t}$ and it is obvious that $X^t|_l f$. This property has been also shown in \cite[Theorem 7]{ore1933theory}.
\end{IEEEproof}

\ref{p:xdivision}:
  For $t\in\bbN$ and any $f_1, f_2 \in\FrobPolysn$ such that $\SkewVar\not\divides f_2$, then $\SkewVar^t| (f_1\cdot f_2)$ if and only if $\SkewVar^t| f_1$.
\begin{IEEEproof}
  We first show $\SkewVar^t| (f_1\cdot f_2)\Longleftarrow \SkewVar^t| f_1$. Suppose $\SkewVar^t| f_1$, then we can write $f_1 = \SkewVar^t\cdot f_1'$ with some $f_1'\in\FrobPolysn$. Then $f_1\cdot f_2= \SkewVar^t\cdot f_1'\cdot f_2$ and it can be seen that $\SkewVar^t|_l (f_1\cdot f_2)$. By \ref{p:xlrdivision}, we have $\SkewVar^t|(f_1\cdot f_2)$.

  For the other direction, we first show that $\SkewVar| (f_1\cdot f_2)\implies\SkewVar| f_1$ by contradiction.
  Assume $\SkewVar\not\divides f_1$, then we can write $f_1= f_1'+a$ with some $f_1'\in\FrobPolysn$ such that $\SkewVar|f_1'$ and $a\in R_\numMulPolyVar \setminus\{0\}$.
  Since $\SkewVar\not\divides f_2$, we can write $f_2 =f_2'+b$, with some $f_2'\in\FrobPolysn$ such that $\SkewVar|f_2'$ and $b\in R_\numMulPolyVar \setminus\{0\}$. Then,
  \begin{align*}
    f_1\cdot f_2 &= (f_1'+a)(f_2'+b)\\
                 &= f_1'\cdot f_2' +a\cdot f_2' + f_1'\cdot b+a\cdot b
  \end{align*}
  where the first three summands are all divisible by $\SkewVar$ but $a\cdot b\neq 0$ (since $R_\numMulPolyVar $ is a ring without zero divisor) and $\SkewVar\not\divides a\cdot b$. This implies $\SkewVar\not\divides (f_1\cdot f_2)$, which is a contradiction.
  We can then extend the following steps $t$ times and the statement is proven.
  Note that $\SkewVar^2|(f_1\cdot f_2) \implies \SkewVar | (f_1\cdot f_2) \implies \SkewVar|f_1$. Write $f_1=\SkewVar\cdot g$ with some $g\in\FrobPolysn$, then
  \begin{align*}
    \SkewVar^2|(f_1\cdot f_2) &\implies \SkewVar | (g\cdot f_2) \underset{\SkewVar\not\divides f_2}{\implies} \SkewVar|g \\
    &\implies (\SkewVar\cdot \SkewVar)|(\SkewVar\cdot g)\implies \SkewVar^2| f_1\ .
  \end{align*}
\end{IEEEproof}
For any $\zeroSet_i\subseteq[n],i\in[k], l\in[\ell]$, we
denote $\zeroSet_i^{(l)}\coloneq\{t\ |\ \indMap(l,t) \in \zeroSet_i\}$ and $\rootSet_i^{(l)}=\{ a_l \betalt^{q-1} \ |\ t \in \zeroSet_i^{(l)}\}$.
We need some properties of the set of roots of skew polynomials in order to prove \ref{p:gcrd2polys}. 
It follows from \cref{lem:noMoreZero} that $f_{\rootSet_i}$ only vanishes on $\rootSet_i$ while evaluating on $\locSet$. The following lemma gives the structure of the roots of $f_{\rootSet_i}$ while evaluating on $\mulVarRng$.
\begin{lemma}[{\cite[Theorem 4]{liu2015construction}}]
  \label{lem:structureRootsBlock}
  For $l=1,\dots, \ell$, let $f_i^{(l)}$ be the minimal polynomial of $\rootSet_i^{(l)}$ and
  $\overline{\rootSet_i^{(l)}}\coloneq\{\alpha\in\mulVarRng ~|~ f_i^{(l)}(\alpha)=0\}$.
  Then, for all $l=1,\dots, \ell$,
  \begin{align}
      \overline{\rootSet_i^{(l)}} &= \{a_l\beta^{q-1}\ |\ \beta\in \myspan{\betalt}_{t\in\zeroSet_i^{(l)}}\setminus \{0\}\}\subseteq C_\Frobaut(a_l)\label{eq:rootSetBarl}\\
      |\overline{\rootSet_i^{(l)}}|&=q^{|\zeroSet_i^{(l)}|}-1 \label{eq:rootSetBarlSize}
  \end{align}
  where $C_\Frobaut(a_l)$ is the $\Frobaut$-conjugacy class of $a_l$ as defined in \cref{def:conjugacyClasses}.
  \end{lemma}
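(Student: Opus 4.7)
The plan is to show the containment $\supseteq$ in \eqref{eq:rootSetBarl} by induction, the opposite containment $\subseteq$ by degree counting, and the cardinality \eqref{eq:rootSetBarlSize} by analyzing the fibers of the parametrization $\beta\mapsto a_l\beta^{q-1}$. The setup I would use is to pass to the fraction field $\tilde{K}\coloneqq\mathrm{Frac}(R_n)$, noting that since $R_n$ is an integral domain and commutative, $a_l\beta^{q-1}=\Frobaut(\beta)\,a_l\,\beta^{-1}$ for every nonzero $\beta$, identifying $a_l\beta^{q-1}$ as an element of the $\Frobaut$-conjugacy class $C_\Frobaut(a_l)$ in $\tilde{K}$ in the sense of \cref{def:conjugacyClasses}, and that the subfield of $\tilde{K}$ fixed by $\Frobaut$ is still exactly $\Fq$.

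For the inclusion $\supseteq$, set $V_r\coloneqq\myspan{\betalt}_{t\in\zeroSet_i^{(l)}}$. I would show by induction on $r\coloneqq|\zeroSet_i^{(l)}|$ that every $\beta\in V_r\setminus\{0\}$ produces a root $a_l\beta^{q-1}$ of $f_i^{(l)}$. The base case $r=1$ is immediate since $(c\beta_{l,t})^{q-1}=\beta_{l,t}^{q-1}$ for all $c\in\Fq^*$, which is the unique root of $f_i^{(l)}=\SkewVar-a_l\beta_{l,t}^{q-1}$. For the inductive step, pick $t_0\in\zeroSet_i^{(l)}$, write $V_r=V_{r-1}\oplus\Fq\beta_{l,t_0}$, and use Newton interpolation \eqref{eq:newtonInterpolation} in $\tilde{K}[\SkewVar;\Frobaut]$ to factor $f_i^{(l)}=(\SkewVar-\gamma)\cdot g$, with $g$ the minimal polynomial of the smaller conjugate set (whose vanishing locus in $R_n$ equals $\{a_l\beta^{q-1}:\beta\in V_{r-1}\setminus\{0\}\}$ by the inductive hypothesis) and $\gamma$ the explicit $\Frobaut$-conjugate produced by the interpolation. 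The core of the step is to verify that $(\SkewVar-\gamma)\cdot g$ vanishes at $a_l(\beta'+c\beta_{l,t_0})^{q-1}$ for every $\beta'\in V_{r-1}$ and $c\in\Fq$ with $\beta'+c\beta_{l,t_0}\neq 0$; this follows by combining the product rule for skew remainder evaluation with the $\Fq$-linearity of each $\Frobaut^j$ on $\tilde{K}$.

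For the reverse inclusion $\subseteq$ and the cardinality, I would proceed as follows. By \cref{prop:blockPind} and \cref{prop:subsetPind}, $\rootSet_i^{(l)}$ is P-independent, so $\deg f_i^{(l)}=r$. By \cref{lem:evaluation}, the skew remainder evaluation at $\alpha\in R_n$ equals the commutative polynomial expression $\sum_{j=0}^{r}(f_i^{(l)})_j\,\alpha^{(q^j-1)/(q-1)}$, a usual polynomial in $\alpha$ with coefficients in $R_n$, which therefore has only finitely many roots in the field $\tilde{K}$ and hence in $R_n\subseteq\tilde{K}$. The resulting degree bound combined with the $\supseteq$ inclusion forces equality in \eqref{eq:rootSetBarl}. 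Equation \eqref{eq:rootSetBarlSize} then records the size of the right-hand side, obtained from the fiber structure of $\beta\mapsto a_l\beta^{q-1}$ on $V_r\setminus\{0\}$: since $R_n$ is an integral domain, the equation $a_l\beta^{q-1}=a_l\beta'^{q-1}$ forces $(\beta/\beta')^{q-1}=1$ in $\tilde{K}$, whose only solutions lie in $\Fq^*$, so the fibers are precisely $\Fq^*$-orbits.

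The main obstacle is the inductive step for $\supseteq$: it is the skew-polynomial analogue of the classical fact that the kernel of a $q$-linearized polynomial in $\Fqm[x]$ is an $\Fq$-subspace. The adaptation to the polynomial ring $R_n$ becomes routine once one works inside $\tilde{K}[\SkewVar;\Frobaut]$ and uses that its Frobenius-fixed subfield is again $\Fq$, so that the standard theory of skew polynomial rings over a field applies verbatim.
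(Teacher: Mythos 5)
The paper offers no proof of this lemma at all --- it is imported as a citation of \cite[Theorem 4]{liu2015construction} --- so there is nothing internal to compare against; your proposal has to stand on its own. Your argument for the set equality \eqref{eq:rootSetBarl} is sound and follows the standard Lam--Leroy route: the base case, the Newton-interpolation factorization for $\supseteq$ (whose crux, as you correctly isolate, is that $\beta\mapsto f(a_l\beta^{q-1})\cdot\beta$ is $\Fq$-linear, since $N_j(a_l\beta^{q-1})\cdot\beta=N_j(a_l)\,\beta^{q^j}$), and the degree count via \cref{lem:evaluation} for $\subseteq$ all go through over $\mathrm{Frac}(\mulVarRng)$ exactly as you describe.

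The genuine problem is the cardinality \eqref{eq:rootSetBarlSize}. Your fiber analysis is correct --- $a_l\beta^{q-1}=a_l\beta'^{q-1}$ forces $\beta/\beta'\in\Fq^{*}$, so the fibers of $\beta\mapsto a_l\beta^{q-1}$ on $\myspan{\betalt}_{t\in\zeroSet_i^{(l)}}\setminus\{0\}$ are full $\Fq^{*}$-orbits of size $q-1$ --- but then the image has $(q^{|\zeroSet_i^{(l)}|}-1)/(q-1)$ elements, not $q^{|\zeroSet_i^{(l)}|}-1$. The same number falls out of your own degree count: the commutative polynomial $\sum_j f_j\,\alpha^{(q^j-1)/(q-1)}$ is monic of degree $(q^{r}-1)/(q-1)$ in $\alpha$, so it cannot have $q^{r}-1$ roots when $q>2$; already for $r=1$ the minimal polynomial $\SkewVar-a_l\beta_{l,t}^{q-1}$ has exactly one root, not $q-1$. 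You therefore cannot assert that the fiber structure ``records'' \eqref{eq:rootSetBarlSize}; what your argument actually establishes is $|\overline{\rootSet_i^{(l)}}|=(q^{|\zeroSet_i^{(l)}|}-1)/(q-1)$, i.e.\ the formula as printed counts the parametrizing vectors $\beta$ rather than the distinct roots $a_l\beta^{q-1}$ and is off by the factor $q-1$ (the discrepancy propagates to the cardinality in \cref{thm:rootsStructure}, but is harmless downstream since only the set description \eqref{eq:rootSetBarl} is used in the proof of \ref{p:gcrd2polys}). You should state the corrected count explicitly and flag the mismatch rather than claim to have derived the stated formula.
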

  \begin{theorem}
  \label{thm:rootsStructure}
  Let $\rowpoly_{i}$ be the minimal polynomial of $\rootSet_i$. Denote the set of roots of $f_i$ while evaluating on $\mulVarRng$ by $\overline{\rootSet_i}\coloneq\{\alpha\in\mulVarRng ~|~ f_i(\alpha)=0\}$. Then
\begin{align}
\label{eq:rootsZi}
  \overline{\rootSet_i} &=  \bigcup_{l=1}^{\ell} \overline{\rootSet_i^{(l)}}, \text{ where }\overline{\rootSet_i^{(l)}} \text{ is as in }\eqref{eq:rootSetBarl}\\
  |\overline{\rootSet_i}| &= \sum_{l=1}^{\ell}|\overline{\rootSet_i^{(l)}}| = \sum_{l=1}^{\ell}q^{|\zeroSet_i^{(l)}|}-\ell.
\end{align}
\end{theorem}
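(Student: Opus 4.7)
The plan is to leverage the block decomposition of $\rootSet_i$ together with the already-established structural result \cref{lem:structureRootsBlock} for a single conjugacy class, and then glue the blocks via properties of the least common left multiple.

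First, I would observe that $\rootSet_i$ decomposes as the disjoint union $\rootSet_i = \bigsqcup_{l=1}^\ell \rootSet_i^{(l)}$, where $\rootSet_i^{(l)} \subseteq C_\Frobaut(a_l)$ (by construction of $\locSet$) and the $C_\Frobaut(a_l)$ are pairwise disjoint by \cref{thm:conjugacyClasses}. By \cref{lem:codeLocatorsPind}, the union $\rootSet_i$ is P-independent, so \cref{prop:PindRankVand} gives $\deg f_i = |\rootSet_i| = \sum_l |\rootSet_i^{(l)}| = \sum_l \deg f_i^{(l)}$. Writing $f_i = \lclm_{l=1}^\ell f_i^{(l)}$ via \eqref{eq:lclm}, we obtain that each $f_i^{(l)} |_r f_i$ and the degree of the lclm equals the sum of the degrees of the components.

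The inclusion $\bigcup_l \overline{\rootSet_i^{(l)}} \subseteq \overline{\rootSet_i}$ then follows immediately: if $\alpha \in \overline{\rootSet_i^{(l)}}$ then $(\SkewVar-\alpha) |_r f_i^{(l)} |_r f_i$ (by \cref{def:eva_division} for the former, by the lclm property for the latter), which implies $f_i(\alpha) = 0$ by \cref{def:eva_division} again. The harder direction is the reverse inclusion. Here I would argue by induction on $\ell$, using the product-evaluation rule for skew polynomials: if $f = h \cdot g$ and $g(\alpha) \neq 0$, then $f(\alpha) = 0 \iff h(\alpha^{g(\alpha)}) = 0$, where $\alpha^{g(\alpha)} = \Frobaut(g(\alpha))\alpha\, g(\alpha)^{-1}$ lies in the same conjugacy class as $\alpha$. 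Thus a root of $f_i = \lclm_l f_i^{(l)}$ that is not a root of any individual factor with roots in $C_\Frobaut(a_{l'})$ ($l' \neq l$) must be a root of the $f_i^{(l)}$-component, and in particular must belong to $C_\Frobaut(a_l)$. Since a root of $f_i$ lying in a single conjugacy class $C_\Frobaut(a_l)$ forces it (through this product-rule reduction) to be a root of $f_i^{(l)}$, and hence to lie in $\overline{\rootSet_i^{(l)}}$ by \cref{lem:structureRootsBlock}, the reverse inclusion follows.

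The cardinality statement is then immediate: since the conjugacy classes $C_\Frobaut(a_l)$ are pairwise disjoint and $\overline{\rootSet_i^{(l)}} \subseteq C_\Frobaut(a_l)$, the union $\bigcup_l \overline{\rootSet_i^{(l)}}$ is disjoint, and summing $|\overline{\rootSet_i^{(l)}}| = q^{|\zeroSet_i^{(l)}|} - 1$ from \eqref{eq:rootSetBarlSize} over $l \in [\ell]$ yields $|\overline{\rootSet_i}| = \sum_l q^{|\zeroSet_i^{(l)}|} - \ell$. The main obstacle is the reverse inclusion in step two, which requires a careful application of the product/evaluation rule for skew polynomials to show that a root of $f_i$ must lie within the union $\bigcup_l C_\Frobaut(a_l)$ and then refine to membership in the corresponding $\overline{\rootSet_i^{(l)}}$; some care is needed because the evaluation takes place in the ring $\mulVarRng$ rather than the field $\Fqm$, so the notion of conjugacy class and the invertibility required by the conjugation formula should be interpreted via the fraction field of $\mulVarRng$.
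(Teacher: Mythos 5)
Your overall decomposition is the same as the paper's: split $\rootSet_i$ into its per-block pieces $\rootSet_i^{(l)}$, invoke \cref{lem:structureRootsBlock} for each block, and observe that the union is disjoint because $\overline{\rootSet_i^{(l)}}\subseteq C_\Frobaut(a_l)$ and these classes are pairwise disjoint, so the cardinality formula follows. Where you diverge is the gluing step $\overline{\bigcup_l \rootSet_i^{(l)}}=\bigcup_l\overline{\rootSet_i^{(l)}}$: the paper disposes of it in one line by citing \cite[Corollary 4.4]{LamLer2004} (the P-closure of a union of P-independent sets lying in distinct conjugacy classes is the union of the P-closures), whereas you attempt a self-contained argument via the product-evaluation rule and induction on $\ell$. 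Your forward inclusion and the counting step are fine.

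The reverse inclusion, however, is only a sketch and leaves the essential work undone. To run your induction you must write $f_i=h_l\cdot f_i^{(l)}$ and control the roots of the cofactor $h_l$; this requires identifying $h_l$ as the minimal polynomial of the conjugated set $\{\gamma^{f_i^{(l)}(\gamma)} \mid \gamma\in\bigcup_{l'\neq l}\rootSet_i^{(l')}\}$, a nontrivial structural fact about the $\lclm$ that you assert implicitly but do not establish. Moreover, the product rule $(h\cdot g)(\alpha)=h\bigl(\alpha^{g(\alpha)}\bigr)\,g(\alpha)$ needs $g(\alpha)$ to be a unit; over $\mulVarRng$ it almost never is, and your proposed fix of passing to the fraction field $K$ of $\mulVarRng$ creates two further obligations you do not discharge: (i) $\Frobaut$ extends to $K$ only as the $q$-power \emph{endomorphism}, not an automorphism, so the remainder-evaluation and product machinery must be re-justified in that setting; and (ii) the $a_l$ must remain pairwise non-conjugate over $K$ (this does hold, since any solution of $c^{q-1}=a_{l'}a_l^{-1}$ is algebraic over $\Fqm$ and hence already lies in $\Fqm$, but it needs to be said, as the whole argument collapses if the classes merge). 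None of this is insurmountable, but as written the hard half of the theorem rests on an unproven reduction; citing the Lam--Leroy result, as the paper does, is the cleaner route.
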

\begin{IEEEproof}
Note that for all $l=1,\dots, \ell$, $\rootSet_i^{(l)}$ are P-independent and they are from different conjugacy classes. It follows from \cite[Corollary 4.4]{LamLer2004} that for such sets,
$\overline{\bigcup_{l=1}^{\ell} \rootSet_i^{(l)}}=\bigcup_{l=1}^{\ell} \overline{\rootSet_i^{(l)}}$.
\end{IEEEproof}
It is clear that the $\alpha$'s in \eqref{eq:fZt} are P-independent. It follows from \cref{def:PindSet} that $\deg \fZt(\zeroSet,\tfzt) = |\zeroSet|+\tfzt$. By \cref{thm:rootsStructure}, the set of roots of $\fZt(\zeroSet,\tfzt)$ is
\begin{align}
\label{eq:fZtRoots}
\{0\}^{\tfzt}\cup\bigcup_{l=1}^{\ell}\{a_l\beta^{q-1}\ |\ \beta\in \myspan{\betalt}_{t\in\zeroSet^{(l)}}\setminus\{0\}\}
\end{align}
where $\zeroSet^{(l)}=\{t\ |\ \indMap(l,t)\in\zeroSet\}$. The notation $\{0\}^{\tfzt}$ is to imply that $\SkewVar^\tfzt\divides \fZt(\zeroSet, \tfzt)$ and $X^{\tau+1}\nmid \fZt(\zeroSet,\tfzt)$.

\ref{p:gcrd2polys}:
  For any $f_1 = \fZt(\zeroSet_1,\tfzt_1), f_2 = \fZt(\zeroSet_2, \tfzt_2)\in\Skewnk$, we have
  \begin{align*}
    \gcrd(f_1,f_2)=\fZt(\zeroSet_1\cap \zeroSet_2, \min\{\tfzt_1,\tfzt_2\})\in\Skewnk\ .
  \end{align*}
\begin{IEEEproof}
  We prove the property by showing that the skew polynomials on both side have the same set of roots.
  Denote by $\widebar{\rootSet_1}, \widebar{\rootSet_2}, \widebar{\rootSet_{1,2}}\subseteq \mulVarRng$ the set of all roots in $\mulVarRng$ of $f_1,f_2, \fZt(\zeroSet_1\cap \zeroSet_2, \min\{\tfzt_1,\tfzt_2\})$, respectively.
  By the structure of roots of $\fZt(\zeroSet,t)$ given in \eqref{eq:fZtRoots},
  \begin{align*}
    \widebar{\rootSet_i} &=\{0\}^{\tfzt_i}\cup\bigcup_{l=1}^{\ell}\{a_l\beta^{q-1}\ |\ \beta\in \myspan{\betalt}_{t\in\zeroSet_i^{(l)}}\setminus\{0\}\},\quad i=1,2\\
    \widebar{\rootSet_{1,2}} &= \{0\}^{\min\{\tfzt_1,\tfzt_2\}}\cup\bigcup_{l=1}^{\ell}\{a_l\beta^{q-1}\ |\ \beta\in \myspan{\betalt}_{t\in\zeroSet_{1,2}^{(l)}}\setminus\{0\}\}
  \end{align*}
  where $\zeroSet_i^{(l)}\coloneq\{t\ |\ \indMap(l,t)\in\zeroSet_i\}$ and $\zeroSet_{1,2}^{(l)}\coloneq\{t\ |\ \indMap(l,t)\in\zeroSet_{1}\cap\zeroSet_2\}$.
  The set of roots of $\gcrd(f_1,f_2)$ is
  \begin{align*}
    \widebar{\rootSet_1}\cap \widebar{\rootSet_2} &= \{0\}^{\min\{\tfzt_1,\tfzt_2\}}\\
    &\qquad \cup\bigcup_{l=1}^{\ell}\{a_l\beta^{q-1}\ |\ \beta\in \myspan{\betalt}_{t\in\zeroSet_1^{(l)}\cap\zeroSet_2^{(l)}}\setminus\{0\}\}\ .
  \end{align*}
  It can be seen that $\zeroSet_1^{(l)}\cap\zeroSet_2^{(l)} = \zeroSet_{1,2}^{(l)}, \forall l\in[\ell]$. Hence, $\widebar{\rootSet_1}\cap \widebar{\rootSet_2}=\widebar{\rootSet_{1,2}}$.
\end{IEEEproof}

\ref{p:reducevar}:
  Let $f=\fZt(\zeroSet,\tfzt)\in\Skewnk$ and let $f'=f|_{\beta_{\ell, n_\ell}=0}\in R_{\numMulPolyVar-1}[\SkewVar;\Frobaut]$ (namely, we substitute $\beta_{\ell, n_\ell}=0$ in each coefficient of $f$). Then $f'\in\cS_{n-1,k}$ and
\begin{align*}
    f'= \begin{cases}
    \fZt(\zeroSet,\tfzt) & n\not\in \zeroSet,\\
    \fZt(\zeroSet\setminus\{n\}, \tfzt+1) & n\in \zeroSet.
    \end{cases}
\end{align*}
\begin{IEEEproof}
  Denote by $\rootSet$ the subset of $\locSet$ corresponding to $\zeroSet$ as in \eqref{eq:rootSet}.
  It is trivial that $f'\in \cS_{n-1,k}$ and $f'=\fZt(\zeroSet,\tfzt)$ when $n\not\in \zeroSet$. Suppose $n\in \zeroSet$, then $a_\ell\beta_{\ell,n_\ell}^{q-1}\in\rootSet$.
  Let $g = \underset{\alpha\in\rootSet\setminus\{ a_\ell\beta_{\ell,n_\ell}\}}{\lclm}\{\SkewVar-\alpha\}$, then
  \begin{align*}
      f'=&\SkewVar^\tfzt \cdot (\underset{\alpha\in\rootSet}{\lclm}\{\SkewVar-\alpha\}) |_{\beta_{\ell,n_\ell}=0}\\
      =&\SkewVar^\tfzt \cdot\left(\left(\SkewVar-(a_\ell\beta_{\ell,n_\ell}^{q-1})^{g(a_\ell\beta_{\ell,n_\ell}^{q-1})}\right)\cdot g \right)|_{\beta_{\ell,n_\ell}=0}\\
      =&\SkewVar^\tfzt \cdot \SkewVar\cdot g \\
      =&\SkewVar^{\tfzt+1} \cdot g \\
      =&\SkewVar^{\tfzt+1} \cdot(\underset{\alpha\in\rootSet\setminus\{a_\ell\beta_{\ell,n_\ell}^{q-1} \}}{\lclm}\{\SkewVar-\alpha\}) \\
      =&\fZt(\zeroSet\setminus\{n\}, \tfzt+1)\in\cS_{n-1,k}\ ,
  \end{align*}
  where the second line holds by Newton interpolation in \eqref{eq:newtonInterpolation}.
\end{IEEEproof}

\section{Derivation of $\deg_{\betalt}P_{\bT}$}

\label[appendix]{apendix:fieldSizeSkewPoly}
From \eqref{eq:skewPolyEachRow}, the entry $T_{i,j+1}$ in $\bT$ is the coefficient of $\SkewVar^{j}$ in $\rowpoly_i(\SkewVar)$. By \eqref{eq:lclmMinPoly}, it can be seen that $\rowpoly_i(\SkewVar)$ is monic, and therefore $T_{ik}=1$. For $1\leq h<k$, $T_{ih}$ is a commutative multivariate polynomial in the variables $\betalt$'s and
\begin{align*}
  \deg_{\betalt}T_{ih}\leq \deg_{\betalt} \rowpoly_i(\SkewVar)\ . 
\end{align*}
For any $l=1,\dots,\ell$ and $t=1,\dots,n_l$, to find 
$\deg_{\betalt}\rowpoly_i(\SkewVar)$, consider the \newinline{property}
of $\rowpoly_i(\SkewVar)$ in \newinline{\eqref{eq:rootConstraints}}. 
Suppose that $\indMap(l,t)\in\zeroSet_i$, otherwise $\deg_{\betalt}\rowpoly_i(\SkewVar)=0$. Given $j\in\zeroSet_i$, let $(l,t)=\indMap^{-1}(j)$ s.t.~$\alpha_j=a_l\betalt^{q-1}$, then
\begin{align*}
  \rowpoly_i(\SkewVar) = \underset{\alpha\in\{a_{l'}\beta_{l',t'}^{q-1}\ |\ \indMap(l',t')\in\zeroSet_i\}}{\lclm}\{(\SkewVar-\alpha)\}\ .
\end{align*}
Let $\rowpoly_i'\in\FrobPolys$ be the minimal polynomial of $\zeroSet_i'\coloneq \zeroSet_i\setminus\{j\}$, i.e.,
\begin{align*}
  \rowpoly_i'(\SkewVar) = \underset{\alpha\in\{a_{l'}\beta_{l',t'}^{q-1}\ |\ \indMap(l',t')\in\zeroSet_i'\}}{\lclm}\{(\SkewVar-\alpha)\}\ ,
\end{align*}
whose degree in $\SkewVar$ is $\deg_{\SkewVar}\rowpoly_i'(\SkewVar)=|\zeroSet_i'|=k-2$.
Since $j=\indMap(l,t)\not\in\zeroSet_i'$, the coefficients of $\rowpoly_i'(\SkewVar)$ are independent of $\betalt$, i.e., $\deg_{\betalt}\rowpoly_i'(\SkewVar)=0$.

By the remainder evaluation of skew polynomials in \cref{lem:evaluation},
\begin{align*}
  \rowpoly_i'(a_l\betalt^{q-1}) &= \sum_{h=0}^{k-2} f'_{i,h}N_h(a_l\betalt^{q-1})\ ,
\end{align*}
and we have
\begin{align*}
  \deg_{\betalt}\rowpoly_i'(a_l\betalt^{q-1}) &= \deg_{\betalt} N_{k-2}(a_l\betalt^{q-1})\\
                                       &= \deg_{\betalt}(a_l\betalt^{q-1})^{\frac{q^{k-2}-1}{q-1}}\\
                                       &= q^{k-2}-1\ .
\end{align*}
By the Newton interpolation in \eqref{eq:newtonInterpolation}, we can write
\begin{align*}
  \rowpoly_i(\SkewVar) &= \left( \SkewVar - \Frobaut( \rowpoly_i'(a_l\betalt^{q-1}))\cdot a_l\betalt^{q-1}\cdot \left(\rowpoly_i'(a_l\betalt^{q-1})\right)^{-1}\right)\cdot \rowpoly_i'(\SkewVar)\\
   &= \left( \SkewVar - \left( \rowpoly_i'(a_l\betalt^{q-1})\right)^{q-1}\cdot a_l\betalt^{q-1}\right)\cdot \rowpoly_i'(\SkewVar)\\
  &= \SkewVar\cdot  \rowpoly_i'(\SkewVar) -  \left( \rowpoly_i'(a_l\betalt^{q-1})\right)^{q-1}\cdot a_l\betalt^{q-1}\cdot \rowpoly_i'(\SkewVar)\ .
\end{align*}
Since $\deg_{\betalt}\rowpoly_i'(\SkewVar)=0$ and so is $\deg_{\betalt}(\SkewVar\cdot \rowpoly_i'(\SkewVar))$, we have
\begin{align*}
  \deg_{\betalt} \rowpoly_i(\SkewVar) &= (q-1)\cdot \deg_{\betalt} \rowpoly_i'(a_l\betalt^{q-1}) + \deg_{\betalt}(a_l\betalt^{q-1})\\
                               &= (q-1)\cdot (q^{k-2}-1)+(q-1)\\
  &=(q-1)\cdot q^{k-2}\ ,
\end{align*}
for all $l,t$ such that $\indMap(l,t)\in\zeroSet_i$.
Hence, $\deg_{\betalt}T_{ih}\leq \deg_{\betalt}\rowpoly_i(\SkewVar)=(q-1)q^{k-2}, \forall h\in[k-1]$. Then,
\begin{align}
  \deg_{\betalt}P_{\bT} &= \deg_{\betalt} \det \bT \\
                        &\leq \max_{\pi\in\xi_k}\sum_{h=1}^k \deg_{\betalt}T_{\pi(h),h} \\
                        &\leq(k-1)(q-1)\cdot q^{k-2}\ ,\label{eq:degBetaPT}
\end{align}
where $\xi_k$ denotes the set of permutations of $[k]$ and the $(k-1)$ in the last inequality is because $T_{ik}=1$ and hence $\deg_{\betalt}T_{ik}=0$.

\section{Proof of Main Result in \cref{thm:sufficientCond}}

\label[appendix]{sec:proofMainResult}
Denote $f_{\Omega}\coloneqq \gcrd_{i\in\Omega} f_i$. By \ref{p:gcrd}, $f_{\Omega}$ is equal to the minimal polynomial of the set $Z_\Omega \coloneq \bigcap_{i\in\Omega}Z_i$.

We first show the direction \ref{item:equiv1}$\implies$\ref{item:equiv2}. Suppose \ref{item:equiv2} does not hold and w.l.o.g., assume that for $\Omega=\{1,2,\dots,\subRows\}\subseteq[k]$, $k-\deg f_\Omega<\sum_{i\in\Omega} (k-\deg f_i)$. For $i\in \Omega$, let $f_i=q_i\cdot f_\Omega$ for some $q_i\in \FrobPolysn$. Then for $g_1, \dots, g_\subRows\in\FrobPolysn$ such that $\deg(g_i\cdot f_i)\leq k-1$, the equation $\sum_{i\in \Omega} g_i\cdot q_i = 0$ gives a homogeneous linear system of equations in coefficients of the $g_i$'s. The number of variables is at least $\sum_{i\in\Omega}(k-\deg f_i)$ and the number of equations is at most $k-\deg f_\Omega$, which is smaller than the number of variables by the assumption. Therefore, one can find $g_1,\dots, g_\subRows$, not all zero, that solve the linear system of equations, which contradicts \ref{item:equiv1}.

  We then show the direction \ref{item:equiv2}$\implies$\ref{item:equiv1} by induction.
  We do induction on the parameters $(k,\numRows,n)$ considered in lexicographical order $\prec$.

  For the induction basis, when $(k\geq \numRows=1,n\geq 0)$, \ref{item:equiv1} always holds due to \ref{p:noZeroDiv}, i.e., $g_1\cdot f_1 = 0 $ implies $g_1=0$.

For $(k\geq \numRows\geq 2, n=0)$, both \ref{item:equiv1} and \ref{item:equiv2} never hold therefore they are equivalent. Note that $n=0\implies \rowpoly_i=\SkewVar^{\tfzt_i}$ for all $i\in[k]$. For any $f_i=\SkewVar^{\tfzt_i}$ and $f_j=\SkewVar^{\tfzt_j}$ with $\tfzt_i\neq \tfzt_j$ (w.l.o.g.~assuming $\tfzt_i>\tfzt_j$), there exist $g_i=1$ and $g_j=-\SkewVar^{\tfzt_i-\tfzt_j}$ such that $g_if_i+g_jf_j=0$ and hence \ref{item:equiv1} never holds. Suppose $\tfzt_1\leq\tfzt_2$, then for $\Omega=\{1,2\}$, \eqref{eq:equiv2} becomes $k-\tfzt_1\geq (k-\tfzt_1)+(k-\tfzt_2)$, which contradicts with $\deg \rowpoly_i=|\zeroSet_i|+\tfzt_i\leq k-1$. Hence, \ref{item:equiv2} never holds.

  For $(k\geq\numRows\geq 2, n\geq 1)$, we do the induction with the following hypotheses:
  \begin{enumerate}[label={\bfseries H\arabic*}]
  \item \label{H:fromPre}
    Assume that \ref{item:equiv2}$\implies$\ref{item:equiv1} is true for all parameters $(k',\numRows',n')\prec(k,\numRows,n)$.
  \item \label{H:fromHere}
    Take any $f_1,\dots, f_\numRows\in\Skewnk$ for which \ref{item:equiv2} is true for $(k,\numRows,n)$.
  \end{enumerate}

  The logic of the proof is summarized in Figure \ref{fig:proofLogic}.
  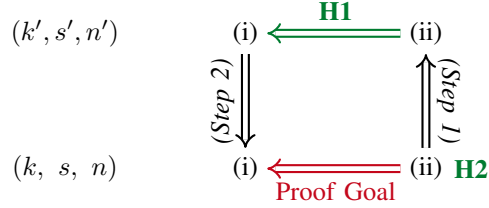
\begin{figure}[h]
    \centering
    \def\x{0.6}
\begin{tikzpicture}
  \node (i) at (0,0) {(i)};
  \node (ii) at ($(i)+(\x*4, 0)$) {(ii)};
  \node (pre) at ($(i)-(\x*4, 0)$) {$(k',s',n')$};
  \node (H1) at ($(i)+(\x*2, \x*0.5)$) {\textcolor{TUMGreenDark}{\ref{H:fromPre}}};

  \node (i2) at ($(i)-(0, \x*3)$) {(i)};
  \node (ii2) at ($(i2)+(\x*4, 0)$) {(ii)};
  \node (here) at ($(i2)-(\x*4, 0)$) {$(k,\ s,\ n)$};
  \node (H2) at ($(ii2)+(\x*1, 0)$) {\textcolor{TUMGreenDark}{\ref{H:fromHere}}};
  \node (Goal) at ($(i2)+(\x*2, -\x*0.5)$) {\textcolor{TUMRed}{Proof Goal}};

  \draw[{Implies}-, thick, double, double distance=\x*3pt, TUMGreenDark] (i.east) -- (ii.west);
  \draw[{Implies}-, thick, double, double distance=\x*3pt] (ii.south) -- (ii2.north);
  \draw[-{Implies}, thick, double, double distance=\x*3pt] (i.south) -- (i2.north);
  \draw[{Implies}-, thick, double, double distance=\x*3pt, TUMRed] (i2.east) -- (ii2.west);

  \node[rotate=-90] (step1) at ($(ii2)+(\x*0.5, \x*1.5)$) {\stepone};
  \node[rotate=90] (step2) at ($(i2)+(-\x*0.5, \x*1.5)$) {\steptwo};
  
\end{tikzpicture}
    \caption{Proof logic for $\ref{item:equiv2}\implies\ref{item:equiv1}$ with initial hypothesis \ref{H:fromPre} and \ref{H:fromHere}.}
    \label{fig:proofLogic}
  \end{figure}

  Starting from \ref{H:fromHere}, we have that for all the subsets $\varnothing\neq \Omega\subseteq[\numRows]$, the inequality \eqref{eq:equiv2} holds.
  We will prove that \ref{item:equiv1} is true for $(k,\numRows, n)$ via \stepone $\rightarrow$ \ref{H:fromPre} $\rightarrow$ \steptwo under different cases:

  \begin{enumerate}[leftmargin=3.4em,label={\bfseries \textit{Case \arabic*}}]
  \item \label{case_s3n2} For $\numRows\geq 3$ and $n\geq 2$,
    \begin{enumerate}[leftmargin=2em, label={\bfseries \textit{Case 1\alph*}}]
    \item \label{case_s3n2_a} $\forall i\in[\numRows]$, $\tfzt_i\geq 1$ (i.e., $|Z_i|\leq k-2$).
      \textit{(In this case, we do induction by reducing $k$.)}
    \item \label{case_s3n2_b} $\exists$ a unique $i\in[\numRows]$ such that $\tfzt_i=0$.
      \textit{(In this case, we do induction by reducing $k$. We may need to reduce $s$ as well.)}
    \item \label{case_s3n2_c} $\exists \Omega \subset [\numRows]$ with $2\leq |\Omega|\leq \numRows-1$ such that \eqref{eq:equiv2} holds with equality.
      \textit{(In this case, we do induction by reducing $s$.)}
    \item \label{case_s3n2_d} $\forall \Omega\subset [\numRows]$ with $2\leq |\Omega|\leq \numRows-1$, \eqref{eq:equiv2} holds strictly and $\exists$ at least two $i\in[\numRows]$ such that $\tfzt_i=0$.
      \textit{(In this case, we do induction by reducing $n$.)}
    \end{enumerate}
  \item \label{case_s2n2} For $\numRows=2$ and $n\geq 2$,
    \begin{enumerate}[leftmargin=2em, label={\bfseries \textit{Case 2\alph*}}]
    \item \label{case_s2n2_a} $\forall i\in\{1,2\}$, $\tfzt_i\geq 1$ (i.e., $|Z_i|\leq k-2$).
      \textit{(The same as \ref{case_s3n2_a}.)}
    \item \label{case_s2n2_b} $\exists$ a unique $i\in\{1,2\}$ such that $\tfzt_i=0$.
      \textit{(The same as \ref{case_s3n2_b}.)}
    \item \label{case_s2n2_c} $\forall i\in\{1,2\}$, $\tfzt_i=0$.
      \textit{(In this case, we do induction by reducing $n$.)} 
    \end{enumerate}
  \item \label{case_s2n1} For $\numRows\geq 2$ and $n=1$,
    \begin{enumerate}[leftmargin=2em, label={\bfseries \textit{Case 3\alph*}}]
    \item \label{case_s2n1_a} $\forall i\in[\numRows]$, $\tfzt_i\geq 1$ (i.e., $|Z_i|\leq k-2$).
      \textit{(The same as \ref{case_s3n2_a}.)}
    \item \label{case_s2n1_b} $\exists$ a unique $i\in\{1,2\}$ such that $\tfzt_i=0$.
      \textit{(The same as \ref{case_s3n2_b}.)}
    \item \label{case_s2n1_c} $\exists$ at least two $i\in[\numRows]$, $\tfzt_i=0$.
      \textit{(We show that this case cannot happen if \ref{item:equiv2} is true for $(k\geq s\geq 2, n=1)$.)}
    \end{enumerate}
  \end{enumerate}
  We illustrate the reduction of $s$ and $n$ of the induction under these cases in \cref{fig:induction_reducing}. We omitted the parameter $k$ for clarity and simplicity, since only $s,n$ are essential in classifying the different cases.
  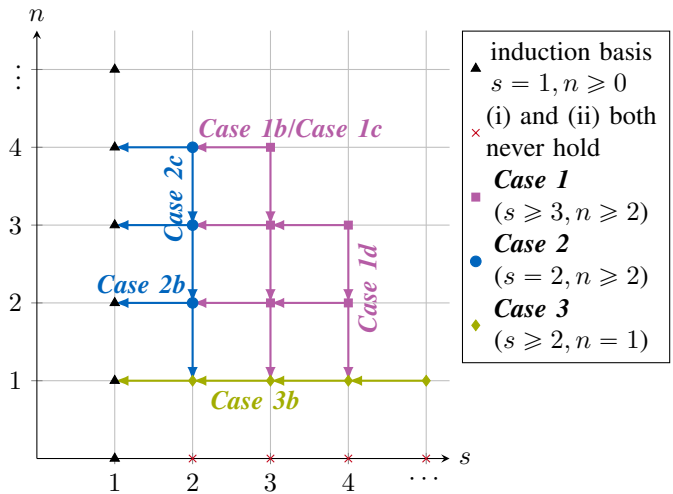
\begin{figure}[htb]
    \centering
    \begin{tikzpicture}

\begin{axis}[
    xmin=0, xmax=5.3,
    ymin=0, ymax=5.5,
    xlabel=$s$,
    ylabel=$n$,
    grid=major,
    xtick={0,1,2,3,4,5},
    xticklabels = {$0$,$1$,$2$, $3$, $4$, $\cdots$},
    ytick={0,1,2,3,4,5,6,7},
    yticklabels={$0$,$1$,$2$, $3$, $4$, $\vdots$},
    unit vector ratio=1 1,
    legend pos=outer north east,
    legend style={cells={align=left}},
    axis lines=middle,
    xlabel style={right},
    ylabel style={above}
]

\addplot[only marks, mark=triangle*, mark size=2pt, color=black] table {
    1 0
    1 1
    1 2
    1 3
    1 4
    1 5
};
\addlegendentry{induction basis\\$s=1,n\geq 0$}

\addplot[only marks, mark=x, mark size=2pt, color=TUMRed] table {
    2 0
    3 0
    4 0
    5 0
};
\addlegendentry{\ref{item:equiv1} and \ref{item:equiv2} both\\never hold}

\addplot[only marks, mark=square*, mark size=1.5pt,color=TUMPink] table {
    3 2
    4 2
    3 3
    4 3
    3 4
};
\addlegendentry{\ref{case_s3n2}\\($s\geq 3,n\geq 2$)}

\addplot[only marks, mark=*, mark size=2pt,color=TUMBlue] table {
    2 2
    2 3
    2 4
};
\addlegendentry{\ref{case_s2n2}\\($s=2, n\geq 2$)}
\addplot[only marks, mark=diamond*, mark size=2pt,color=TUMGreen] table {
    2 1
    3 1
    4 1
    5 1
};
\addlegendentry{\ref{case_s2n1}\\($s\geq 2, n=1$)}

\draw[-latex, thick, color=TUMPink] (3,4) -- (3,3);
\draw[-latex, thick, color=TUMPink] (3,3) -- (3,2);
\draw[-latex, thick, color=TUMPink] (3,2) -- (3,1);
\draw[-latex, thick, color=TUMPink] (4,3) -- node [below, yshift=-2ex, rotate=90] {\ref{case_s3n2_d}} (4,2);
\draw[-latex, thick, color=TUMPink] (4,2) -- (4,1);

\draw[-latex, thick, color=TUMPink] (3,4) -- node [above, xshift=5ex] {\ref{case_s3n2_b}/\ref{case_s3n2_c}} (2,4);
\draw[-latex, thick, color=TUMPink] (4,2) -- (3,2);
\draw[-latex, thick, color=TUMPink] (3,2) -- (2,2);
\draw[-latex, thick, color=TUMPink] (4,3) -- (3,3);
\draw[-latex, thick, color=TUMPink] (3,3) -- (2,3);

\draw[-latex, thick, color=TUMBlue] (2,2) -- node [above, xshift=-1ex] {\ref{case_s2n2_b}} (1,2);
\draw[-latex, thick, color=TUMBlue] (2,3) -- (1,3);
\draw[-latex, thick, color=TUMBlue] (2,4) -- (1,4);
\draw[-latex, thick, color=TUMBlue] (2,4) -- node [above, yshift=-1ex, rotate=90] {\ref{case_s2n2_c}} (2,3);
\draw[-latex, thick, color=TUMBlue] (2,3) -- (2,2);
\draw[-latex, thick, color=TUMBlue] (2,2) -- (2,1);

\draw[-latex, thick, color=TUMGreen] (2,1) -- (1,1);
\draw[-latex, thick, color=TUMGreen] (3,1) -- node [below, xshift=2ex] {\ref{case_s2n1_b}} (2,1);
\draw[-latex, thick, color=TUMGreen] (4,1) -- (3,1);
\draw[-latex, thick, color=TUMGreen] (5,1) -- (4,1);

\end{axis}

\end{tikzpicture}

    \caption{Illustration of the induction for \ref{item:equiv2}$\implies$\ref{item:equiv1} under difference cases.}
    \label{fig:induction_reducing}
  \end{figure}

In the following we demonstrate the induction for each case in detail.

\ref{case_s3n2_a}:
We conveniently denote $k'=k-1$. For all $i\in[\numRows]$, we can write $f_i=\SkewVar\cdot f_i'$, where $f_i'=\fZt(Z_i,\tfzt_i-1)\in \cS_{n,k-1}=\cS_{n,k'}$. Note that since $\min_{i\in[\numRows]} \tfzt_i \geq 1$, we have $\deg f_\Omega\geq 1$ for any $\Omega\subseteq [\numRows]$. For $\Omega = [\numRows]$, \ref{item:equiv2} implies $k-1\geq k-\deg f_{[\numRows]}\geq \sum_{i\in[\numRows]}(k-\deg f_i)\geq s$.

\stepone \ref{item:equiv2} holds for $(f_1',\dots,f_{\numRows}')$ because for any nonempty $\Omega\subseteq [\numRows]$,
\begin{align}
  k'-\deg f_{\Omega}' =& k-\deg f_{\Omega}\nonumber\\
  {\geq}& \sum_{i\in\Omega}(k-\deg f_i)\label{eq:case2astep1}\\
  =&\sum_{i\in\Omega}(k'-\deg f_i')\nonumber
\end{align}
where \eqref{eq:case2astep1} holds because \ref{item:equiv2} holds for $(f_1,\dots, f_\numRows)$ by \ref{H:fromHere}.
By \ref{H:fromPre}, \ref{item:equiv1} then holds for $(f_1',\dots, f_\numRows')\in \cS_{n,k'}$. Note here that we used the induction hypothesis by reducing $k$ to $k'$.

\steptwo We then show that \ref{item:equiv1} also holds for $(f_1,\dots, f_\numRows)$. Suppose that for $g_1,\dots, g_\numRows\in \FrobPolysn$ with $\deg (g_i\cdot f_i)\leq k-1$, we have $\sum_{i=1}^s g_i\cdot f_i=0=\sum_{i=1}^s g_i\cdot (\SkewVar\cdot f_i')\overset{\text{\ref{p:noZeroDiv},\ref{p:xlrdivision}}}{\implies}\sum_{i=1}^{\numRows} g_i f'_i=0$, which implies that $g_1=\cdots=g_\numRows=0$ since \ref{item:equiv1} holds for $(f_1',\dots, f_\numRows')\in\cS^s_{n,k'}$.

\ref{case_s3n2_b}:
Suppose w.l.o.g.~$\tfzt_\numRows=0$ and write $f_\numRows'=f_\numRows\in\cS_{n,k}$.
For $i\in[\numRows-1], \tfzt_i\geq 1$, then we can write $f_i=\SkewVar\cdot f_i'$, where $f_i'=\fZt(Z_i, \tfzt_i-1)\in\cS_{n,k-1}$. Note that $f_\numRows'=f_\numRows\in\cS_{n,k-1}$ if and only if $\deg f_\numRows\leq k-2$, in which case for $\Omega=[s]$, \ref{H:fromHere} implies
\begin{align*}
  k\geq k-\deg f_{\Omega}\geq \sum_{i\in\Omega} (k-\deg f_i) \geq s+1.
\end{align*}

\stepone We show that \ref{item:equiv2} holds for $(f_1',\dots, f_\numRows')$ when $k$ is replaced by $k'=k-1$. First consider the case of $\Omega\subseteq[s-1]$. Since $\forall i\in [s-1],  \tfzt_i\geq 1$, the claim follows similarly to \ref{case_s3n2_a}. Additionally, by the induction hypothesis for $(k'=k-1,s-1,n)$ we get that \ref{item:equiv1} is true for $(f_1',\dots, f_{\numRows-1}')$.
Then consider the case of $\Omega$ such that $s\in\Omega$. Since $f_\numRows=\lclm_{\alpha\in\{a_l\betalt^{q-1} | \indMap(l,t)\in \zeroSet_\numRows\}}\{(\SkewVar-\alpha)\}$ has no factor $\SkewVar$, we have $\gcrd\{f_\numRows, f_i\}=\gcrd\{f_\numRows', f_i'\}, \forall i\in[s-1]$, hence $f_\Omega=f_\Omega'$ where we define $f'_\Omega=\gcrd_{i\in\Omega}\{f'_i\}$. Then
\begin{align}
  k-1-\deg f_{\Omega}'&=-1+k-\deg f_{\Omega}\nonumber\\
&\geq -1+\sum_{i\in\Omega}\deg (k-\deg f_i)\label{eq:case2bstep1}\\
&= k-1-\deg f_\numRows+\sum_{i\in\Omega\setminus\{s\}}(k-\deg f_i)\nonumber\\
&= k-1-\deg f_\numRows'+\sum_{i\in\Omega\setminus\{s\}}(k-1-\deg f_i')\nonumber\\
&=\sum_{i\in\Omega}(k-1-\deg f_i')\ ,\nonumber
\end{align}
where \eqref{eq:case2bstep1} holds from \ref{H:fromHere}.
By \ref{H:fromPre}, \ref{item:equiv2} $\implies$ \ref{item:equiv1} is true for $(f_1',\dots, f_\numRows')$ with parameters $(k'=k-1,s,n)$ if $\deg f_\numRows'\leq k-2$., which implies $k\geq s+1$.

\steptwo Suppose that for some $g_1,\dots, g_s\in\FrobPolysn$ with $\deg(g_i\cdot f_i)\leq k-1$ we have $\sum_{i=1}^s g_i\cdot f_i=0$.
Then $0 = \sum_{i=1}^s g_i\cdot f_i = g_s\cdot f_s + \sum_{i=1}^{s-1}g_i\cdot (\SkewVar \cdot f_i' )$, which implies $\SkewVar|(g_s\cdot f_s)$. However, since $\SkewVar\not\divides f_s$, by \ref{p:xdivision}, $\SkewVar| g_s$. Then we can write $g_s=g_s'\cdot \SkewVar$ for some $g_s'\in\FrobPolysn$ with $\deg g_s'=\deg g_s-1$.

If $\deg f_s=k-1$, then $\deg g_s'=-1$, implying $g_s=0$. Since \ref{item:equiv1} holds for $(f_1',\dots,f_{s-1}')\in \cS_{n,k-1}^{s-1}$ with the parameter tuple $(k-1,s-1,n)$, $g_1,\dots, g_{s-1}$ are also zero.
Note that here we used the induction hypothesis by reducing $k$ to $k-1$ and $s$ to $s-1$.

If $\deg f_s \leq k-2$, we have
\begin{align*}
  0 =  &  \sum_{i=1}^s g_i\cdot f_i\\
  = & (g_s'\cdot \SkewVar)\cdot f_s+\sum_{i=1}^{s-1}g_i\cdot (\SkewVar\cdot f_i')\\
  = & (g_s'\cdot \SkewVar)\cdot f_s' +\sum_{i=1}^{s-1} (g_i\cdot \SkewVar)\cdot f_i'\ .
\end{align*}
Then $g_1 = \cdots = g_{s-1}=g_s' =0 $ since \ref{item:equiv1} holds for $(f_1',\dots, f_s')\in \cS_{n,k-1}^{s}$ with the parameter tuple $(k-1,s,n)$.
Note that here we used the induction hypothesis by reducing $k$ to $k-1$.
Hence, all $g_1=\cdots = g_s = 0$.

\ref{case_s3n2_c}:
W.l.o.g., assume that \eqref{eq:equiv2} holds with equality for $\Omega'=\{1,\dots,\subRows\}$, $1< \subRows<\numRows$, i.e.,
\begin{align}
  \label{eq:case1-assumption}
  k-\deg f_0 = \sum_{i\in\Omega'}(k-\deg f_i)\ ,
\end{align}
where $f_0=f_{\Omega'}=\gcrd_{i\in\Omega'}f_i$. Since $f_0|_r f_i, \forall i\in\Omega'$, there exists $f_i'\in\FrobPolysn$ such that $f_i=f_i'\cdot f_0$.
Since $\subRows<\numRows$ and $\numRows-\subRows +1 <\numRows$, we split $(f_1,\dots, f_\numRows)\in \Skewnk^{\numRows}$ into two smaller problems $(f_1,\dots,f_\subRows)\in\Skewnk^{\subRows}$ with the parameter tuple $(k,\subRows<\numRows,n)$ and $(f_0, f_{\subRows+1},\dots, f_{\numRows})\in\Skewnk^{\numRows-\subRows+1}$ with the tuple $(k,\numRows-\subRows+1<\numRows, n)$.

\stepone Note that by \ref{H:fromHere}, \ref{item:equiv2} is true for $(f_1,\dots,f_{\subRows})$ and for $(f_0, f_{\subRows+1},\dots, f_{\numRows})$ when $0\not\in\Omega''\subseteq\{0,\subRows+1,\dots,\numRows\}$.
We show in the following that \ref{item:equiv2} is also true for $(f_0, f_{\subRows+1},\dots, f_{\numRows})$ with $0\in\Omega''$:
\begin{align}
  k-\deg f_{\Omega''} \nonumber
  =& k- \deg \gcrd\{f_0, f_{\Omega''\setminus\{0\}}\}\nonumber\\
  =& k- \deg \gcrd\{f_{\Omega'}, f_{\Omega''\setminus\{0\}}\}\nonumber\\
  =& k- \deg \gcrd_{i\in\Omega'\cup \Omega''\setminus\{0\}} f_i\nonumber\\
  {\geq} & \sum_{i\in\Omega'\cup \Omega''\setminus\{0\}} (k-\deg f_i)\label{eq:case1step1}\\
  =& \sum_{i\in\Omega'}(k-\deg f_i) + \sum_{i\in\Omega''\setminus\{0\}}(k-\deg f_i)\nonumber\\
    =& k- \deg f_0 + \sum_{i\in\Omega''\setminus\{0\}}(k-\deg f_i)\label{eq:case1-equal}\\
  =& \sum_{i\in\Omega''}(k-\deg f_i)\nonumber
\end{align}
Note that $\Omega'\cup\Omega''\setminus\{0\}$ is a subset of $\Omega$. Therefore, the inequality in \eqref{eq:case1step1} follows from \ref{H:fromHere}. The equality \eqref{eq:case1-equal} follows from \eqref{eq:case1-assumption}. 
Now we can conclude that \ref{item:equiv2} is true for $(f_1,\dots,f_{\subRows})$ and for $(f_0, f_{\subRows+1},\dots, f_{\numRows})$.

By \ref{H:fromPre}, \ref{item:equiv1} is true for both smaller problems $(f_1,\dots,f_\subRows)\in\Skewnk^{\subRows}$ and $(f_0, f_{\subRows+1},\dots, f_{\numRows})\in\Skewnk^{\numRows-\subRows+1}$.

\steptwo Then we show \ref{item:equiv1} is also true for $(f_1,\dots, f_{\numRows})$. Suppose that for some $g_1,\dots, g_{\numRows}\in\FrobPolysn$ with $\deg g_i\cdot f_i\leq k-1, \forall i\in [\numRows]$, we have
\begin{align}
  \sum_{i=1}^{\numRows} g_i\cdot f_i=0\ . \label{eq:LHSequiv1}
\end{align}
Since $f_0|_r f_i$ for all $i\in \Omega'=[\subRows]$, $f_0$ is a right factor $\sum_{i=1}^{\subRows} g_i\cdot f_i$ and we can then write $\sum_{i=1}^{\subRows}g_i\cdot f_i=g_0\cdot f_0$, for some $g_0\in\FrobPolysn$. Then
\begin{align}
  0 =& \sum_{i=1}^{\numRows} g_i\cdot f_i\nonumber \\
  =& \sum_{i=1}^{\subRows} g_i\cdot f_i + \sum_{i=\subRows+1}^{\numRows} g_i\cdot f_i\nonumber \\
  =& g_0\cdot f_0 + \sum_{i=\subRows+1}^{\numRows} g_i\cdot f_i\label{eq:0equal}
\end{align}
From the conclusion that \ref{item:equiv1} is true for $(f_0,f_{\subRows+1}, \dots, f_{\numRows})$, \eqref{eq:0equal} holds only if $g_0=g_{\subRows+1}=\dots = g_{\numRows}=0$.
Similarly, since \ref{item:equiv1} is true for $(f_1,\dots, f_{\subRows})$, $0=g_0\cdot f_0=\sum_{i=1}^{\subRows}g_i\cdot f_i$ only if $g_1 = \dots = g_{\subRows}=0$.
Therefore, \eqref{eq:LHSequiv1} holds only if $g_1 = \dots = g_{\numRows}=0$ and \ref{item:equiv1} is proven for $(f_1,\dots, f_{\numRows})\in\cS_{n,k}^{\numRows}$ with the parameter tuple $(k,s,n)$.

\ref{case_s3n2_d}:
Assume w.l.o.g.~that $\tfzt_{\numRows-1}=\tfzt_{\numRows}=0$. Then for $i=\numRows-1, \numRows$, $\deg f_{i}=|\zeroSet_{i}|$.
If $\zeroSet_{\numRows-1}=\zeroSet_{\numRows}$, then for $\Omega = \{\numRows-1,\numRows\}$, \ref{item:equiv2} implies
\begin{align*}
  k-\deg f_\numRows=&k-\deg f_{\numRows-1}\\
  =& k-\deg \gcrd \{f_{\numRows-1}, f_{\numRows}\}\\
  \geq& k-\deg f_{\numRows-1}+k-\deg f_{\numRows}
\end{align*}
which contradicts with $\deg f_{i}\leq k-1$ for any $i\in[\numRows]$. Hence,  $\zeroSet_{\numRows-1}\neq[n]$ or $\zeroSet_{\numRows}\neq [n]$.
W.l.o.g., assume $\zeroSet_{\numRows}\neq  [n]$ and $n\not\in \zeroSet_{\numRows}$.

Note that $n=\indMap(\ell,n_{\ell})$. We will substitute the variable $\beta_{\ell,n_\ell}=0$. For all $i\in[\numRows]$, let $f_i'\coloneqq f_i|_{\beta_{\ell,n_\ell}=0}$. Since $n\not\in\zeroSet_s$, we have $f_s'=f_s\in\cS_{n-1,k}$. For other $i\in[\numRows-1]$, by \ref{p:reducevar}, $f_i'\in\cS_{n-1,k}$ and
\begin{align}\label{eq:fi2c}
  f_i' = \begin{cases}
    \fZt(Z_i,\tfzt_i) & n\not\in Z_i\\
    \fZt(Z_i\setminus\{n\}, \tfzt_i+1) & n\in Z_i
  \end{cases}\ .
\end{align}
In the first case of~\eqref{eq:fi2c} we denote $\zeroSet_i' = \zeroSet_i$ and $\tau'_i=\tau_i$, whereas in the second we denote $\zeroSet_i' = \zeroSet_i\setminus\{n\}$ and $\tau'_i=\tau_i+1$. Additionally, we define $f'_\Omega=\gcrd_{i\in\Omega}f'_i$.

\stepone We will first show that $(f_1',\dots, f_s')$ satisfies \ref{item:equiv2}. That is, we show that $\forall \varnothing\neq \Omega'\subseteq[s]$, $k-\deg f_{\Omega'}'\geq  \sum_{i\in\Omega'} (k-\deg f_i') $.

For $|\Omega'|=1$, it is trivial.

For $2\leq |\Omega'|\leq s-1$,
\begin{align}
  k-\deg f_{\Omega'}' =& k-|\bigcap_{i\in\Omega'}\zeroSet_i'|-\min_{i\in\Omega'} \tfzt_i'\nonumber \\
  \geq& k- |\bigcap_{i\in\Omega'}\zeroSet_i|  - \min_{i\in\Omega'} \tfzt_i - 1 \label{eq:case2cineq1}\\
  =& k-\deg f_{\Omega'} -1\nonumber \\
  \geq & \sum_{i\in\Omega'} (k-\deg f_i) \label{eq:case2cineq2}\\
  =& \sum_{i\in\Omega'} (k-\deg f_i')\label{eq:case2c-deg-equal}\ .
\end{align}
The inequality \eqref{eq:case2cineq1} is because $|\bigcap_{i\in\Omega}\zeroSet_i'|\leq |\bigcap_{i\in\Omega} \zeroSet_i|$ and $\min_{i\in\Omega} \tfzt_i'\leq \min_{i\in\Omega} \tfzt_i +1$. The inequality \eqref{eq:case2cineq2} is because we assume the inequality \eqref{eq:equiv2} in \ref{item:equiv2} holds strictly for all $2\leq |\Omega|\leq s-1$. The equality \eqref{eq:case2c-deg-equal} holds because $\deg f_i'=\deg f_i,\forall i\in[\numRows]$ by observing \eqref{eq:fi2c}.

For $|\Omega'|=s$, \eqref{eq:equiv2} is not necessarily strict. However, since
\begin{align*}
n\not\in \zeroSet_s &\implies n\not\in \bigcap_{i\in[s]}\zeroSet_i\\
&\implies |\bigcap_{i\in[\numRows]}\zeroSet_i'|=|\bigcap_{i\in[\numRows]} \zeroSet_i|\implies f'_{[\numRows]}=f_{[\numRows]},
\end{align*}
we have
\begin{align*}
  k-\deg f'_{[\numRows]} =& k-\deg f_{[\numRows]}\\
  \geq & \sum_{i\in[\numRows]} (k-\deg f_i)\\
  = & \sum_{i\in[\numRows]} (k-\deg f_i')\ .
\end{align*}

Hence, \ref{item:equiv2} holds for $(f_1',\dots, f_\numRows')\in\cS_{n-1,k}^{\numRows}$.
By \ref{H:fromPre}, \ref{item:equiv1} holds for $(f_1',\dots, f_\numRows')\in\cS_{n-1,k}^{\numRows}$ with the parameter tuple $(k\geq\numRows\geq 3,n-1)$ where $n\geq 2$. Note that here we used the induction hypothesis by reducing $n$ to $n-1$.

\steptwo Suppose that for some $g_1,\dots, g_s\in\FrobPolysn$, not all zero, with $\deg (g_i\cdot f_i)\leq k-1$, we have $\sum_{i=1}^s g_i\cdot f_i =0$. Let $g_i'=g_i|_{\beta_{\ell,n_\ell}=0}\in R_{n-1}[\SkewVar;\Frobaut]$. Further assume that at least one coefficient of some $g_i$ is not divisible by $\beta_{\ell,n_\ell}$ (otherwise, divide them by $\beta_{\ell,n_\ell}$). Then $g_i'$ are not all zero. We can write
\begin{align*}
  \sum_{i=1}^s g_i'\cdot f_i' = \left(\sum_{i=1}^s g_i\cdot f_i \right)|_{\beta_{\ell,n_\ell}=0} = 0|_{\beta_{\ell,n_\ell}=0} = 0\ .
\end{align*}
However, this contradicts \ref{item:equiv1} being true for $(f_1',\dots, f_s')$ with the parameter tuple $(k,s,n-1)$. Therefore, $g_1,\dots, g_s\in\FrobPolysn$ must be all zero to have $\sum_{i=1}^s g_i\cdot f_i =0$.

\ref{case_s2n2_c}: In this case we have $\Omega=\{1,2\}$ and $\tfzt_1=\tfzt_2=0$.
Similar to \ref{case_s3n2_d}, $\zeroSet_1\neq [n]$ or $\zeroSet_2\neq[n]$. W.l.o.g., assume $\zeroSet_2\neq[n]$ and $n\not \in \zeroSet_2$. Note that $n=\indMap(\ell,n_{\ell})$. We substitute the variable $\beta_{\ell,n_{\ell}}=0$. For $i=1,2$, let $f_i'\coloneqq f_i|_{\beta_{\ell,n_\ell}=0}$ and $f'_{\Omega}\coloneq \gcrd\{f_1', f_2'\}$.
Since $n\not\in\zeroSet_2$, $f_2'=f_2$.
By \ref{p:reducevar}, $f_1'\in\cS_{n-1,k}$ and
\begin{align*}
      f_1' = \begin{cases}
    \fZt(Z_1,0) & n\not\in Z_1\\
    \fZt(Z_1\setminus\{n\}, 1) & n\in Z_1
    \end{cases}\ .
  \end{align*}

\stepone We first show that $(f_1',f_2')\in\cS_{n-1,k}^2$ satisfies \ref{item:equiv2}. That is, we show that $\forall \varnothing\neq \Omega'\subseteq\Omega$, $k-\deg f_{\Omega'}'\geq  \sum_{i\in\Omega'} (k-\deg f_i') $.

For $|\Omega'|=1$, it is trivial.

For $\Omega'=\{1,2\}$, since
\begin{align*}
    n\not\in\zeroSet_2\implies n\not\in \zeroSet_1\cap\zeroSet_2 &\implies |\zeroSet_1'\cap\zeroSet_2'|=|\zeroSet_1\cap\zeroSet_2|\\
    &\implies \deg f'_{\Omega'}= \deg f_{\Omega'},
\end{align*}
we have
\begin{align*}
  k-\deg f'_{\Omega'} =& k-\deg f_{\Omega'}\\
  \geq & \sum_{i\in\Omega'} (k-\deg f_i)\\
  = & \sum_{i\in\Omega'} (k-\deg f_i')\ .
\end{align*}
Hence, \ref{item:equiv2} holds for $(f_1',f_2')\in\cS_{n-1,k}^2$.
By \ref{H:fromPre}, \ref{item:equiv1} holds for $(f_1',f_2')\in\cS_{n-1,k}^2$ with parameter tuple $(k\geq \numRows=2, n-1)$ where $n\geq 2$. Here we used the induction hypothesis by reducing $n$ to $n-1$.

\steptwo This step can be shown in the same manner as in \ref{case_s3n2_d}.

\ref{case_s2n1_c}: W.l.o.g., assume that $\tfzt_1=\tfzt_2=0$. Since $n=1$, $|\zeroSet_i|\leq n=1, \forall i\in [\numRows]$. By the definition of $f_i\in \cS_{1,k}$ in \eqref{eq:Skewnk}, $\deg f_i \leq n=1$. Assume \ref{item:equiv2} is true for this case, then for $\Omega=\{1,2\}$, we have
\begin{align}
  k-\deg f_{\Omega}\geq k-\deg f_1 + k-\deg f_2. \label{eq:s2n1_equiv2}
\end{align}
If $\zeroSet_1=\zeroSet_2=\varnothing$ or $\{1\}$, then $\deg f_{\Omega}=\deg f_1=\deg f_2\leq n=1$ and \eqref{eq:s2n1_equiv2} implies $\deg f_1\geq k$, which contradicts $k\geq s\geq 2$.
Otherwise, w.l.o.g., assume $\zeroSet_1=\varnothing$ and $\zeroSet_2 = \{1\}$, then $\deg f_{\Omega}=0$ and \eqref{eq:s2n1_equiv2} implies $1=\deg f_2\geq k$, which contradicts $k\geq s\geq 2$.
Therefore, if \ref{item:equiv2} is true for $(k\geq s\geq 2,n=1)$, this case cannot happen.

\bibliographystyle{ieeetr}
\bibliography{refs}
\end{document}